\documentclass[11pt]{article}

\usepackage[utf8]{inputenc}
\usepackage[T1]{fontenc}
\usepackage{amsmath,amssymb,amsthm,mathtools}
\usepackage[margin=1in]{geometry}
\usepackage[numbers,sort&compress]{natbib}
\usepackage{microtype}
\usepackage{enumitem}
\usepackage{booktabs}
\usepackage{graphicx}
\usepackage{hyperref}
\hypersetup{colorlinks=true,linkcolor=black,citecolor=black,urlcolor=black}

\newcommand{\dd}{\mathrm{d}}
\newcommand{\ii}{\mathrm{i}}
\newcommand{\Ptot}{P_{\mathrm{tot}}}
\newcommand{\Pcap}{P_{\mathrm{cap}}}
\renewcommand{\Phi}{\varphi} % \Phi existe déjà (grand Phi) -> renew vers varphi
\newcommand{\Kc}{K_{\mathrm c}}

\theoremstyle{definition}
\newtheorem{assumption}{Assumption}
\theoremstyle{plain}
\newtheorem{proposition}{Proposition}
\newtheorem{corollary}{Corollary}

\title{Do Co-Located AI Training Jobs Synchronize?\\
Load-Dependent Throttling as a Coupling Mechanism for\\
Phase-Locking Behind a Shared Power Cap}

\author{Brieuc Le Roux Tardif\\ \small IMT Nord Europe}
\date{\today}

\begin{document}
\maketitle

%==============================================================================
\begin{abstract}
Large-scale artificial-intelligence training has turned individual computing
facilities into multi-megawatt loads whose power draw is not constant but
periodic: tens of thousands of accelerators step in lockstep between
compute-bound phases, in which they consume near peak power, and
communication-bound phases, in which they idle while gradients are exchanged.
A growing body of work treats each such facility as an exogenous periodic
forcing on the electrical grid and mitigates on the grid side with storage, or
on the facility side by smoothing each job's power profile individually. This
paper poses instead the question that belongs to the facility
operator: when many independently operated training jobs share one
oversubscribed power envelope, do their power cycles remain statistically
independent -- so that their swings partially cancel and the aggregate
fluctuation grows only as the square root of the number of jobs -- or can the
operator's own power-management stack phase-lock them, so that the fluctuation
grows linearly and the site swings coherently against its own caps, its
power-delivery hardware, and its interconnection agreement? Posed this way, the
problem is one of emergent synchronization in a population of nonlinear
oscillators, the setting of the Kuramoto paradigm. The analogy, however, hides a
missing ingredient: classical synchronization requires a physical channel through
which each oscillator's phase velocity depends on the others' phases, and grid
frequency does not provide it, because accelerator clocks are decoupled
from the line by rectification and regulation. We identify a candidate coupling
channel in load-dependent throttling: facility- and rack-level power
oversubscription with enforced caps, voltage droop, and shared cooling all slow
computation precisely when aggregate demand is high, making each job's iteration
time a function of every other job's instantaneous power -- a channel that is
owned, configured, and observable by the operator. Formalizing the fleet as a
generalized Kuramoto system whose coupling function is fixed by the
(non-sinusoidal, relaxation-type) training waveform convolved with the
throttling response, we obtain three operator-facing statements.
Dimension: the coupling is, to leading order, repulsive -- peaks
repel, and fast electrical power capping actively spreads the fleet -- and it
turns attractive only when the control loop's total phase lag at the iteration
frequency exceeds half a cycle. The lag budget discriminates between channels:
analog lag stages (sensing filters, a single thermal stage) each contribute a
phase bounded by $\pi/2$ and cannot invert the sign on their own, whereas dead
time (polling and enforcement intervals, sample-and-hold, actuation latency,
cascaded slow stages) accumulates phase without bound and can; hard saturation
of the cap is a separate, non-perturbative hazard. An order-parameter criterion
then gives the critical coupling in terms of the spread
of natural iteration rates and the loop lag. The protection is
mode-selective, however: the sign law is per-harmonic, and a sufficiently uniform
fleet locks into a higher-harmonic cluster state -- an aggregate that again grows
linearly in the number of jobs, at a multiple of the iteration frequency -- so
rate diversity is a requirement that no choice of control delay can replace.
Detect: the reduction's
own frequency-correlated frustration renders the onset first-order and
hysteretic for a sufficiently diverse fleet -- an intrinsic mechanism requiring
no inertia, exhibited in direct finite-$N$ simulation and consistent with an
exact mean-field (Ott--Antonsen) onset computation -- so a nominally safe fleet can be
tipped into, and remain stuck in, the locked state; the order parameter and its
first few harmonics, computable from per-job power telemetry, are the quantities
to watch after transients. Mitigate: phase-scattering scheduling -- the deliberate
detuning and offsetting of training jobs -- raises every mode threshold at once,
at a throughput cost we estimate at first order. The study is analytical and
simulation-based, with no facility data; its sign prediction is directly
falsifiable by a two-job co-capped measurement, which we specify. The
consequences of a locked fleet for the grid, a possible forcing of inter-area
modes, are noted as an extension and deferred to companion work.
\end{abstract}

%==============================================================================
\section{Introduction}\label{sec:intro}

\paragraph{The phenomenon.}
The power demand of a large training job is structured in time, not constant. Each
optimization step alternates a compute-bound interval, in which accelerators run
near their thermal design power, with a communication-bound interval, in which they
stall on collective operations and fall to a much lower draw
\citep{megascale2024,lee2025overlap}. Because the step repeats near-identically
for the duration of a run, the facility presents the grid with a quasi-periodic
load whose fundamental period is the iteration time $T$ (seconds to tens of
seconds) and whose peak-to-peak amplitude, at hyperscale, reaches tens to hundreds
of megawatts with ramp rates exceeding $10~\mathrm{MW\,s^{-1}}$
\citep{ko2025widearea,oracle2025smoothing}. Two trends make this a present concern
rather than a curiosity: single-campus power is growing toward the gigawatt scale \citep{masanet2020},
and a campus is increasingly filled with many near-identical jobs --
replicated foundation-model runs and large fleets of fine-tunes -- so that one
iteration waveform is reproduced across the site.

\paragraph{From the grid's problem to the operator's question.}
The existing literature studies a facility as a single exogenous oscillation that
forces the grid and can excite inter-area and sub-synchronous modes
\citep{ko2025widearea,forced2025,microsoft2025powerstab}; the proposed mitigations
act on the grid side through energy storage \citep{hybridess2025} or on the
facility side by reshaping each job's power profile individually -- workload
insertion and millisecond-scale GPU power smoothing
\citep{microsoft2025powerstab,oracle2025smoothing}. We
take the other seat. Before the aggregate swing ever reaches a grid mode it must
traverse the operator's own infrastructure: the oversubscribed power envelope and
the caps that enforce it, power-delivery hardware sized for a smoother profile,
and the interconnection agreement that bounds ramp and swing at the point of
common coupling. Seen from that seat, the grid literature's implicit assumption --
that when $N$ jobs coexist their forcing terms superpose with statistically
independent phases -- is not a modelling convenience but a capacity-planning
premise: under it the aggregate fluctuation grows only as $\sqrt N$ -- partial
cancellation -- stays benign relative to the mean load even at large $N$, and
oversubscription is safe. The question of this paper is whether the premise
holds, and it is the operator's question:
can the jobs co-located behind one power envelope spontaneously phase-lock
through the site's own power-management stack -- and if so, under what conditions,
with what warning, and at what cost to prevent?
Oscillators that share a medium can synchronize; if the jobs phase-lock, the
coherent part of the aggregate grows as $N$, not $\sqrt N$, and the site swings
against its own envelope with a relative amplitude larger by a factor
$\sim\!\sqrt N$ (Section~\ref{sub:scaling}). Whether the realistic regime is
the benign one or the dangerous one is therefore not a detail but the whole
question, and it is a question about emergent synchronization in a population of
nonlinear oscillators -- the Kuramoto setting
\citep{winfree1967,kuramoto1975,strogatz2000,acebron2005,pikovsky2001}.

\paragraph{The coupling is not the obvious one.}
Synchronization requires a physical channel through which each oscillator's phase
velocity depends on the others' phases. The obvious candidate -- grid frequency, by
analogy with networks of synchronous machines
\citep{filatrella2008,dorfler2013,motter2013} -- does not serve: accelerator step
rates are set by on-board clocks downstream of rectification and regulation and are
insensitive, to first order, to line-frequency variation (a voltage sag does slow
computation, but through the power-delivery and throttling channel of
Assumption~\ref{ass:clock}, not as a frequency entrainment), so a model
coupling jobs through grid frequency is physically vacuous. We identify the
operative channel as load-dependent throttling: shared, congestible
resources -- power oversubscription with enforced caps, voltage droop, and shared
cooling -- slow computation precisely when aggregate demand is high, making each
job's step rate a decreasing function of every other job's instantaneous power
(Section~\ref{sub:coupling}; Figure~\ref{fig:mech}). This closes the loads into a self-consistent
dynamical system and supplies the missing coupling. Identifying it correctly is
what turns a loose analogy into a falsifiable mechanism, and it reverses the naive
intuition, because throttling-on-aggregate-load is, by default, repulsive
rather than attractive. It also relocates the problem: every channel involved --
cap policy, control-loop bandwidth, cooling -- is owned and configurable by the
operator, so the same analysis that identifies the hazard identifies the levers.

\begin{figure}[t]\centering
\includegraphics[width=.8\linewidth]{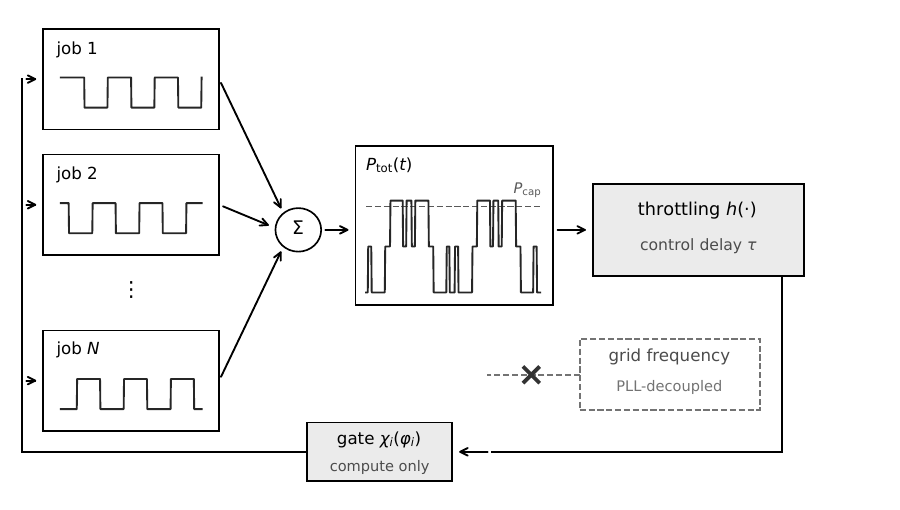}
\caption{The coupling mechanism, schematically (formalized in
\eqref{eq:waveform}--\eqref{eq:throttle}). Each job draws the two-level power
waveform of \S\ref{sub:waveform}; the aggregate $\Ptot$ is compared with the
shared cap $\Pcap$, and the excess drives the throttling response $h$ --
lumping enforced power capping, voltage droop, and shared cooling -- after a
control delay $\tau$. The throttle slows a job only through the phase gate
$\chi_i(\Phi_i)$, active while the job is compute-bound: this gating is what
turns a common drive into a genuine phase coupling (\S\ref{sub:reduction}).
Grid frequency (crossed channel) cannot reach the accelerator clocks
and couples nothing (Assumption~\ref{ass:clock}).}
\label{fig:mech}
\end{figure}

\paragraph{Contributions.}
The paper answers the operator's question in the order an operator would ask it --
what is at stake, through what mechanism, at what threshold, with what warning,
and with what remedy -- and we are explicit about the epistemic status
of each contribution: exact consequence of the definitions, leading-order
analytical result, or claim to be established numerically.
\begin{enumerate}[leftmargin=1.6em,itemsep=3pt]
  \item \textbf{The stake, made quantitative.} We argue the consequential
  question is collective, not individual, and we make the stake precise:
  synchronization changes the aggregate scaling from $\sqrt N$ to $N$, an
  amplification of the relative swing by the coherence amplification factor
  $r\sqrt N$ (Section~\ref{sub:scaling}), where $r$ is the order parameter. This is
  an exact consequence of the definitions.
  \item \textbf{Identification of the coupling mechanism -- an operator-owned one.}
  We show that grid frequency cannot couple the jobs and that load-dependent
  throttling can, and we write the resulting closed load dynamics
  \eqref{eq:throttle}. This is a modelling result; its premises
  (Assumptions~\ref{ass:domain}--\ref{ass:throttle}) are stated, justified, and
  empirically checkable -- and every element of the mechanism (cap policy,
  control delay, cooling) is a configuration choice of the site, not of the grid.
  \item \textbf{Reduction to a generalized Kuramoto model.} Linearizing the
  throttling response yields a Sakaguchi--Daido coupling whose function is fixed by
  the training waveform convolved with the throttling response
  \eqref{eq:reduced}--\eqref{eq:Gamma-sign}. At leading order this gives three
  statements (Corollaries~\ref{cl:repulsion}--\ref{cl:attraction} and
  Proposition~\ref{cl:clusters}): default repulsion of
  the fundamental at short control delay; a sign reversal to attraction beyond a
  critical delay $\tau^\star=T/2$; and a per-harmonic sign law
  $a^{(m)}\propto-\sin(m\bar\omega\tau)$ by which a near-identical fleet locks into
  higher-harmonic cluster states even at fundamental-repulsive delays, with the
  mode thresholds \eqref{eq:kcm}. These are analytical results valid in the
  weak-coupling, slow-phase regime (Assumption~\ref{ass:slowphase}), each tested
  against the unaveraged model (Section~\ref{sec:numerics}).
  \item \textbf{A dimensioning criterion, and the warning that qualifies it.} An
  order-parameter self-consistency argument gives the onset of fleet-wide coherence
  and the critical coupling
  $\Kc=2/[\pi\,a_1(\tau)\,g(0)]$ (Proposition~\ref{prop:kc}), from which
  heterogeneity of iteration rates raises $\Kc$ while near-identical fleets collapse
  it (Corollary~\ref{cor:hetero}) -- a criterion an operator could evaluate from the cap
  stiffness, the loop lag, and the job mix, given measured waveforms
  (\S\ref{sub:waveform}); an exact Ott--Antonsen computation for a Lorentzian
  fleet gives the averaged model's onset in closed form, of which
  \eqref{eq:kc-general} is the leading-order limit -- and a fine sweep of the
  full model lands on the leading-order value
  (Appendix~\ref{app:oa}, \S\ref{sub:num-robust}). The reduction also carries a
  frequency-correlated frustration $\alpha_j=\omega_j\tau+\pi/2$, which
  drives a first-order (subcritical, hysteretic) onset for diverse fleets
  through a twisted order parameter -- an intrinsic mechanism needing no inertia
  (Section~\ref{sub:order}), exhibited by direct simulation
  (Section~\ref{sec:numerics}) and structurally consistent with the Ott--Antonsen
  normal form (Appendix~\ref{app:oa}). Hysteresis converts the criterion from a static
  sizing rule into a monitoring obligation: a fleet below threshold can be tipped
  into the locked state and stay there, so the order parameter -- computable from
  per-job power telemetry -- must be watched after transients
  (Section~\ref{sec:discussion}). The threshold is analytical; the order of the
  transition reduces, on the Ott--Antonsen manifold, to the sign of a Landau
  coefficient whose explicit evaluation we leave open (Appendix~\ref{app:oa}) --
  our evidence for the first-order onset is the finite-$N$ continuation, which
  brackets the supercritical/subcritical crossover between two tested spreads.
  \item \textbf{A software-only mitigation.} We cast phase-scattering
  scheduling as the deliberate engineering of repulsive coupling -- detuning
  iteration rates and imposing fixed phase offsets -- with a first-order throughput
  cost \eqref{eq:penalty}. It needs no storage and lives at the scheduler the
  operator already runs. Existing software mitigations shape each job's power
  profile individually \citep{microsoft2025powerstab,oracle2025smoothing};
  phase-scattering acts instead on the relative phases of the fleet -- a
  complementary, collective lever.
\end{enumerate}

\paragraph{Scope and falsifiability.}
The analysis is deliberately a leading-order, mean-field theory: a single shared
power domain (all-to-all coupling), a two-level waveform, and retention of the
fundamental harmonic. These choices are recorded as
Assumptions~\ref{ass:waveform}--\ref{ass:slowphase}, with the cost of each noted
where it is incurred. The theory is falsifiable in two concrete ways. It predicts a
sign for the coupling -- repulsive at small control delay -- that a
controlled measurement of two co-capped jobs would confirm or refute. And it
predicts that the dangerous regime is reached by increasing control delay or by
homogenizing iteration rates, not by adding jobs at fixed heterogeneity
behind a fixed total cap -- the qualification matters, because the bare
coupling scale is extensive: it is the saturation of a fixed cap that keeps $K$
intensive (Assumption~\ref{ass:slowphase}), and at fixed per-job cap stiffness
adding jobs raises $K$ linearly and can cross $\Kc$ on its own.
Should fleet-wide synchronization prove marginal under realistic parameters, the
same apparatus yields the complementary and equally useful statement -- the
conditions under which, within the model, co-located jobs do not
synchronize -- a design criterion for operators rather than a null result.

\paragraph{Outline.}
Section~\ref{sec:prelim} recalls the synchronization tools. Section~\ref{sec:model}
states the standing assumptions and notation (\S\ref{sub:assumptions}) and builds
the model: the per-job waveform (\S\ref{sub:waveform}), the coupling mechanism
(\S\ref{sub:coupling}), and the reduction to a generalized Kuramoto system
(\S\ref{sub:reduction}). Section~\ref{sec:analysis} develops the order-parameter
analysis, the critical coupling, the role of inertia and delay, and
phase-scattering scheduling. Sections~\ref{sec:related}--\ref{sec:conclusion} place
the work in context, give the numerical protocol, and discuss limitations.

%==============================================================================
\section{Preliminaries}\label{sec:prelim}
% >>> RÉDIGÉ. Standard et citable ; indépendant de ta direction.

\subsection{The Kuramoto paradigm and the order parameter}\label{sub:kuramoto}
Consider $N$ phase oscillators $\theta_i\in[0,2\pi)$, each with an intrinsic
frequency $\omega_i$ drawn from a density $g(\omega)$, coupled all-to-all:
\begin{equation}\label{eq:kuramoto}
  \dot\theta_i \;=\; \omega_i \;+\; \frac{K}{N}\sum_{j=1}^{N}\sin(\theta_j-\theta_i),
  \qquad i=1,\dots,N .
\end{equation}
The collective state is summarized by the complex order parameter
\begin{equation}\label{eq:order}
  r(t)\,e^{\ii\psi(t)} \;=\; \frac{1}{N}\sum_{j=1}^{N} e^{\ii\theta_j(t)},
  \qquad r\in[0,1],
\end{equation}
so that \eqref{eq:kuramoto} closes into the mean-field form
$\dot\theta_i=\omega_i+Kr\sin(\psi-\theta_i)$: each oscillator feels the mean
field only through its amplitude $r$ and phase $\psi$. The modulus $r$ measures
coherence: $r\approx 0$ in the incoherent state (phases uniformly spread) and
$r\to 1$ as the population locks. We will also need the higher Daido moments
$r_m e^{\ii\psi_m}=N^{-1}\sum_j e^{\ii m\theta_j}$ ($r\equiv r_1$), which measure
coherence mode by mode and detect the cluster states of
\S\ref{sub:generalized}. For a unimodal, symmetric $g$, Kuramoto's
self-consistency argument gives a continuous transition at
\begin{equation}\label{eq:kc}
  \Kc \;=\; \frac{2}{\pi\,g(0)} ,
\end{equation}
below which $r=0$ in the thermodynamic limit and above which a macroscopic
synchronized cluster emerges \citep{kuramoto1975,kuramoto1984,strogatz2000,acebron2005}.
Two facts will matter repeatedly. First, the relevant scale of $\Kc$ is set by
the width of $g$: a more heterogeneous population (broader spread of
natural rates) is harder to synchronize. Second, below $\Kc$ the order parameter
does not vanish at finite $N$ but fluctuates with $r\sim N^{-1/2}$; this residual
coherence is the seed of the $\sqrt{N}$ scaling we revisit in
\S\ref{sub:scaling}.

\subsection{Second-order oscillators and grid synchronization}\label{sub:inertia}
Synchronous machines on a transmission network obey the swing equation, which is
a phase oscillator with inertia and damping,
\begin{equation}\label{eq:swing}
  M_i\,\ddot\theta_i + D_i\,\dot\theta_i
  \;=\; P_i^{\mathrm{mech}} \;-\; \sum_{j} a_{ij}\sin(\theta_i-\theta_j),
\end{equation}
the structure that underlies the Kuramoto-like models of power grids
\citep{bergenhill1981,filatrella2008,dorfler2013,motter2013,dorflerbullo2014,nishikawa2015,rohden2012}. Two consequences of the inertial
term $M_i\ddot\theta_i$ are important here. (i) The synchronization transition can
become discontinuous (first-order, ``explosive''), with a hysteresis loop
between the incoherent and locked branches, when inertia or a frequency--degree
correlation is present \citep{tanaka1997,rodrigues2016}. (ii) Inertia introduces a natural
timescale that can resonate with the forcing period of a training waveform; the
relevant comparison is between the iteration rate $\omega_i$ and the grid's
electromechanical modes. We will reuse \eqref{eq:swing} not for the data-center
loads themselves -- which carry negligible mechanical inertia -- but for the grid
nodes to which they attach, and which mediate the coupling of
\S\ref{sub:coupling}.

\subsection{Beyond sinusoidal coupling: generalized and pulse-like coupling}
\label{sub:generalized}
The sinusoid in \eqref{eq:kuramoto} is a modelling convenience, exact only in the
weak-coupling limit of smooth limit-cycle oscillators near a Hopf bifurcation.
Training loads are not of this type: their power trace is closer to a
relaxation oscillation -- a near-square alternation between two power
levels -- so the interaction inherits the full harmonic content of the waveform.
The appropriate generalization replaces the sine by an arbitrary
$2\pi$-periodic odd-plus-even coupling function $\Gamma$,
\begin{equation}\label{eq:generalized}
  \dot\theta_i \;=\; \omega_i \;+\; \frac{K}{N}\sum_{j=1}^N \Gamma(\theta_j-\theta_i),
  \qquad
  \Gamma(x)=\sum_{m\ge 1}\bigl[a_m\sin(mx)+b_m\cos(mx)\bigr],
\end{equation}
the Daido / Sakaguchi--Kuramoto family \citep{sakaguchi1986,daido1996,nakao2016}. Three
features of \eqref{eq:generalized} drive the rest of the paper.
\begin{itemize}[leftmargin=1.4em,itemsep=2pt]
  \item \textbf{Sign of the first harmonic.} If $a_1>0$ the coupling is
  attractive and favours in-phase locking; if $a_1<0$ it is
  repulsive and favours anti-phase, splay, or cluster states in which the
  population spreads itself out and the order parameter stays low. The sign of
  $a_1$ is therefore the most consequential quantity for the fundamental
  mode, deciding between coherent ($r\to1$) and incoherent ($r\to0$) aggregates
  at the iteration frequency -- with the caveat, developed in
  Proposition~\ref{cl:clusters}, that each higher harmonic carries its own sign,
  so a fundamental-repulsive fleet can still lock at a higher mode.
  \item \textbf{Phase lag.} A constant lag, $\Gamma(x)=\sin(x-\alpha)$, mixes the
  $\sin$ and $\cos$ harmonics; as $\alpha$ passes $\pi/2$ the effective $a_1$
  changes sign. Lags arise physically from delay in whatever channel
  mediates the coupling -- which, in \S\ref{sub:coupling}, is a control loop
  \citep{yeung1999}.
  \item \textbf{Higher harmonics.} A hard, saturating interaction (here, a power
  cap that clips rather than bends) injects $m\ge2$ harmonics, which stabilize
  multi-cluster states -- several synchronized sub-groups rather than one
  \citep{hansel1993}. These
  matter for the shape, not merely the amplitude, of the aggregate swing.
\end{itemize}
For completeness we note the pulse-coupled limit
\citep{mirollo1990}, in which interaction is concentrated at phase boundaries; it
is the singular case of \eqref{eq:generalized} as the communication phase narrows,
and we return to it in \S\ref{sub:reduction}.

%==============================================================================
\section{From training clusters to coupled oscillators}\label{sec:model}
% >>> RÉDIGÉ. C'est le coeur de modélisation et la partie la plus indépendante
%     de ta direction : une dérivation.

\subsection{Standing assumptions and notation}\label{sub:assumptions}
We model a site running $N$ training jobs indexed $i=1,\dots,N$. Job $i$ carries an
internal phase $\Phi_i\in[0,2\pi)$ advancing through one optimization step, a
natural angular rate $\omega_i=2\pi/T_i$ with $T_i$ its iteration time, and an
instantaneous power $P_i(\Phi_i)$; the coherence of the population is measured by
the order parameter $r$ of \eqref{eq:order}. The following assumptions hold
throughout. Each is stated with its physical justification and with the cost of
relaxing it, so that the regime of validity of every later result is explicit.

\begin{assumption}[Two-level waveform with bounded overlap]\label{ass:waveform}
Over the analysed horizon each job runs a fixed model and parallelization plan, so
$P_i$ is periodic in $\Phi_i$; we approximate it by the two levels
\eqref{eq:waveform}. Justification: steady-state training repeats one step for
$10^{4}$--$10^{6}$ iterations, with measured periods $T\sim2$--$6$\,s and a clear
compute/communication power contrast \citep{megascale2024,ko2025widearea}.
Cost and caveat: modern frameworks deliberately overlap collectives
with computation \citep{lee2025overlap}, which raises $P_i^{\mathrm{lo}}$, lowers
the contrast $A=P_i^{\mathrm{hi}}-P_i^{\mathrm{lo}}$, and so shrinks the fundamental
$c_{i,1}\propto A$ and the coupling $K\propto c_1$ (\S\ref{sub:reduction}). The
upper-bound status of the square waveform must be stated per harmonic: for
any waveform confined to $[P^{\mathrm{lo}},P^{\mathrm{hi}}]$ the fundamental
obeys $c_1\le 2A/\pi$, the value attained by the $\delta=\tfrac12$ square (the
extremal waveform is bang-bang), so the square is a genuine upper bound at $m=1$;
at $m\ge2$ the square envelope $2A/(\pi m)$ bounds only waveforms with a single
compute pulse per iteration, and multi-pulse structure (micro-batching, several
collectives per step) can exceed it -- the harmonic-lock thresholds
\eqref{eq:kcm} are correspondingly waveform-specific. Finite ramps are handled in
Appendix~\ref{app:fourier}. Susceptibility caveat: the gate
\eqref{eq:susceptibility} idealizes the communication phase as cap-insensitive;
in reality collectives run kernels on the accelerators and some communication
paths are clock-sensitive \citep{lee2025overlap}, so the physical susceptibility
has $\chi^{\mathrm{lo}}>0$ and its own waveform. A reduced contrast rescales the
coupling like overlap does; a misalignment between the peaks of $P$ and of
$\chi$ enters the frustration directly and is discussed at
\eqref{eq:Gamma-sign}.
\end{assumption}

\begin{assumption}[Coupling domain vs.\ aggregation domain]\label{ass:domain}
The $N$ jobs share one congestible power/cooling domain with a common cap $\Pcap$,
giving all-to-all (mean-field) throttling coupling. Justification and its
limit: co-location behind a shared cap is realistic at rack/PDU scale, but a campus
is a hierarchy of caps, so the throttling coupling is in truth block-local --
strong within a domain, weak across domains. Two domains the literature often
conflates must be separated: the coupling domain, where a shared cap creates
the interaction (possibly local), and the aggregation domain, the point of
common coupling where all loads sum and the grid sees them (necessarily global).
For the standard attractive Kuramoto family, mean-field is, at given mean coupling
strength, the topology most favourable to synchronization onset
\citep{dorfler2013,rodrigues2016,arenas2008}; for the present frustrated, delayed,
frequency-correlated coupling we adopt it as the tractable worst-case
candidate, not a proven bound -- block-local structure generically raises
the onset by diluting the coupling, but modular topologies can also support local
cluster resonances that a mean-field analysis does not see. The topology question
is flagged open in Section~\ref{sec:discussion}. Linear modal analyses of inter-site load
correlation take that correlation as exogenous \citep{modal2026}; we instead ask
whether the nonlinear dynamics generate it.
\end{assumption}

\begin{assumption}[Clock decoupling from grid frequency]\label{ass:clock}
The step rate of each job is insensitive, to first order, to grid frequency.
Justification: accelerator clocks sit behind rectification, DC regulation, and
on-board phase-locked loops, so the swing-equation entrainment \eqref{eq:swing} acts
on the grid nodes, not on the loads. Important distinction: this does
not assert insensitivity to voltage. A voltage sag does slow
computation, but through the power-delivery channel modelled by $h$ in
\eqref{eq:throttle} \citep{powerelec2025} -- i.e.\ as part of the load-dependent
coupling, not as a direct frequency entrainment. Frequency cannot couple the jobs;
the power channel can.
\end{assumption}

\begin{assumption}[Throttling response and the binding regime]\label{ass:throttle}
$h\ge0$ is nondecreasing with $h(x)=0$ for $x\le0$ and is $C^{2}$ on a neighbourhood
of an operating overshoot $s_0=\bar P_{\mathrm{tot}}-\Pcap>0$ at which $h'(s_0)>0$
-- i.e.\ the cap binds an $O(1)$ fraction of the time (tight
oversubscription). Critical caveat: if the cap binds only at peaks ($s_0\le0$
on average) then $h'(s_0)=0$ and the smooth coupling vanishes to first order; the true
interaction is then rectified -- active only while $\Ptot>\Pcap$ -- an
event-driven coupling whose period-averaged sign matches \eqref{eq:Gamma-sign} but
whose prefactor is weighted by the exceedance probability $\Pr[\Ptot>\Pcap]$ and is
hence much smaller. The theory addresses the binding regime; the rarely-binding case
is weaker (safer) and flagged for the numerical study. Gentle-throttle regime:
the reduction also treats throttling as a small perturbation, $\eta\,h\ll1$, so that
the waveforms $P_j,\chi_i$ entering the coupling are the unthrottled ones to
leading order; the analysable window is thus $s_0>0$ and $\eta h\ll1$ (soft but
frequently-binding caps), strong throttling ($\eta h\sim1$) deforming the waveforms
and being left to the numerics. Amplitude-channel omission:
throttling lowers a job's power as well as its speed, but we model only the
speed (phase) channel. The simultaneous power reduction is a negative feedback
that we expect to damp the aggregate when it is high -- heuristically
conservative -- but the expectation is not a proof: an amplitude channel carries
its own lag and could also shift the effective frustration, so its inclusion is
an open extension, not a bounded correction. Controller uniformity:
\eqref{eq:throttle} applies one scalar signal to all jobs, modulated only by
$\eta_i\chi_i(\Phi_i)$. Real power-management stacks can be hierarchical and
policy-dependent -- per-device or per-node limits, dynamic redistribution of
headroom from telemetry \citep{nvidia2026dps,dynamo2016} -- making the true coupling
asymmetric, local, and time-varying; the scalar mean-field signal is the
tractable baseline, and the hierarchy question is flagged open
(Section~\ref{sec:discussion}). Channels: the cap
acts through a fast electrical loop and slower sensing/averaging/thermal stages
with distinct lags (Assumption~\ref{ass:slowphase}; Appendix~\ref{app:grid}
classifies which stages can and cannot accumulate the sign-inverting phase).
\end{assumption}

\begin{assumption}[Weak coupling, slow phases, and the scaling of $K$]\label{ass:slowphase}
The reduction \eqref{eq:reduced} assumes (i) phases drift slowly over one control
delay, $\Phi_j(t-\tau)\approx\Phi_j(t)-\omega_j\tau$, valid for $\omega_j\tau$ not too
large, and (ii) weak coupling: the rotating-wave and retarded-argument errors are
both $O(\varepsilon)$ in the small parameter $\varepsilon:=K/\bar\omega$
(Appendix~\ref{app:reduction}), so the reduction requires $\varepsilon\ll1$;
numerically it tracks the full model up to $\varepsilon\approx0.6$ and fails by
$\varepsilon\approx1.2$ (\S\ref{sub:num-regime}). Caveat on $N$: the bare constant
$K=\tfrac12N\bar\omega\eta h'\chi_1 c_1$ is extensive unless the shared cap
saturates so that the marginal per-job throttle $h'$ scales as $1/N$. A fixed total
cap plausibly saturates in this way, but the $1/N$ scaling is a property of the
controller law, not a theorem: with per-job cap stiffness held fixed, $K$ grows
linearly in $N$ and job count alone can cross $\Kc$. Statements of the form
``adding jobs does not synchronize'' are therefore conditional on cap saturation,
and the $N$-sweeps of Section~\ref{sec:numerics} hold $K$ fixed (the intensive
normalization) by construction. The most dangerous, strongly-oversubscribed
large-$N$ limit may in any case violate weak coupling, where the averaged
theory is only qualitative and the numerical study (Section~\ref{sec:numerics}) takes
over.
\end{assumption}

\noindent Throughout, $K$ is the common coupling scale of \eqref{eq:reduced},
$a_1(\tau)$ the first-harmonic coefficient of the reduced coupling
\eqref{eq:Gamma-sign}, $g$ the density of natural rates $\omega_i$ with width
$\Delta\omega$, $\tau$ the control delay, and $\Kc$ the critical coupling
\eqref{eq:kc-general}.

\subsection{The power waveform of a single training job}\label{sub:waveform}
Fix a job $i$ running a fixed model and parallelization plan. Over one
optimization step its power draw is, to a good approximation, a two-level
periodic function of an internal phase $\Phi_i\in[0,2\pi)$ that advances
through the step:
\begin{equation}\label{eq:waveform}
  P_i(\Phi_i) \;=\;
  \begin{cases}
    P_i^{\mathrm{hi}}, & 0\le \Phi_i < 2\pi\,\delta_i \quad\text{(compute phase)},\\[2pt]
    P_i^{\mathrm{lo}}, & 2\pi\,\delta_i \le \Phi_i < 2\pi \quad\text{(communication phase)},
  \end{cases}
\end{equation}
where $\delta_i\in(0,1)$ is the compute duty cycle and $P_i^{\mathrm{hi}}>
P_i^{\mathrm{lo}}$. The fundamental period is the iteration time $T_i$, giving a
natural angular rate $\omega_i = 2\pi/T_i$. The three numbers
$(P_i^{\mathrm{hi}},P_i^{\mathrm{lo}},\delta_i)$ and the period $T_i$ are
estimable in order of magnitude from public information, which is what
makes a synthetic study feasible without facility access:
\begin{itemize}[leftmargin=1.4em,itemsep=2pt]
  \item $P_i^{\mathrm{hi}}$ from accelerator thermal design power and count,
  derated by a measured compute-phase utilization factor;
  \item $P_i^{\mathrm{lo}}$ from idle/communication-phase draw;
  \item $\delta_i$ and $T_i$ from a compute/communication time model: compute time
  from FLOPs and sustained throughput, communication time from the collective
  (all-reduce for data/tensor parallelism, all-to-all for expert parallelism)
  under an $\alpha$--$\beta$ cost model and the message volume implied by the
  parallelization plan \citep{megascale2024,lee2025overlap}.
\end{itemize}
These estimates set scales, not thresholds: a thermal design power is not
a workload power trace (instantaneous draw, averaged draw, requested and enforced
limits are distinct quantities), and the reconstruction inherits the uncertainty
of the utilization factor, the overlap fraction, and the communication model.
Dimensioning an actual site against the criteria of
Section~\ref{sec:analysis} requires measured $(P,\chi)$ waveforms, not these
public-specification surrogates (\S\ref{sec:discussion}).
It is convenient to keep the Fourier representation of \eqref{eq:waveform},
\begin{equation}\label{eq:waveform-fourier}
  P_i(\Phi) \;=\; \bar P_i \;+\; \sum_{m\ge1} c_{i,m}\cos\!\bigl(m\Phi-m\Phi^{\star}_{i,m}\bigr),
  \qquad
  \bar P_i = P_i^{\mathrm{lo}} + (P_i^{\mathrm{hi}}-P_i^{\mathrm{lo}})\,\delta_i,
\end{equation}
with $c_{i,m}=\dfrac{2(P_i^{\mathrm{hi}}-P_i^{\mathrm{lo}})}{\pi m}\,
\bigl|\sin(\pi m\delta_i)\bigr|$ for the square waveform (Appendix~\ref{app:fourier}).
The slow $1/m$ decay of $c_{i,m}$ is the quantitative statement that the load is
harmonic-rich, and is why the sinusoidal Kuramoto model is insufficient here
(Figure~\ref{fig:wave}).

\begin{figure}[t]\centering
\includegraphics[width=.62\linewidth]{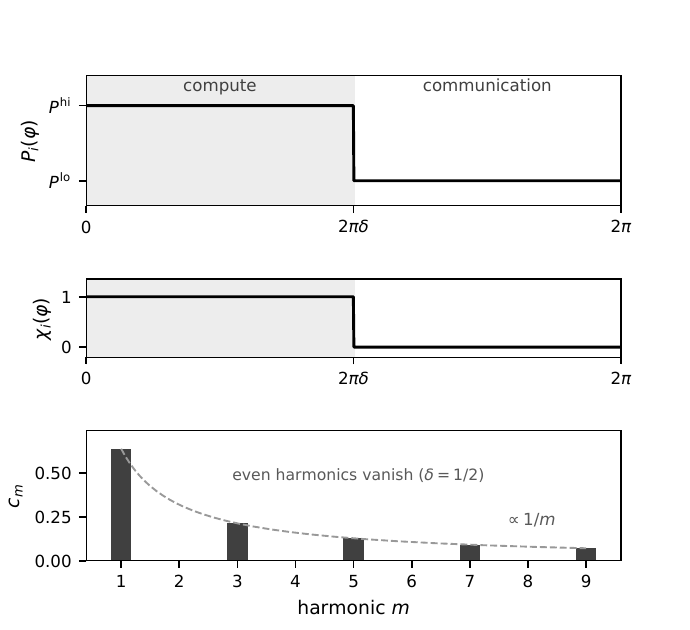}
\caption{The building blocks of the model, at the duty cycle $\delta=\tfrac12$
of the numerical study. Top: the two-level power waveform \eqref{eq:waveform}
over one iteration. Middle: the throttling susceptibility
\eqref{eq:susceptibility}, sharing the same duty cycle and reference phase --
the throttle bites only while the job is compute-bound. Bottom: the Fourier
amplitudes $c_m$ of \eqref{eq:waveform-fourier}, decaying as $1/m$ (dashed
envelope; even harmonics vanish at $\delta=\tfrac12$): the load is
harmonic-rich, which is why the coupling sign must be tracked per harmonic
(Proposition~\ref{cl:clusters}).}
\label{fig:wave}
\end{figure}

\subsection{The coupling mechanism}\label{sub:coupling}
% >>> C'est ici la correction de la faille. Texte volontairement explicite.
\paragraph{What does not couple the jobs.}
It is tempting to posit that the grid frequency entrains the jobs, by analogy with
\eqref{eq:swing}. It does not. Accelerator clocks are set by on-board phase-locked
loops downstream of rectification and voltage regulation, so first-order variations
of line frequency do not change the rate at which a step is executed, and a
model coupling training jobs through grid frequency would be physically vacuous.
Voltage is different but not a counterexample: a sag does slow computation, yet only
by reducing deliverable power and triggering throttling -- it acts through the
power channel below, not as a direct entrainment (Assumption~\ref{ass:clock}). The
coupling, if any, must therefore act on the quantity that sets the step rate: the
compute throughput.

\paragraph{What does couple them: load-dependent throttling.}
Three shared, congestible resources slow computation precisely when aggregate
demand is high.
\begin{enumerate}[leftmargin=1.6em,itemsep=2pt]
  \item \textbf{Power oversubscription with enforced caps.} Facilities provision
  power below the sum of nameplate ratings \citep{fan2007} and enforce a cap $\Pcap$
  by power management (frequency/voltage scaling, power-limit registers
  \citep{rapl2010}). When the
  instantaneous aggregate $\Ptot=\sum_j P_j$ approaches $\Pcap$, accelerators are
  throttled, lengthening the current step.
  \item \textbf{Voltage droop} on shared distribution, which reduces deliverable
  power under high aggregate current and triggers the same protective throttling.
  \item \textbf{Shared cooling}, whose finite heat-removal rate raises die
  temperatures under high aggregate dissipation, inducing thermal throttling on a
  slower timescale.
\end{enumerate}
All three slow computation when aggregate demand is high. Crucially, they slow a
given job only to the extent that it is currently compute-bound: throttling
lowers the clock at which floating-point work is done, whereas during the
communication phase a job advances at the pace of the network, not of the clock,
and is nearly insensitive to the cap. The throttle therefore acts through a
susceptibility $\chi_i(\Phi_i)$ -- a phase response curve -- that gates the
interaction by the job's own internal phase (Figure~\ref{fig:wave}, middle):
\begin{equation}\label{eq:susceptibility}
  \chi_i(\Phi_i)=
  \begin{cases}
    1, & 0\le\Phi_i<2\pi\delta_i \quad(\text{compute: throttle bites}),\\[2pt]
    0, & 2\pi\delta_i\le\Phi_i<2\pi \quad(\text{communication: throttle idle}).
  \end{cases}
\end{equation}
This gate is the physical origin of a genuine phase interaction, and it is easy to
miss: without it the throttle would multiply every job's speed by the same
aggregate-dependent factor and produce no relative-phase force -- a common
drive, not a coupling (Appendix~\ref{app:reduction}). With it, the rate of phase
advance of job $i$ depends on every other job's phase through job $i$'s own:
\begin{equation}\label{eq:throttle}
  \dot\Phi_i \;=\; \omega_i\,\Bigl[\,1 \;-\; \eta_i\,\chi_i(\Phi_i)\,
  h\!\bigl(\Ptot(t-\tau)-\Pcap\bigr)\Bigr],
  \qquad
  \Ptot(t)=\sum_{j=1}^{N} P_j\!\bigl(\Phi_j(t)\bigr),
\end{equation}
with $h$ nondecreasing, $h(x)=0$ for $x\le 0$ (no throttling below the cap),
$\eta_i\ge 0$ the dimensionless throttling sensitivity ($h$ carrying the inverse
power units), and $\tau\ge 0$ the control delay
of the power-management loop. The gate $\chi_i$ and the power waveform $P_i$ share
the same duty cycle $\delta_i$ and reference phase, since both are governed by the
compute/communication split -- an alignment assumption about the workload,
used in the reduction and relaxed at \eqref{eq:Gamma-sign}. Equations
\eqref{eq:waveform}--\eqref{eq:throttle} form a closed dynamical system: each job
drives the aggregate through $P_j(\Phi_j)$, and the aggregate feeds back on every
job's phase velocity through $\chi_i\,h$ -- the closed loop of
Figure~\ref{fig:mech}. This is the coupling the frequency
analogy was missing.

\subsection{Reduction to a generalized Kuramoto system}\label{sub:reduction}
% >>> RÉDIGÉ. Dérivation -> forme (8). C'est le pivot mathématique du papier.
The closed system \eqref{eq:waveform}--\eqref{eq:throttle} reduces, in the
weak-coupling, slow-phase regime of Assumption~\ref{ass:slowphase}, to a
generalized Kuramoto model. We give the reduction in three steps and defer the
bookkeeping to Appendix~\ref{app:reduction}.

\paragraph{Step 1 -- linearize the throttle.}
Let $s_0=\bar P_{\mathrm{tot}}-\Pcap$ be the operating overshoot
(Assumption~\ref{ass:throttle}) and $\Delta\Ptot=\Ptot-\bar P_{\mathrm{tot}}$ the
aggregate fluctuation. Then $h(\Ptot-\Pcap)=h(s_0)+h'\Delta\Ptot+O(\Delta\Ptot^2)$
with $h'=h'(s_0)>0$. The constant $h(s_0)$ multiplies
$\omega_i\eta_i\chi_i(\Phi_i)$ and, after phase-averaging $\chi_i$ to its mean
$\delta_i$, only renormalizes the natural rate to
$\omega_i^{\mathrm{eff}}=\omega_i\,[1-\eta_i\delta_i h(s_0)]$; the fluctuating part
is
\begin{equation}\label{eq:linearize}
  \dot\Phi_i \;=\; \omega_i^{\mathrm{eff}}
  \;-\; \omega_i\,\eta_i\,h'\,\chi_i(\Phi_i)\!
  \sum_{j=1}^N\Bigl[P_j\!\bigl(\Phi_j(t-\tau)\bigr)-\bar P_j\Bigr].
\end{equation}

\paragraph{Step 2 -- fundamental harmonic and the delay.}
Substitute the Fourier waveform \eqref{eq:waveform-fourier} and the matching
fundamental of the susceptibility,
$\chi_i(\Phi_i)=\delta_i+\chi_{i,1}\cos(\Phi_i-\Phi^\star_{i,1})+\cdots$ with
$\chi_{i,1}=c_{i,1}/(P_i^{\mathrm{hi}}-P_i^{\mathrm{lo}})$ and the same
reference phase $\Phi^\star_{i,1}$ as the power waveform
(Appendix~\ref{app:fourier}). For slowly drifting phases the delay becomes a phase
lag, $\Phi_j(t-\tau)\approx\Phi_j(t)-\omega_j\tau$ (Assumption~\ref{ass:slowphase}).

\paragraph{Step 3 -- average.}
The product $\chi_i(\Phi_i)\,P_j(\Phi_j(t-\tau))$ contains a fast term oscillating
at the sum frequency $\Phi_i+\Phi_j$, which averages to zero over an iteration, and
a slow term in the phase difference. Retaining the slow term (rotating-wave
averaging), the reference phases $\Phi^\star_{i,1}$ cancel and the dynamics close
into the Sakaguchi--Daido form \eqref{eq:generalized}. For a homogeneous fleet
($\omega_j\approx\bar\omega$, common $\eta,\delta,\chi_1$),
\begin{equation}\label{eq:reduced}
  \dot\Phi_i \;=\; \omega_i^{\mathrm{eff}}
  \;+\; \frac{K}{N}\sum_{j=1}^N \sin\!\bigl(\Phi_j-\Phi_i-\alpha\bigr),
  \qquad
  K \;=\; \tfrac12\,N\,\bar\omega\,\eta\,h'\,\chi_{1}\,c_{1} \;>\;0,
\end{equation}
where $c_{1}$ is the fundamental amplitude of the power waveform
\eqref{eq:waveform-fourier}, and the Sakaguchi phase
frustration is set entirely by the control delay,
\begin{equation}\label{eq:Gamma-sign}
  \alpha \;=\; \bar\omega\,\tau+\tfrac{\pi}{2},
  \qquad\Longrightarrow\qquad
  a_1(\tau)\;\equiv\;\cos\alpha\;=\;-\sin(\bar\omega\tau),
\end{equation}
$a_1(\tau)$ being the synchronizing (odd, $\sin$) coefficient of the coupling in the
convention of \eqref{eq:generalized}. The reference phase $\Phi^\star$ has dropped
out: only the delay-in-iterations $\bar\omega\tau$ survives. Equation
\eqref{eq:Gamma-sign} is the pivot of the paper, because the sign of $a_1(\tau)$
decides between coherence and incoherence. The cancellation of $\Phi^\star$ uses
the alignment of $P$ and $\chi$ (Assumption~\ref{ass:waveform}): if the
susceptibility fundamental leads the power fundamental by a phase $\Delta\varphi$,
the same computation gives $a_1=-\sin(\bar\omega\tau-\Delta\varphi)$ -- the sign
law keeps its form but the critical delay shifts by $\Delta\varphi/\bar\omega$.
Measuring $\Delta\varphi$ (the lag between the power peak and the window of cap
sensitivity) is therefore part of the two-job experiment of
Section~\ref{sec:discussion}, and the $\tau^\star=T/2$ figure below assumes
$\Delta\varphi=0$. We stress that the predicted coherence is genuine
mutual synchronization, not entrainment to an external signal. Two
statements make this precise. The averaged dynamics \eqref{eq:reduced} are
invariant under the global rotation $\Phi_i\mapsto\Phi_i+c$, so a coherent state
breaks a continuous symmetry spontaneously. The full system
\eqref{eq:throttle} does not share that rotation exactly -- the waveforms carry
reference phases, and the invariance holds only to the order of the averaging --
but it is autonomous: it contains no external clock, so the collective
phase of any locked solution is selected by the initial conditions through
time-translation symmetry, not by an outside reference. In both readings the
susceptibility $\chi_i$ (which makes the drive
depend on each job's own phase) is the indispensable ingredient: without it the
throttle is a common drive that couples nothing (Appendix~\ref{app:reduction}).

\begin{proposition}[Reduced coupling and the sign of the interaction]
\label{prop:coupling}
Under Assumptions~\ref{ass:waveform}--\ref{ass:slowphase}, a fleet of narrow rate
spread obeys \eqref{eq:reduced} with frustration $\alpha=\bar\omega\tau+\pi/2$ and
synchronizing coefficient $a_1(\tau)=-\sin(\bar\omega\tau)$. Hence the interaction
is reactive (marginal, $a_1=0$) at $\bar\omega\tau\in\{0,\pi\}$,
repulsive ($a_1<0$) for $0<\bar\omega\tau<\pi$, and attractive
($a_1>0$) for $\pi<\bar\omega\tau<2\pi$, with the first sign reversal at the
critical control delay
\begin{equation}\label{eq:taustar}
  \tau^\star \;=\; \frac{\pi}{\bar\omega} \;=\; \frac{T}{2},
\end{equation}
one half of an iteration period. Proof in Appendix~\ref{app:reduction}.
\end{proposition}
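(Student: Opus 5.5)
The plan is a direct computation on the linearized equation \eqref{eq:linearize}. I would project both factors of the coupling term onto their fundamentals, apply rotating-wave averaging, and then read the Sakaguchi lag and $a_1$ off a trigonometric identity. Step~1 (linearization of $h$ about $s_0$, with $h'(s_0)>0$ by Assumption~\ref{ass:throttle}) is already done in the text. I therefore start from
\[
\dot\Phi_i=\omega_i^{\mathrm{eff}}-\omega_i\eta h'\,\chi_i(\Phi_i)\sum_j\bigl[P_j(\Phi_j(t-\tau))-\bar P_j\bigr].
\]
Into this I substitute $\chi_i=\delta+\chi_1\cos(\Phi_i-\Phi^\star)+\cdots$ and $P_j-\bar P_j=c_1\cos(\Phi_j(t-\tau)-\Phi^\star)+\cdots$. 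I then use Assumption~\ref{ass:slowphase}(i) to replace $\Phi_j(t-\tau)$ by $\Phi_j-\omega_j\tau\approx\Phi_j-\bar\omega\tau$, which is legitimate for a fleet of narrow rate spread.

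The product then splits into three kinds of terms.

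\emph{The $\delta\times c_1$ term.} This is $\delta c_1\cos(\Phi_j-\bar\omega\tau-\Phi^\star)$. It rotates at the fast rate $\omega_j$ and does not involve $\Phi_i$, so it averages to zero over an iteration. (This is also the ``common drive'' observation: with no gate it would be the only term.)

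\emph{The $\chi_1\times c_1$ term.} Product-to-sum gives
\[
\tfrac12\chi_1c_1\bigl[\cos(\Phi_i-\Phi_j+\bar\omega\tau)+\cos(\Phi_i+\Phi_j-\bar\omega\tau-2\Phi^\star)\bigr].
\]
The sum-frequency part averages out. In the slow part the reference phase $\Phi^\star$ has cancelled.

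\emph{Mixed higher harmonics.} A product of the $m$-th harmonic of $\chi_i$ with the $n$-th harmonic of $P_j$ is slow only when $m=n$, because the rates are near-equal. Such terms contribute only to the $m$-th Fourier mode of $\Gamma$ and leave the first-harmonic coefficient untouched, so I discard them for this proposition.

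Collecting the slow term, and using $\omega_i\approx\bar\omega$ in the prefactor, gives
\[
\dot\Phi_i=\omega_i^{\mathrm{eff}}-\frac{K}{N}\sum_j\cos(\Phi_j-\Phi_i-\bar\omega\tau),
\qquad K=\tfrac12N\bar\omega\eta h'\chi_1c_1 .
\]
The key identity is $-\cos y=\sin(y-\pi/2)$. With $y=x-\bar\omega\tau$ this becomes $-\cos(x-\bar\omega\tau)=\sin(x-\alpha)$ where $\alpha=\bar\omega\tau+\pi/2$, which is exactly \eqref{eq:reduced}. Expanding $\sin(x-\alpha)=\cos\alpha\sin x-\sin\alpha\cos x$ and matching the convention of \eqref{eq:generalized} gives
\[
a_1=\cos\alpha=-\sin(\bar\omega\tau).
\]
The sign statements then follow from the sign of $\sin$ on $(0,\pi)$ and $(\pi,2\pi)$, with zeros at $0$ and $\pi$. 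The first positive zero crossing from negative to positive is at $\bar\omega\tau=\pi$, that is, $\tau^\star=\pi/\bar\omega=T/2$.

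The main obstacle is the rigor of the averaging step rather than the algebra. Two things need justification.
\begin{itemize}[leftmargin=1.4em,itemsep=2pt]
\item The discarded sum-frequency and single-phase terms must contribute only $O(\varepsilon)$ drift. I would invoke the standard first-order averaging theorem for $\dot\Phi=\bar\omega+\varepsilon F$, with $\varepsilon=K/\bar\omega$ as in Assumption~\ref{ass:slowphase}, after passing to the slow differences $\Phi_i-\bar\omega t$.
\item The retarded-argument replacement $\Phi_j(t-\tau)\approx\Phi_j(t)-\omega_j\tau$ has error $\int_{t-\tau}^t(\dot\Phi_j-\omega_j)=O(\varepsilon\bar\omega\tau)$.
\end{itemize}
A further subtlety is that the discontinuous square waveforms have slowly decaying Fourier series. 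Here I would note that only the $m=n=1$ resonance is needed for $a_1$, and that the $L^2$ convergence of the Fourier series suffices to make the projected coefficients exact.
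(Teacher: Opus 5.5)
Your proposal is correct and follows essentially the same route as the paper's proof in Appendix~\ref{app:reduction}. Both linearize the throttle, insert the fundamentals of $\chi_i$ and $P_j$, and replace the retarded phase by $\Phi_j-\omega_j\tau$. Both then split the product by product-to-sum, drop the sum-frequency term, and read $\alpha=\bar\omega\tau+\pi/2$ and $a_1=\cos\alpha=-\sin(\bar\omega\tau)$ off the identity $-\cos\theta=\sin(\theta-\pi/2)$.

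There are two minor differences. First, you dispose of the $\delta\times c_1$ term as a zero-mean fast term, which is valid at first order in averaging. The paper instead treats it as a common drive, a pure gauge when $\eta\delta$ is uniform. Second, the paper makes the averaging theorem applicable by regularizing the waveform to Lipschitz class (finite ramps or Fourier truncation). Your $L^2$ remark does not supply this: it pins down the averaged coefficients exactly, but not the theorem's regularity hypothesis.
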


\begin{figure}[t]\centering
\includegraphics[width=.7\linewidth]{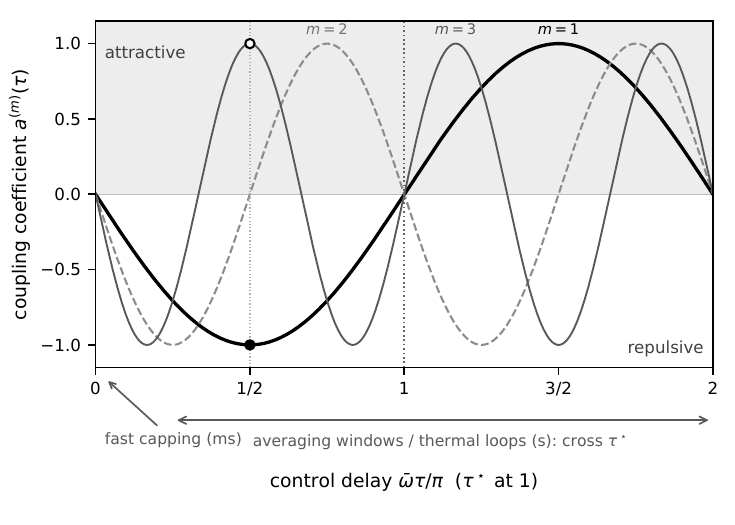}
\caption{The per-harmonic sign law $a^{(m)}(\tau)=-\sin(m\bar\omega\tau)$ for
$m=1,2,3$ (Propositions~\ref{prop:coupling} and~\ref{cl:clusters}); positive
values are attractive (locking), negative repulsive (splaying). The
fundamental is repulsive across the whole window $0<\bar\omega\tau<\pi$ and
reverses at $\tau^\star=T/2$ (right dotted line), but no delay is repulsive at
every mode: at $\bar\omega\tau=\pi/2$ (left dotted line) the fundamental is
maximally repulsive (filled dot) while the third harmonic is maximally
attractive (open dot) -- the window exploited by the harmonic lock of
\S\ref{sub:num-clusters}. Below the axis, the physical delay scales: fast
electrical capping sits at $\bar\omega\tau\approx0$, deep in the repulsive
window; accumulated dead time (polling and enforcement intervals, cascaded slow
stages) can reach and cross $\tau^\star$, whereas a single analog lag stage
cannot (Appendix~\ref{app:grid}; \S\ref{sec:discussion}).}
\label{fig:signs}
\end{figure}

\paragraph{Are realistic delays large enough? A transfer-function check.}
The sign turns on one dimensionless number -- the total phase lag of the control
loop at the iteration frequency, written $\bar\omega\tau$ -- so the operative
question is whether real loops accumulate half a cycle of lag
(Figure~\ref{fig:signs}). The word delay must here be used with care,
because different loop elements accumulate phase in qualitatively different ways
(Appendix~\ref{app:grid} develops the classification):
\begin{itemize}[leftmargin=1.4em,itemsep=2pt]
  \item \textbf{Dead time} -- transport, computation, sample-and-hold, polling
  and enforcement intervals, actuation latency -- has transfer $e^{-\ii\omega\tau}$:
  phase $\omega\tau$ grows without bound at unit gain. It is the only element
  type whose phase can cross $\pi$ on its own.
  \item \textbf{Analog lag} -- a first-order sensing or thermal stage
  $1/(1+\ii\omega\tau_{\mathrm{th}})$ -- contributes a phase
  $\arctan(\omega\tau_{\mathrm{th}})<\pi/2$ however slow the stage, and a
  damped second-order stage at most $\pi$, approached only asymptotically; its
  gain simultaneously rolls off, weakening the coupling it carries. A slow
  thermal loop is therefore not automatically past $\tau^\star$: a single
  lag stage can never invert the sign by itself (Appendix~\ref{app:grid}), and a
  thermal channel becomes dangerous only in cascade -- three or more lag
  stages (die, cold plate, coolant loop, \dots) or lag combined with dead time --
  when the total loop phase exceeds $\pi$.
  \item \textbf{Averaging windows} -- a length-$W$ moving average has transfer
  $e^{-\ii\omega W/2}\,\mathrm{sinc}(\omega W/2)$: it contributes the phase of a
  dead time $W/2$, but with an attenuation that nulls the fundamental
  exactly at $W=T$. A pure averaging window is thus self-limiting -- pushing its
  phase toward $\pi$ requires $W\ge T$, where its gain at $\bar\omega$ has already
  collapsed -- and the dangerous configuration is a moderate window
  combined with polling or actuation dead time, which rotates phase
  without paying the sinc penalty.
\end{itemize}
(A collective-communication stall, by contrast, lengthens the iteration itself:
it changes $T$, the duty cycle, and the waveform -- not the sensing-to-actuation
lag $\tau$ -- and is not a coupling-delay channel.) With iteration periods
$T\sim2$--$6$\,s \citep{megascale2024} and fast electrical capping --
short-window power limiting at the millisecond scale, model-predictive
controllers in a few milliseconds -- one has $\bar\omega\tau\ll\pi$, so the fast
loop is robustly repulsive at the fundamental (though weakly so as
$\tau\to0$, where the repulsion strength $|a_1|=\sin\bar\omega\tau$ itself
vanishes; \S\ref{sec:discussion}), and safely so at every accessible harmonic:
the first attractive mode is $m\gtrsim\pi/(\bar\omega\tau)\gg1$, whose coupling
weight decays as $1/m^2$ and whose threshold \eqref{eq:kcm} grows as $m^2$. The
attractive window is reached only through accumulated dead time in slow
channels: long polling and enforcement intervals, staged controllers, cascaded
thermal stages. For lags beyond one period ($\bar\omega\tau>2\pi$) the sign law
still applies formally through $\bar\omega\tau \bmod 2\pi$, but the
heterogeneity dephasing across the fleet grows as $\gamma\tau$ (the
$e^{\gamma\tau}$ threshold inflation of Appendix~\ref{app:oa}), so multi-wrap
predictions should be read qualitatively.
Two consequences follow, and we record them as the honest reading of
Proposition~\ref{prop:coupling}. First, the dangerous regime is a property of
dead-time-dominated control, not of throttling as such -- a design
statement, not a doom statement. Second, because $a_1$ depends on $\bar\omega\tau$, a fleet with
a broad spread of iteration periods averages $-\sin(\omega_j\tau)$ over $j$ and
washes out coherent attraction, reinforcing Corollary~\ref{cor:hetero}. Under fast
electrical capping the model therefore predicts no fleet-wide
synchronization at any accessible mode; this is the falsifiable, and reassuring,
baseline. The mode-selectivity bites at intermediate lags -- a loop at
$\bar\omega\tau\approx\pi/2$ maximizes fundamental repulsion
while handing the third harmonic its maximal attraction
(Proposition~\ref{cl:clusters}; Figure~\ref{fig:signs}).

The three readings below are the qualitative core of the mechanism. The first two
are immediate corollaries of Proposition~\ref{prop:coupling}; the third extends the
same computation harmonic by harmonic and, derived at the same level of rigour
(Appendix~\ref{app:reduction}), is stated as a proposition.
\begin{corollary}[Default repulsion at short delay -- at the fundamental]\label{cl:repulsion}
For a fast controller, $0<\bar\omega\tau<\pi$ (delay below half an iteration), the
fundamental coupling is repulsive, $a_1(\tau)<0$: jobs that compute together are
throttled together and drift apart. By Proposition~\ref{prop:kc}(i) the reduction
\eqref{eq:reduced} then admits no macroscopically coherent state; the population is
driven away from fundamental coherence ($r_1\to0$) and the aggregate does not grow
beyond the incoherent baseline of \S\ref{sub:scaling}. Two qualifications bound the
statement: it protects the fundamental mode only (Proposition~\ref{cl:clusters}), and
it holds away from a boundary layer below $\tau^\star$ whose width grows with the rate
spread, where the fast tail of $g$ is already individually attractive
(\S\ref{sub:num-sign}).
\end{corollary}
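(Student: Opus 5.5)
The plan is to read Corollary~\ref{cl:repulsion} off Proposition~\ref{prop:coupling} in two moves: first the sign of $a_1$, then the absence of a coherent state. On the window $0<\bar\omega\tau<\pi$ we have $\sin(\bar\omega\tau)>0$, so $a_1(\tau)=-\sin(\bar\omega\tau)<0$. Equivalently, the frustration $\alpha=\bar\omega\tau+\pi/2$ lies strictly in $(\pi/2,3\pi/2)$, so $\cos\alpha<0$.

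To give the words ``computing together are throttled together and drift apart'' a dynamical meaning, I would linearize \eqref{eq:reduced} for two jobs. For the phase difference $\psi=\Phi_1-\Phi_2$ the averaged dynamics give
\[
\dot\psi=\Delta\omega^{\mathrm{eff}}-K\cos\alpha\,\sin\psi .
\]
Near $\psi=0$ this is $\dot\psi\approx-K\cos\alpha\,\psi=K|a_1|\psi$, which grows. So the in-phase state is unstable at rate $K|a_1|$.

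For the $N$-job statement I would invoke Proposition~\ref{prop:kc}(i). In the thermodynamic limit, the self-consistency equation for the order parameter reads $r=Kr\,\mathrm{Re}[e^{-\ii\alpha}]\int\dots$, with the leading-order onset at $\Kc=2/[\pi a_1 g(0)]$. For $a_1<0$ this critical coupling is negative, so no positive $K$ reaches it. I would make this precise in two parts:
\begin{itemize}[leftmargin=1.4em,itemsep=2pt]
  \item the incoherent state $r_1=0$ is linearly stable: the Ott--Antonsen/Strogatz--Mirollo continuous-spectrum computation gives an eigenvalue with real part $\tfrac{K}{2}\cos\alpha-\gamma<0$ for a Lorentzian of half-width $\gamma$, and the analogous argument applies for general unimodal $g$;
  \item no partially locked branch bifurcates from it, because the locked-oscillator contribution to the self-consistency has the wrong sign.
\end{itemize}
Together these give that the fundamental mode is driven toward $r_1\to 0$. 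By the exact identity of \S\ref{sub:scaling}, the aggregate fluctuation then stays at its incoherent $\sqrt N$ baseline.

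The two qualifications come from the same computations. The first concerns higher modes: the statement is only about $m=1$, and I would defer the higher harmonics to Proposition~\ref{cl:clusters}, whose $m$th-harmonic frustration has its own sign.

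The second qualification is the main obstacle: the boundary layer near $\tau^\star$. Once rates are spread, the frustration $\alpha_j=\omega_j\tau+\pi/2$ varies across the fleet. Jobs with $\omega_j\tau>\pi$ have $-\sin(\omega_j\tau)>0$ even though $\bar\omega\tau<\pi$, so they are individually attractive. Here I would first try to bound the fraction of such jobs by $\int_{\omega>\pi/\tau}g$. I would then show that the effective coefficient $\int g(\omega)(-\sin\omega\tau)\,d\omega$ stays negative whenever $\pi-\bar\omega\tau\gtrsim\Delta\omega\,\tau$, which gives a layer of width $O(\Delta\omega\,\tau)$. I would not claim a sharp statement inside the layer, and would point to the numerics of \S\ref{sub:num-sign}.
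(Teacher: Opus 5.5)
Your handling of the sign and of the incoherence claim follows the paper's route. The sign $a_1=-\sin(\bar\omega\tau)<0$ comes from Proposition~\ref{prop:coupling}. The absence of coherence then comes from Proposition~\ref{prop:kc}(i), through the self-consistency \eqref{eq:selfcons}, whose right-hand side has the sign of $\cos\alpha$ for every $r>0$, and the dispersion root $\lambda=\tfrac12Ka_1-\gamma$ of \eqref{eq:dispersion}. Your two-job computation $\dot\psi=\Delta\omega^{\mathrm{eff}}-K\cos\alpha\sin\psi$ is correct, and it gives ``drift apart'' a precise meaning that the paper only states in words.

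\textbf{The gap is in the boundary-layer step}, the one you call the main obstacle. The fleet-averaged coefficient cannot produce a boundary layer. For any $g$ symmetric about $\bar\omega$,
\[
\int g(\omega)\bigl(-\sin\omega\tau\bigr)\,\dd\omega=-\sin(\bar\omega\tau)\int g(\omega)\cos\bigl((\omega-\bar\omega)\tau\bigr)\,\dd\omega .
\]
For the Lorentzian this equals $-e^{-\gamma\tau}\sin(\bar\omega\tau)$. That is negative on the whole window $0<\bar\omega\tau<\pi$ and vanishes exactly at $\tau^\star$, whatever the spread. Your inequality ``negative whenever $\pi-\bar\omega\tau\gtrsim\Delta\omega\,\tau$'' therefore holds trivially everywhere. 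Your route would certify protection right up to $\tau^\star$.

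The paper's own numerics contradict this at the very point that defines the layer. At $\bar\omega\tau=0.92\pi$ and $\gamma=0.05$, the averaged coefficient is about $-0.22$, clearly repulsive. Yet the fleet reaches $r_1=0.58$ (\S\ref{sub:num-sign}). So a negative population average of $a_1$ is not a sufficient condition for the stability of incoherence once $\alpha_j=\omega_j\tau+\pi/2$ varies across the fleet.

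The missing idea is where stability is actually decided. It is governed by a dispersion relation in which the delay kernel is weighted by the resonance at the collective frequency $\Omega$ of the incipient mode, and $\Omega$ itself moves with $K$ and $g$. Coherence can therefore nucleate while the fleet average is still repulsive.

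Your first step is fine as far as it goes. The attractive fraction $\int_{\omega>\pi/\tau}g$, which for the Lorentzian is $\tfrac12-\tfrac1\pi\arctan[(\pi/\tau-\bar\omega)/\gamma]$, is exactly what the paper reports. But it is a diagnostic, not a stability criterion.

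The paper does not derive the layer analytically. It states the qualification as lying outside the common-frustration reduction to which Proposition~\ref{prop:kc}(i) applies, and supports it only by the numerics of \S\ref{sub:num-sign}. If you want an analytic width, linearize about incoherence with the frequency dependence kept. One way is the OA equation \eqref{eq:oaeq} with its kernel $e^{-\ii\omega\tau}$; another is the model with the delay kept as a true history. Averaging $a_1$ over $g$ will not give it. Otherwise, drop the averaging step and cite the numerics, as the paper does.
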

\begin{corollary}[Delay-induced attraction]\label{cl:attraction}
When the control delay exceeds $\tau^\star=T/2$, the coupling turns attractive,
$a_1(\tau)>0$, and above the critical coupling $\Kc$ of \eqref{eq:kc-general} the
fleet phase-locks ($r\to1$), producing the coherent, $N$-scaling swing the
forced-oscillation literature fears. The danger is created not by throttling but by
lagging throttling.
\end{corollary}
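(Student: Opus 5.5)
Corollary~\ref{cl:attraction} follows from Proposition~\ref{prop:coupling} plus a Kuramoto--Sakaguchi onset argument, in two steps: first the sign, then the locking. For the sign, I evaluate $a_1(\tau)=-\sin(\bar\omega\tau)$ from \eqref{eq:Gamma-sign} on the window $\tau^\star<\tau<2\tau^\star$, that is $\pi<\bar\omega\tau<2\pi$. There $\sin(\bar\omega\tau)<0$, so $a_1>0$. In the convention of \eqref{eq:generalized} the odd part of the reduced coupling $\sin(x-\alpha)=\cos\alpha\,\sin x-\sin\alpha\,\cos x$ then has a positive coefficient, which is exactly attraction. I state the window explicitly because for $\bar\omega\tau>2\pi$ the sign is periodic and the statement is read modulo $2\pi$, with the multi-wrap caveat of the preceding paragraph.

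For the locking, I rewrite \eqref{eq:reduced} in mean-field form, $\dot\Phi_i=\omega_i^{\mathrm{eff}}+Kr\sin(\psi-\Phi_i-\alpha)$, and run Kuramoto's self-consistency argument in the Sakaguchi version. The steps are as follows.
\begin{enumerate}[leftmargin=1.6em,itemsep=2pt]
  \item Move to the frame rotating with the collective frequency $\Omega$. Oscillators with $|\omega-\Omega|\le Kr$ lock at $\sin(\Phi_i-\psi+\alpha)=(\omega_i-\Omega)/(Kr)$, while the remaining ones drift with the stationary density inversely proportional to their velocity.
  \item Write the self-consistency condition $r=\int e^{\ii(\Phi-\psi)}\,\rho$ and expand it at small $r$. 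The real part gives, to leading order, $1=\tfrac{\pi}{2}Kg(0)\cos\alpha$, where the imaginary part only fixes the shift $\Omega$.
  \item Conclude that a nontrivial branch $r>0$ exists precisely when $K>\Kc=2/[\pi a_1(\tau)g(0)]$.
\end{enumerate}
This is Proposition~\ref{prop:kc}, which I invoke directly. Above $\Kc$ the locked fraction grows with $K$, so $r\to1$ as $K/\Kc\to\infty$. I would phrase ``$r\to1$'' in this asymptotic sense rather than as an identity.

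The coherent, $N$-scaling swing then follows from the exact identity of \S\ref{sub:scaling}. The fundamental Fourier component of $\Delta\Ptot$ equals $Nc_1 r$ (homogeneous fleet), so a finite $r$ gives an aggregate swing of order $N$, not $\sqrt N$.

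The main obstacle is that the fleet is not homogeneous: the frustration is frequency-dependent, $\alpha_j=\omega_j\tau+\pi/2$. The self-consistency integral therefore weights by $\cos\alpha(\omega)$ and picks up the odd part of $\sin\alpha(\omega)$, which shifts $\Omega$ and twists the order parameter. My first move is to restrict to narrow spread, $\Delta\omega\,\tau\ll1$, and treat $\alpha_j-\alpha$ as a perturbation. That recovers $\Kc$ at leading order; the correction is the $e^{\gamma\tau}$ inflation computed in the Ott--Antonsen appendix. I would not claim a continuous onset here: the twisted order parameter can make the transition subcritical, so the corollary asserts locking above $\Kc$ without specifying the order of the transition.
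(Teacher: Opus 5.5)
Your proposal is correct and follows the paper's route. The paper presents this as an immediate corollary of Proposition~\ref{prop:coupling} (the sign $a_1=-\sin\bar\omega\tau>0$ on $\pi<\bar\omega\tau<2\pi$), combined with Proposition~\ref{prop:kc}(ii) for the threshold and the $rNc_1$ bookkeeping of \S\ref{sub:scaling} for the $N$-scaling, which is exactly your chain; your explicit restriction to the window $\tau^\star<\tau<T$, the asymptotic reading of $r\to1$, and your refusal to fix the order of the onset sharpen the paper's looser wording without departing from it.
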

\noindent For identical jobs the repulsive state is a perfect splay with $r=0$,
suppressing the aggregate entirely below the incoherent baseline; at the finite rate
spread of the numerics it sits flat in $N$ and $16$--$35\%$ below the uncoupled
control -- a partial splay suppression (\S\ref{sub:num-amp}).
\begin{proposition}[Harmonic locking: the sign law is per-harmonic]\label{cl:clusters}
The waveform is harmonic-rich, and each harmonic pair carries its own frustration:
the $m$-th term of the reduced coupling is
$\propto\chi_m c_m\sin\!\bigl(m(\Phi_j-\Phi_i)-\alpha_m\bigr)$ with
$\alpha_m=m\bar\omega\tau+\tfrac\pi2$, hence a synchronizing coefficient
$a^{(m)}\propto-\sin(m\bar\omega\tau)$ (Appendix~\ref{app:reduction}). No delay is
repulsive for every $m$: a $\tau$ that protects the fundamental leaves some higher
harmonic attractive (at $\bar\omega\tau=\pi/2$, maximal fundamental repulsion,
$m=3$ is maximally attractive). Mode-$m$ self-consistency gives the lock
threshold
\begin{equation}\label{eq:kcm}
  \Kc^{(m)}=\frac{\chi_1c_1}{\chi_m c_m}\,\frac{2\gamma}{|\sin(m\bar\omega\tau)|}
  \;=\;\frac{2\gamma\,m^2}{|\sin(m\bar\omega\tau)|}
  \quad(\delta=\tfrac12,\ m\text{ odd}),
\end{equation}
with $\gamma$ the half-width of the (Lorentzian) rate density used throughout the
numerics, $g(0)=1/(\pi\gamma)$ (for a general unimodal $g$, replace $2\gamma$ by
$2/[\pi g(0)]$ as in \eqref{eq:kc-general}), so a sufficiently uniform fleet locks
into an $m$-cluster state -- $r_1$ at the
floor, $r_m=O(1)$, aggregate swing coherent at $m\bar\omega$ and scaling as $N$ --
inside the fundamental-repulsive window. Derivation in
Appendix~\ref{app:reduction}.
\end{proposition}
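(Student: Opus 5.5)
The plan is to rerun the three-step reduction of \S\ref{sub:reduction} while keeping every harmonic, then close each mode separately by a Kuramoto self-consistency argument.

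\textbf{Step 1: expansion in all harmonics.} Start from the linearized dynamics \eqref{eq:linearize}. Insert the full Fourier series of the waveform \eqref{eq:waveform-fourier} and the matching series of the gate, $\chi_i(\Phi)=\delta+\sum_{m\ge1}\chi_m\cos(m\Phi-m\Phi^\star_m)$. Because $P$ and $\chi$ are both two-level with the same duty cycle and reference phase, $\chi_m=c_m/(P^{\mathrm{hi}}-P^{\mathrm{lo}})$. Replace the retarded argument by $\Phi_j-\omega_j\tau$ (Assumption~\ref{ass:slowphase}).

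\textbf{Step 2: averaging over fast products.} The product $\chi_m c_n\cos(m\Phi_i-m\Phi^\star_m)\cos(n\Phi_j-n\bar\omega\tau-n\Phi^\star_n)$ is slow only when $m=n$ and the difference-frequency term is taken. Two oscillators near $\bar\omega$ make $m\Phi_i\pm n\Phi_j$ rotate at $(m\pm n)\bar\omega+O(\varepsilon)$, so every other term is fast. This is the resonance step I would state carefully, assuming no $m{:}n$ locking. Once it is done, the reference phases cancel and each mode contributes
\[
-\tfrac12\,\bar\omega\eta h'\chi_m c_m\cos\!\bigl(m(\Phi_j-\Phi_i)-m\bar\omega\tau\bigr).
\]
Using $-\cos x=\sin(x-\pi/2)$ and the sign flip of $\Phi_j-\Phi_i$ as in Proposition~\ref{prop:coupling}, this becomes $\chi_mc_m\sin(m(\Phi_j-\Phi_i)-\alpha_m)$ with $\alpha_m=m\bar\omega\tau+\pi/2$. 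The synchronizing coefficient is therefore $\cos\alpha_m=-\sin(m\bar\omega\tau)$. The $m=1$ case reproduces \eqref{eq:Gamma-sign}, which serves as a check on signs.

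\textbf{Step 3: no delay repels every mode.} Take $\theta=\bar\omega\tau\in(0,\pi)$, the only range where $a^{(1)}<0$. Since $\theta/\pi\in(0,1)$, some odd $m$ has $m\theta\bmod 2\pi\in(\pi,2\pi)$, unless $m\theta$ avoids that arc for all $m$, which is impossible by equidistribution for irrational $\theta/\pi$. For rational $\theta/\pi=p/q$, the choice $m=q+1$ (or a similar $m$) gives $\sin(m\theta)=\pm\sin\theta$ with the right sign. The endpoints $\theta\in\{0,\pi\}$ are marginal, as in Proposition~\ref{prop:coupling}. The explicit example is $\theta=\pi/2$, where $\sin(3\pi/2)=-1$, so mode $3$ is maximally attractive.

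\textbf{Step 4: mode-$m$ lock threshold.} Define $\psi_i=m\Phi_i$, which has natural rate $m\omega_i$. Keeping only mode $m$, the dynamics are $\dot\psi_i=m\omega_i+mK_m r_m\sin(\psi_m-\psi_i-\alpha_m)$, where the effective coupling is $K_m=K\chi_mc_m/(\chi_1c_1)$. This is a Sakaguchi model. Its leading-order onset, from Proposition~\ref{prop:kc} or the Ott--Antonsen result of Appendix~\ref{app:oa}, is $mK_m|\cos\alpha_m|=2\gamma_m$. The frequencies $m\omega_i$ are Lorentzian with half-width $\gamma_m=m\gamma$, so the factors of $m$ cancel and $K^{(m)}_{\mathrm c}=\frac{\chi_1c_1}{\chi_mc_m}\frac{2\gamma}{|\sin(m\bar\omega\tau)|}$.

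For $\delta=\tfrac12$ and odd $m$ we have $c_m\propto1/m$ and $\chi_m\propto1/m$, so the prefactor is $m^2$, which gives \eqref{eq:kcm}. For a general $g$ the same computation replaces $2\gamma$ by $2/[\pi g(0)]$. Above threshold the $m$-cluster has $r_m=O(1)$. Meanwhile $r_1$ stays at its floor, because mode~$1$ is repulsive and the $m$-fold symmetric clustered state contributes zero first moment. The aggregate's $m$-th harmonic is $\frac{N c_m}{2}\,r_m$ up to phase, so it scales as $N$.

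\textbf{Main obstacle.} The hard part is justifying the mode-by-mode closure, i.e.\ arguing that the modes do not interact at onset. I would handle it by linearizing the incoherent state on the Ott--Antonsen/Daido hierarchy. In the linearization the Fourier modes of the density decouple exactly, so each mode's instability threshold is set by its own harmonic alone, and the nonlinear mode couplings only affect the branch beyond onset.
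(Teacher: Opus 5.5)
Your route is the paper's own (Appendix~\ref{app:reduction}). You repeat the Step-3 rotating-wave computation for the $m$-th harmonic pair to get $\alpha_m=m\bar\omega\tau+\tfrac\pi2$, then treat the $m$-th term as a Sakaguchi coupling in the sector $\Psi_i=m\Phi_i$, with rates $m\omega_i$ (half-width $m\gamma$) and strength $mK\chi_mc_m/(\chi_1c_1)$, so the factors of $m$ cancel and \eqref{eq:kcm} follows. Your two additions are sound and make explicit what the paper leaves implicit: the cross products with $m\neq n$ are fast, and in the linearization about incoherence the angular Fourier modes of the density decouple exactly, so each mode's threshold depends only on its own harmonic.

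The one slip is in the rational case of your Step~3. For $\theta=p\pi/q$, the choice $m=q+1$ gives $\sin(m\theta)=(-1)^p\sin\theta$. That is repulsive for even $p$, and when $q$ is odd it is an even harmonic, which vanishes at $\delta=\tfrac12$. The choice $m=2q-1$ works in every case: it is always odd and gives $\sin((2q-1)\theta)=-\sin\theta<0$. For comparison, the paper only exhibits the $\bar\omega\tau=\pi/2$, $m=3$ instance of this claim.
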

\noindent The numerical study confirms it quantitatively
(\S\ref{sub:num-clusters}: $r_3=0.90$ at $\gamma=0.005$, $\tau=T/4$, no saturation
involved, threshold consistent with \eqref{eq:kcm}). A saturating cap injects
additional harmonics on top of the waveform's own and can enrich these states
further.
These results correct the naive expectation that co-located jobs simply ``add
up.'' Whether a real fleet sits in the fundamental-repulsive regime (safe at $m=1$,
still exposed at higher $m$ if uniform) or the attractive regime (dangerous) is
fixed by the dimensionless group $(\bar\omega\tau,\;\eta h',\;\delta,\;
\Delta\omega/\bar\omega)$, which Section~\ref{sec:analysis} turns into
quantitative thresholds, mode by mode.

%==============================================================================
\section{Analysis}\label{sec:analysis}
% >>> RÉDIGÉ. Math indépendante de ta direction ; énoncés + esquisses, preuves
%     complètes en annexe (laissées en TODO pour que tu les valides).

\subsection{The incoherent baseline and the coherence amplification factor}
\label{sub:scaling}
Suppose first that the jobs are uncoupled ($K=0$) and their phases are
statistically independent and uniform. The aggregate is a sum of $N$ independent
zero-mean periodic signals, so its fluctuation has standard deviation
\begin{equation}\label{eq:sqrtN}
  \sigma\!\left[\Ptot-\bar P_{\mathrm{tot}}\right]
  \;=\; \Bigl(\sum_{j} \tfrac12\textstyle\sum_{m} c_{j,m}^2\Bigr)^{1/2}
  \;\sim\; \sqrt{N}\,,
\end{equation}
the familiar incoherent scaling. If instead the fundamental components phase-lock
with order parameter $r$, the coherent part of the swing -- at the fundamental
frequency, the narrowband component that couples to grid resonant modes -- has
amplitude $\sim r\,N\,c_1$, so the grid-visible peak-to-peak fluctuation scales as
\begin{equation}\label{eq:amplification}
  \frac{\text{coherent amplitude}}{\text{incoherent amplitude}}
  \;\sim\; r\,\sqrt{N}.
\end{equation}
We call $r\sqrt{N}$ the coherence amplification factor. It is the single
number that decides whether collective dynamics matter, and it cuts both ways
(Figure~\ref{fig:regimes}). In
the repulsive regime (Corollary~\ref{cl:repulsion}) the stable states minimize $r$: a
uniform splay has $r=0$ exactly and finite-$N$ leaves at most $r=O(N^{-1/2})$, so
$r\sqrt N$ does not grow; the anti-phase ordering moreover renders inter-job
correlations negative and pushes the aggregate variance below the
independent baseline -- entirely so for the perfect splay of identical jobs,
partially at the finite rate spreads of the numerics, where the aggregate sits
$16$--$35\%$ below the uncoupled control (\S\ref{sub:num-amp}). In the attractive regime
(Corollary~\ref{cl:attraction}) $r=O(1)$ and the factor grows as $\sqrt N$, so at the
hyperscale $N$ of a large campus even a modest $r$ converts a benign aggregate into
a dominant one. The same bookkeeping applies mode by mode: the coherent amplitude
at harmonic $m$ is $\sim r_m N c_m$, so a fleet locked in a higher harmonic
(Proposition~\ref{cl:clusters}) exhibits the $N$-scaling at frequency $m\bar\omega$ while
$r_1$ stays at the floor -- the amplification factor must be read on the locked
mode, not only on the fundamental. The remainder of the analysis is, in effect, the
computation of $r$ as a function of the model parameters, and in particular of the
sign and magnitude of $a_1(\tau)$ from Proposition~\ref{prop:coupling}.

\begin{figure}[t]\centering
\includegraphics[width=.85\linewidth]{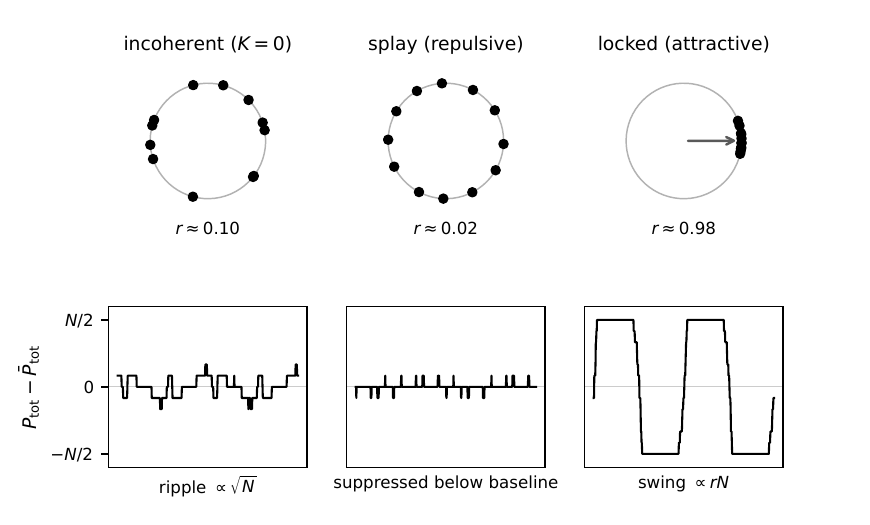}
\caption{The stake of \S\ref{sub:scaling}, schematically: $N=12$ two-level
jobs ($\delta=\tfrac12$), phase configuration (top, with the order-parameter
vector where it is macroscopic) and the exact aggregate deviation of that
configuration (bottom, common vertical scale). Independent phases leave an
incoherent ripple of scale $\sqrt N$ ($r=O(N^{-1/2})$); the splay ordering
favoured by repulsive coupling cancels the aggregate below the independent
baseline (partially so at finite rate spread, \S\ref{sub:num-amp}); the locked
state swings coherently with amplitude $\sim rN c_1$ -- the $r\sqrt N$
amplification of \eqref{eq:amplification}.}
\label{fig:regimes}
\end{figure}

\subsection{Critical coupling and the sync/splay boundary}\label{sub:critical}
% >>> Énoncé principal. La preuve complète = self-consistency généralisée.
\begin{proposition}[Onset of fleet-wide coherence]\label{prop:kc}
Let the natural rates $\omega_i$ have a unimodal, symmetric density $g\in C^2$ with
$g''(0)<0$ and width $\Delta\omega$ (a multimodal fleet of distinct job types splits into
independently-locking cohorts, one per mode), let
$a_1(\tau)=\cos\alpha=-\sin(\bar\omega\tau)$ be the synchronizing coefficient of the
coupling \eqref{eq:Gamma-sign} (Proposition~\ref{prop:coupling}), and let $K$ be the
common coupling scale of \eqref{eq:reduced}. Then:
\begin{enumerate}[label=(\roman*),leftmargin=1.9em,itemsep=2pt]
  \item if $a_1(\tau)<0$ (repulsive regime, Corollary~\ref{cl:repulsion}) the
  first-harmonic reduction \eqref{eq:reduced} admits no state with macroscopic
  fundamental coherence: $r=0$ is the only stable branch, for all $K>0$ (the full
  model can still lock in a higher harmonic, Proposition~\ref{cl:clusters} --
  a statement outside \eqref{eq:reduced});
  \item if $a_1(\tau)>0$ (attractive regime, Corollary~\ref{cl:attraction}) there is a
  critical coupling
  \begin{equation}\label{eq:kc-general}
    \Kc \;=\; \frac{2}{\pi\,a_1(\tau)\,g(0)}\,,
  \end{equation}
  above which a synchronized cluster with $r>0$ exists and is stable; $\Kc$ grows
  linearly with the heterogeneity $\Delta\omega$ of iteration rates.
\end{enumerate}
\end{proposition}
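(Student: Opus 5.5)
The plan is to run Kuramoto's self-consistency argument on the Sakaguchi form \eqref{eq:reduced}. First rewrite it in mean-field form,
\[
\dot\Phi_i=\omega_i^{\mathrm{eff}}+Kr\sin(\psi-\Phi_i-\alpha),
\]
with $re^{\ii\psi}$ the order parameter \eqref{eq:order}. Then pass to the frame rotating with the collective frequency $\Omega$ and set $\theta_i=\Phi_i-\psi$. In this frame each job feels a single-oscillator equation $\dot\theta_i=\nu_i-Kr\sin(\theta_i+\alpha)$, where $\nu_i=\omega_i^{\mathrm{eff}}-\Omega$.

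Jobs with $|\nu_i|\le Kr$ lock at $\sin(\theta_i+\alpha)=\nu_i/(Kr)$, taking the stable root. The others drift with the stationary density $\propto 1/|\dot\theta|$. Imposing $r=\int e^{\ii\theta}\,(\cdot)$ gives the complex self-consistency equation
\[
r=Kr\,e^{-\ii\alpha}\int_{-\pi/2}^{\pi/2}\cos\theta\,e^{\ii\theta}\,g(\Omega+Kr\sin\theta)\,\dd\theta\;+\;(\text{drift contribution}).
\]
Its real and imaginary parts fix $r$ and $\Omega$.

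For the onset I would linearize about $r=0$ rather than expand the branch. The cleanest route is the continuum (Strogatz--Mirollo) linear stability of the incoherent state, or equivalently the Ott--Antonsen reduction for the Lorentzian case. The perturbation $\propto e^{\lambda t}$ obeys the dispersion relation
\[
1=\frac{K}{2}e^{-\ii\alpha}\int\frac{g(\omega)}{\lambda+\ii\omega}\,\dd\omega .
\]
Taking the limit $\lambda\to0^+$ by Plemelj with symmetric $g$ gives $\int g/(\lambda+\ii\omega)\to\pi g(0)-\ii\,\mathrm{PV}\!\int g/\omega$. The principal value vanishes by symmetry once $\Omega$ is recentred at the mean.

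The marginal condition then reads $1=\tfrac{K}{2}\pi g(0)e^{-\ii\alpha}$ up to a frequency shift that absorbs $\sin\alpha$. A real crossing therefore requires
\[
\frac{K}{2}\pi g(0)\cos\alpha=1,
\]
which is $\Kc=2/[\pi a_1 g(0)]$, i.e.\ \eqref{eq:kc-general}. The $\sin\alpha$ part shifts the root along the imaginary axis, so it moves $\Omega$ and not the threshold, to leading order in the width.

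Part (i) follows from the same relation. If $\cos\alpha<0$, I would show that $\mathrm{Re}\,\lambda>0$ is impossible for every $K$. The map $\lambda\mapsto\int g/(\lambda+\ii\omega)$ sends the right half-plane into $\{\mathrm{Re}>0\}$. Multiplying by $e^{-\ii\alpha}$ with $|\alpha|>\pi/2$ rotates that image, and I would argue it cannot reach the positive real number $2/K$. This needs the image to lie within a sector of half-angle less than $\pi-|\alpha|$, which I expect to hold for unimodal symmetric $g$ and narrow spread. Separately, the self-consistency equation has no nontrivial solution branch, because its real part would demand $\cos\alpha>0$ at leading order.

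Part (ii) adds the scaling: $g(0)\propto1/\Delta\omega$ gives $\Kc\propto\Delta\omega$. Existence and stability of the branch above $\Kc$ come from expanding the self-consistency equation in $Kr$ using $g''(0)<0$, as in the classical supercritical computation.

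The main obstacle is the frustration in both parts. In (i), ruling out every unstable root needs the sector argument above, and for general $g$ the Sakaguchi principal-value term can in principle rotate the image. In (ii), stability of the bifurcating branch is only guaranteed at leading order. With the frequency-correlated $\alpha_j$ the Landau coefficient may change sign; the paper itself flags this as giving a first-order onset. I would therefore state stability for the narrow-spread, fixed-$\alpha$ reduction, and treat the linear threshold as exact only at that order.
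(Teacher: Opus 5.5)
Your argument holds at the paper's level of rigour, but it reaches the threshold by a different route.

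\textbf{Comparison of routes.} The paper obtains $\Kc$ from the Kuramoto self-consistency. It projects the locked population onto the order-parameter axis (the factor $\cos\alpha$), divides by $r$ and lets $r\to0^+$, and reads non-existence for $\cos\alpha\le0$ off the sign of that projection. It calls on the dispersion relation only to show that incoherence is stable in the repulsive window, solving it explicitly for the Lorentzian, $\lambda=\tfrac12Ka_1-\gamma$. You instead read $\Kc$ off the marginal root of the Strogatz--Mirollo relation, and keep the self-consistency for non-existence in (i) and for the cubic branch in (ii).

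Your route buys two things. It characterises $\Kc$ directly as the loss of stability of incoherence, which is the forward threshold the numerics measure. It also keeps the full factor $e^{-\ii\alpha}$ and the drift term, whose imaginary part is what fixes $\Omega$. The paper's real projection, and its general-$g$ relation with $a_1$ in place of $e^{-\ii\alpha}$, drop this reactive part. The paper's route, in exchange, yields the branch and its amplitude from a single equation.

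\textbf{Two corrections.} First, the obstacle you anticipate in (i) does not exist, and the sector condition you state is not the relevant one. Write the relation as $G(\lambda):=\int g(\omega)(\lambda+\ii\omega)^{-1}\dd\omega=\tfrac{2}{K}e^{\ii\alpha}$.
\begin{itemize}
\item For $\operatorname{Re}\lambda>0$ the left side has positive real part for every probability density $g$, as you yourself note.
\item The right side has real part $\tfrac{2}{K}\cos\alpha\le0$.
\end{itemize}
So there is no unstable root for any $K>0$ and any $g$. You need neither unimodality, nor narrow spread, nor control of the principal-value term: the rotated image $e^{-\ii\alpha}\{\operatorname{Re}>0\}$ already misses the positive real axis whenever $\cos\alpha\le0$.

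The same holds for non-existence. Multiply your complex self-consistency by $e^{\ii\alpha}$, set $\phi=\theta+\alpha$, and take real parts. This gives $r\cos\alpha=Kr\int_{-\pi/2}^{\pi/2}\cos^2\phi\,g(\Omega+Kr\sin\phi)\,\dd\phi$ \emph{exactly}, because drifters have $\langle\cos\phi\rangle=0$ identically ($\cos\phi/(\nu-Kr\sin\phi)$ is a total derivative). Non-existence in (i) is therefore exact, not merely leading-order.

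Second, your justification of the threshold is off. For $\sin\alpha\ne0$ the marginal root sits at $\lambda=\ii\mu_c$ with $\mu_c=O(\Kc\sin\alpha)=O(\Delta\omega\tan\alpha)$, which does not shrink relative to the width. So ``leading order in the width'' is not the operative approximation.

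What saves \eqref{eq:kc-general} is the following. For symmetric $g$, $\operatorname{Re}[1/G(\ii\mu)]=1/(\pi g(0))+O(\mu^2)$, with $G(\ii\mu)$ taken as the limit from $\operatorname{Re}\lambda>0$. For the Lorentzian, $1/G(\ii\mu)=\gamma+\ii\mu$ exactly. The formula is therefore exact for the Lorentzian family and carries $O(\tan^2\alpha)$ relative corrections otherwise. This is the same approximation the paper makes when it absorbs the reactive shift into $\Omega$.
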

\begin{proof}[Proof sketch]
Insert the mean-field reduction \eqref{eq:meanfield}, split the population into
phase-locked ($|\Omega-\omega_i^{\mathrm{eff}}|\le Kr$) and drifting subpopulations,
and impose self-consistency $r=\langle\cos\xi_i\rangle$. The locked branch
bifurcates from $r=0$ only for $a_1(\tau)=\cos\alpha>0$ and yields
\eqref{eq:kc-general}; for $a_1(\tau)\le0$ the closure has only the $r=0$ root, with
the drifting density stationary and stable. Full computation, including the
amplitude and the inertia/delay-induced first-order transition, in
Appendix~\ref{app:critical}.
\end{proof}
\begin{corollary}[Heterogeneity protects the grid]\label{cor:hetero}
Fleets of diverse jobs (broad $g$, large $\Delta\omega$) have large $\Kc$ and are
hard to synchronize; the dangerous case is a fleet of near-identical jobs
 -- many copies of one foundation-model run, or fleets of identical fine-tunes -- 
where $\Delta\omega\to0$ and $\Kc\to0$, so that any attractive coupling locks them.
\end{corollary}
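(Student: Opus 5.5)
The corollary follows from Proposition~\ref{prop:kc}(ii) once the dependence of $g(0)$ on the rate spread is written out. I would carry it out in three steps.

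\textbf{Step 1: make the width explicit.} Write the rate density in scaled form, $g(\omega)=\Delta\omega^{-1}\,\hat g\bigl((\omega-\bar\omega)/\Delta\omega\bigr)$, with $\hat g$ a fixed unimodal symmetric profile of unit width. Then $g(0)=\hat g(0)/\Delta\omega$, and \eqref{eq:kc-general} becomes
\begin{equation*}
  \Kc=\frac{2\,\Delta\omega}{\pi\,a_1(\tau)\,\hat g(0)} .
\end{equation*}
This gives the two limits directly. For fixed $a_1(\tau)>0$, $\Kc$ grows linearly in $\Delta\omega$: a diverse fleet is hard to synchronize. As $\Delta\omega\to0$, $\Kc\to0$. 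For the Lorentzian used in the numerics the constant is explicit, $\Kc=2\gamma/a_1(\tau)$.

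\textbf{Step 2: read ``any attractive coupling locks them''.} Fix $a_1(\tau)>0$ and any $K>0$. By Step~1, every spread $\Delta\omega<\pi a_1(\tau)\hat g(0)K/2$ gives $K>\Kc$. Proposition~\ref{prop:kc}(ii) then supplies a stable locked branch with $r>0$. I would also check the degenerate limit of identical rates directly. There, \eqref{eq:reduced} is a Sakaguchi model with $\cos\alpha>0$. Linearizing about the in-phase state gives transverse eigenvalue $-K\cos\alpha<0$, so that state is stable for every $K>0$. This is consistent with the limit taken in Step~1.

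\textbf{Step 3: the heterogeneity-induced wash-out.} The text after Figure~\ref{fig:signs} claims a second mechanism, through the averaged coefficient $\langle-\sin(\omega_j\tau)\rangle_g$. I would compute this average to first order in $\Delta\omega\,\tau$. Its magnitude is $|a_1(\tau)|$ times a characteristic-function factor, for example $e^{-\gamma\tau}$ in the Lorentzian case. This shrinks the effective $a_1$ and raises $\Kc$ further, so the monotonicity in $\Delta\omega$ holds in the same direction.

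\textbf{Main obstacle.} The difficulty is legitimacy, not algebra. The corollary compares fleets at fixed $K$ and fixed $\tau$. However, the reduction \eqref{eq:reduced} assumed a narrow spread, and the fleet's frustration $\alpha_j=\omega_j\tau+\pi/2$ is frequency-correlated. Proposition~\ref{prop:kc} guarantees only onset, not a continuous one. I would therefore state the corollary as a statement about the threshold $\Kc$ alone. I would flag that for diverse fleets the first-order, hysteretic onset (Section~\ref{sub:order}) can produce locking below $\Kc$. Consequently ``hard to synchronize'' means a large linear-stability threshold, not a guarantee of incoherence.
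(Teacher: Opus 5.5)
Your Step~1 is exactly the paper's argument (the closing lines of the proof in Appendix~\ref{app:critical}): for a scale family $g(\omega)=\Delta\omega^{-1}g_0(\omega/\Delta\omega)$ one has $g(0)\propto1/\Delta\omega$, so $\Kc=2/[\pi a_1(\tau)g(0)]$ grows linearly in $\Delta\omega$ and vanishes as $\Delta\omega\to0$; your Steps~2--3 and the hysteresis caveat are correct supplements, not needed ingredients. One caution: the $e^{-\gamma\tau}$ wash-out of Step~3 is precisely the averaged model's $e^{\gamma\tau}$ threshold inflation of \eqref{eq:oaonset}, which the paper finds absent in the full delay model, so keep it as a side remark rather than part of the proof.
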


\subsection{Order of the transition: an intrinsic first-order mechanism}\label{sub:order}
The continuous bifurcation of Proposition~\ref{prop:kc} was obtained in the
narrow-spread reduction, where the source frustration $\alpha_j=\omega_j\tau+\pi/2$ of
\eqref{eq:reduced} is replaced by the common value $\alpha=\bar\omega\tau+\pi/2$. That
replacement discards the one feature that fixes the order of the transition: the
frustration is correlated with the natural frequency, with slope
$\mathrm{d}\alpha_j/\mathrm{d}\omega_j=\tau$. Folding it back, the mean field closes not
on the ordinary order parameter but on a twisted one,
\begin{equation}\label{eq:twist}
  R\,e^{\ii\Psi}=\frac1N\sum_{j} e^{\ii(\Phi_j-\alpha_j)}
  =e^{-\ii\pi/2}\,\frac1N\sum_{j} e^{\ii(\Phi_j-\omega_j\tau)},
  \qquad
  \dot\Phi_i=\omega_i^{\mathrm{eff}}+K\,R\,\sin(\Psi-\Phi_i),
\end{equation}
in which each job contributes to coherence weighted by the frequency-dependent rotation
$e^{-\ii\omega_j\tau}$ (Appendix~\ref{app:critical}). The consequence is a
self-amplifying selection. Near onset only the core $\omega_j\approx\bar\omega$ locks;
its members share almost the same rotation and add coherently, so $R$ builds faster than
linearly. As $K$ grows the locked band widens in $\omega_j$, the rotations
$e^{-\ii\omega_j\tau}$ dephase, and the marginal contribution to $R$ turns destructive.
This center-coherent / edge-destructive asymmetry is precisely the mechanism behind
explosive synchronization \citep{gomezgardenes2011,rodrigues2016}, here driven
by a frequency--frustration correlation rather than the frequency--degree correlation
of network models \citep{rodrigues2016,inertia2019}; the closest relative is the
shear diversity of \citet{montbrio2011}, where distributed nonisochronicity
likewise correlates each oscillator's phase lag with its dynamics -- there the
correlation suppresses synchrony, whereas here it has a specific physical origin and
slope, $\mathrm{d}\alpha_j/\mathrm{d}\omega_j=\tau$, and turns the onset subcritical.
Crucially it is intrinsic to the load
dynamics: it requires neither mechanical inertia nor an externally added delay reversal.

\begin{figure}[t]\centering
\includegraphics[width=.72\linewidth]{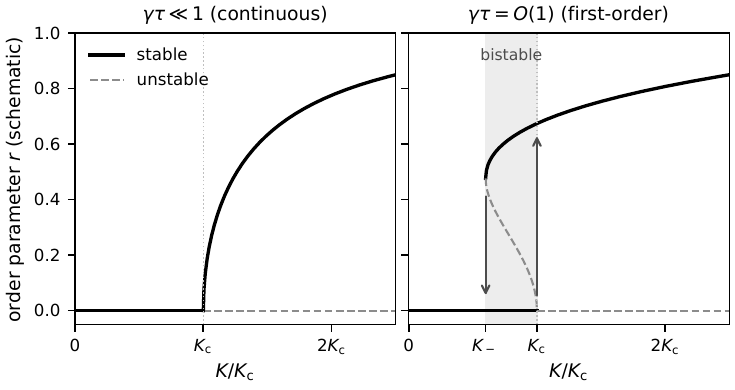}
\caption{Order of the onset -- an interpretive sketch, not a computed
bifurcation diagram (the measured continuations are in
Figure~\ref{fig:e7}). Left: for $\gamma\tau\ll1$ the bifurcation is
supercritical -- $r$ grows continuously from zero at $\Kc$. Right: for
$\gamma\tau=O(1)$ the frequency-correlated frustration \eqref{eq:twist} bends
the branch subcritical: a fold at $K_-$ (drawn at $K_-=0.6\,\Kc$, the
persistence measured in \S\ref{sub:num-order}), an unstable middle branch
(dashed, drawn from the normal-form picture -- the simulations observe only the
two stable branches), and a bistable window (shaded) with hysteretic jumps
(arrows). Inside the window a fleet tipped into the locked branch stays locked
-- the monitoring obligation of \S\ref{sec:discussion}.}
\label{fig:bif}
\end{figure}

The prediction is sharp and falsifiable: the onset, continuous for a near-identical
fleet, should turn first-order (subcritical, hysteretic) as the rate spread
$\Delta\omega$ broadens (Figure~\ref{fig:bif}). The full-model continuation of \S\ref{sub:num-order} bears
it out at the tested protocol -- the hysteresis gap sits at the finite-run floor for $\gamma\lesssim0.05$ and widens
monotonically with $\gamma$ up to the spread at which the rates one half-width
out, $\omega=\bar\omega+\gamma$, leave the attractive sign ($\gamma\approx0.35$;
the Lorentzian has no finite edge -- $\bar\omega\pm\gamma$ marks its half-mass
points -- so this is a landmark of the density, not a boundary), the gap then
saturating and receding slightly, the locked branch surviving down to a lower
turning point $K_-<\Kc$ (Table~\ref{tab:e7}). The Ott--Antonsen reduction of
Appendix~\ref{app:oa} supplies the thermodynamic-limit frame for this
observation: the onset is a Stuart--Landau bifurcation whose Landau coefficient
carries, by parity in $\tau$, a frustration-generated contribution at order
$(\gamma\tau)^2$, and the transition is first-order wherever that coefficient
turns negative. We have not evaluated the coefficient explicitly -- the cubic
closure on the OA manifold is left open -- so the analytical statement is the
structure of the normal form, and the evidence that the sign does change
is the finite-$N$ continuation itself. The interplay with
Corollary~\ref{cor:hetero} is the subtle part and must be stated plainly. Heterogeneity
raises the forward threshold $\Kc\propto\Delta\omega$ (harder to ignite from
incoherence -- the protective reading) and simultaneously opens a subcritical
lower branch $K_-<\Kc$ (a locked fleet, once tipped, stays locked over a widening range
below $\Kc$). A diverse fleet is thus harder to synchronize spontaneously but
stickier once a transient pushes it into the locked basin; the object that
governs safety is then the basin of attraction of $r=0$, not the location of $\Kc$ alone.

Two further mechanisms can deepen the bistability but are not required for it. Inertia
enters only at the grid nodes that mediate the coupling; whether their adiabatic
elimination leaves an effective second-order term for the load phases (and is the
microscopic origin of $\tau$) requires modelling them explicitly, which we do not do
here -- Appendix~\ref{app:grid} takes the first step, locating the origin of $\tau$ in the
node phase lag and showing the analog node alone is stabilizing, so that attraction
requires the total lag to exceed $\pi$. And delay-coupled first-order oscillators
independently support hysteresis \citep{yeung1999}. The intrinsic frequency--frustration mechanism,
however, operates without either.

\subsection{Engineered repulsion: phase-scattering scheduling}\label{sub:scatter}
Corollary~\ref{cl:repulsion} suggests a control lever entirely on the software
side, with two distinct ingredients whose statuses differ. Detuning
iteration rates (spreading $g$) raises every mode threshold at once --
$\Kc^{(m)}\propto\Delta\omega$ for all $m$ by \eqref{eq:kcm} -- and this
protection is independent of the control delay, since no harmonic escapes the
scaling; it is the standing, load-bearing part of the mitigation.
Fixed phase offsets (starting the fleet in a splay configuration) select
the incoherent basin, which hysteresis makes relevant (\S\ref{sub:order}); they
are not self-maintaining -- detuned rates make them drift by construction, and
holding an exact splay would require closed-loop scheduling -- so their role is
initialization and recovery, not standing protection. We state the epistemic
status plainly: phase-scattering is a mitigation implied by the model,
whose scheduler integration, offset maintenance, and interaction with
checkpointing remain to be engineered and tested; it is not a guarantee, and the
strong-coupling regime (\S\ref{sub:num-regime}) and higher-harmonic locks it
does not raise above threshold remain its failure modes to check. The cost is a
controlled loss of throughput. To leading order, forcing a
job to run at a detuned rate $\bar\omega\pm\delta\omega$ rather than at its
compute-optimal rate sacrifices a fraction $\sim\delta\omega/\bar\omega$ of its
useful work, so spreading the population over a width $\langle|\Delta\omega|\rangle$
costs a throughput penalty
\begin{equation}\label{eq:penalty}
  \frac{\Delta\text{throughput}}{\text{throughput}} \;\sim\; \frac{\langle|\Delta\omega|\rangle}{\bar\omega}
  \;\sim\; \frac{\Delta\omega}{\bar\omega},
\end{equation}
the very heterogeneity that buys safety in Corollary~\ref{cor:hetero} -- making the
safety/throughput trade-off explicit and tunable. This is an order-of-magnitude
estimate; a rigorous accounting of the offset and checkpoint overhead, which would
turn the lever into a stand-alone systems contribution, is left to future work.

%==============================================================================
\section{Related work}\label{sec:related}
Our contribution sits at the intersection of three literatures; in each we state
precisely what we borrow and what is new.

\paragraph{AI loads as grid forcing, and existing mitigations.}
A recent line treats a training facility as a single exogenous oscillation forcing
the grid, capable of exciting inter-area and sub-synchronous modes
\citep{ko2025widearea,forced2025,microsoft2025powerstab,powerelec2025}. The
proposed mitigations are of two kinds: grid-side storage \citep{hybridess2025},
and facility-side software and firmware that reshapes each job's power
profile individually -- workload insertion during communication phases,
millisecond-scale GPU power smoothing \citep{microsoft2025powerstab,oracle2025smoothing}. We take the
load model from this work but change the question: both mitigation families
treat the facility's waveform as a given object to flatten, whereas we ask
whether the waveforms of many co-located jobs can phase-lock through the
operator's own control stack. The distinction is not
cosmetic -- it is the difference between an aggregate that grows as $\sqrt N$ and one
that grows as $N$ -- and it adds a lever orthogonal to both existing families:
the relative phases of the fleet, actuated at the scheduler.

\paragraph{Spatial correlation of data-center loads.}
Closest in spirit, \citet{modal2026} analyse spatial load correlation in
data-center-dominated grids by linear modal methods, taking the correlation as
given. We instead ask whether nonlinear collective dynamics generate it, and
identify the physical coupling that would: correlation is, in our account, an emergent
order parameter, not an input.

\paragraph{Synchronization in power systems.}
The Kuramoto view of power networks casts generators as phase oscillators
coupled electromechanically through the swing equation
\citep{filatrella2008,motter2013,dorfler2013}. We invert the roles: our oscillators
are loads, they carry no mechanical inertia, and the coupling is
computational (load-dependent throttling), not electromechanical. The swing
equation reappears only at the grid nodes that mediate the interaction
(\S\ref{sub:order}).

\paragraph{Distributed-training power.}
We reconstruct the per-job waveform \eqref{eq:waveform} from characterizations of the
compute/communication cycle and its power signature
\citep{megascale2024,lee2025overlap}; the overlap of collectives with computation
\citep{lee2025overlap} is exactly the effect that bounds the harmonic content
(Assumption~\ref{ass:waveform}).

\paragraph{Generalized, delayed, and pulse-coupled synchronization.}
The machinery we use -- non-sinusoidal (Daido) coupling, Sakaguchi phase frustration,
explosive transitions, and the pulse-coupled limit -- is classical
\citep{sakaguchi1986,daido1996,mirollo1990,gomezgardenes2011,rodrigues2016,yeung1999,tanaka1997,pazo2005,hansel1993,nakao2016}, and
frequency-correlated phase lags have a direct antecedent in the distributed-shear
model of \citet{montbrio2011}. Our novelty is not the
mathematics but its derivation from a physical mechanism: the frustration
$\alpha=\bar\omega\tau+\pi/2$, the sign $a_1=-\sin\bar\omega\tau$, its per-harmonic
generalization \eqref{eq:alpham}, and the correlation slope
$\mathrm{d}\alpha_j/\mathrm{d}\omega_j=\tau$ are not posited, they fall out of the
throttle response and its delay.

%==============================================================================
\section{Numerical study}\label{sec:numerics}
We test the predictions by integrating the unaveraged delay model
\eqref{eq:waveform}--\eqref{eq:throttle} directly -- no rotating-wave averaging, no
slow-phase substitution, the delay handled as a true history -- so that the
reduction (Prop.~\ref{prop:coupling}), the threshold (Prop.~\ref{prop:kc}), and the
regimes of validity are checked rather than assumed. One substitution is retained
and stated up front: except where noted, the throttle response is the
always-binding linearization of Assumption~\ref{ass:throttle} (its exact
status, and the rectified physical response that probes the opposite regime, are
specified in the protocol below). Results are reported as tables,
whose every entry regenerates from \texttt{run.py}, and as four figures
(\texttt{figures.py}) reserved for the claims that are about shape -- the sign
reversal, the departure from the $\sqrt N$ baseline, the harmonic lock, and the
hysteresis loop -- together with a synthesizing $(\bar\omega\tau,K)$ phase diagram
(\S\ref{sub:num-phase}, \texttt{phase\_diagram.py}); the robustness of the protocol itself (time step, seeds, Fourier
truncation, rate support, rectified response) is quantified in
\S\ref{sub:num-robust}.

\paragraph{Protocol.}
The fleet is integrated with a fixed-step RK4 scheme, time step
$\Delta t=T/80$ to $T/120$ depending on the experiment (recorded per table in the
orchestrator); the control delay enters as a delay-differential equation through
a linearly interpolated history of the scalar aggregate $\Ptot(t-\tau)$,
initialized flat at its $t=0$ value. Runs last $90$--$140$ iteration periods and
all quoted statistics discard the first half as transient; halving the time step
and doubling it leave the reported observables unchanged at the quoted precision
(\S\ref{sub:num-robust}). Per-job waveforms and susceptibilities are truncated
Fourier reconstructions of the duty-$\delta$ pulse with $M$ harmonics ($M=1$
recovers the pure sinusoid; we use $M=5$ unless stated). We emphasize that the
$M$-harmonic system is the model being integrated -- a smooth member of
the regularized family of Appendix~\ref{app:reduction} -- not an imperfect
rendering of the discontinuous square wave: at $M=5$ the reconstructed pulse
overshoots to $1.09$ and undershoots to $-0.09$ (the Gibbs bounds; values per $M$
in \S\ref{sub:num-robust}), no clipping is applied to $\chi$, and the results are
stable under increasing $M$ (\S\ref{sub:num-robust}). The throttle factor
$1-\eta\chi h$ is clamped below at zero (a throttled job slows, or stalls, but
never runs backward), and the clamp-activation fraction is reported wherever it
matters (\S\ref{sub:num-regime}).

Status of the linear throttle. Unless stated, the response is linear,
$h(x)=\beta x$, with the cap referenced to the mean aggregate ($\Pcap=N\delta$),
so that the argument $x$ measures the aggregate fluctuation $\Delta\Ptot$. This
realizes the always-binding linearization of the deeply-oversubscribed
regime of Assumption~\ref{ass:throttle}: when the operating overshoot $s_0$
exceeds the fluctuation amplitude, the physical response reads
$h=h(s_0)+h'\,\Delta\Ptot$ over the whole cycle; the constant $h(s_0)$
renormalizes the rates to $\omega^{\mathrm{eff}}$ and couples nothing
(Appendix~\ref{app:reduction}), and the fluctuating part is exactly the simulated
coupling. The simulated $\omega$ is therefore $\omega^{\mathrm{eff}}$, and
excursions of the throttle factor above one are speed-ups relative to the
throttled mean rate, not above the unthrottled rate -- the physical behaviour of
an always-binding cap that momentarily relaxes. What the linear response does
not represent is the rarely-binding regime ($s_0\le0$), where the physical
interaction is rectified; that regime is probed separately with
$h(x)=\beta\max(x,0)$ (\S\ref{sub:num-clusters}, \S\ref{sub:num-robust}), and the
saturating response $h=\beta\tanh(\cdot/x_0)$ is reserved for the cluster scan
(\S\ref{sub:num-clusters}).

Rates, units, observables. Natural rates are deterministic quantiles of a
Lorentzian of half-width $\gamma$, so $g(0)=1/(\pi\gamma)$ and the
common-frustration onset is $\Kc=2\gamma/\!\cos\alpha$. The Lorentzian's tails
put a small mass at negative rates -- fraction
$\tfrac12-\tfrac1\pi\arctan(\bar\omega/\gamma)$, i.e.\ $1.6\%$ at $\gamma=0.05$,
$3.2\%$ at $0.1$, $14.8\%$ at $0.5$ -- which is unphysical for jobs (phases
running backward). We keep the untruncated family because the closed forms and
the OA reduction are exact for it, verify in \S\ref{sub:num-robust} that a
positive-truncated Lorentzian (support $(0,2\bar\omega)$, mean preserved)
reproduces the headline results, and read the broadest spreads
($\gamma\ge0.35$) as stress tests of the theory rather than fleet models. Unless
stated, $N=300$--$400$, $\delta=\tfrac12$, $M=5$. We work in normalized units
$\bar\omega=1$, $T=2\pi$, $\tau^\star=\pi$, and quote the time-averaged order
parameters $r_1,r_2,r_3$ over the second half of each run. The effective coupling
$K=\tfrac12N\bar\omega\eta h'\chi_1 c_1$ of \eqref{eq:reduced} is the swept knob. All
seeds are fixed and every table is regenerated by a single orchestrator script
(seed sensitivity in \S\ref{sub:num-robust}); raw
outputs accompany the paper (see \S\ref{sec:reproducibility}).

\paragraph{Engine validation.}
With the coupling replaced by a plain Kuramoto interaction the measured onset matches
the textbook $\Kc=2\gamma$: at $\gamma=0.05$ the order parameter is $r=0.04$ at
$K=0.4\,\Kc$, $r=0.10$ at $K=\Kc$, and $r=0.71$ at $K=2\,\Kc$ -- against the
Lorentzian closed form $r=\sqrt{1-\Kc/K}=0.71$ -- validating the
integrator and the order-parameter estimator before any model-specific run. This
check exercises the undelayed, sinusoidal limit only; the delay history, the
interpolation, and the pulse reconstruction are exercised by the convergence and
robustness checks of \S\ref{sub:num-robust} and by the agreement of the delayed
onset with the independent OA prediction \eqref{eq:oaonset}
(\S\ref{sub:num-kc}).

\subsection{Sign of the coupling and the critical delay (tests Prop.~\ref{prop:coupling})}
\label{sub:num-sign}
Sweeping the control delay at fixed coupling ($K=0.6$, $\gamma=0.05$,
Table~\ref{tab:e1}) reproduces the predicted sign reversal: the fleet is incoherent
across the repulsive window and synchronized throughout the
attractive window $\pi<\bar\omega\tau<2\pi$, with the boundary at
$\tau^\star=T/2=\pi$. Averaging over each window (excluding the marginal reactive
endpoints $|a_1|<0.15$) gives $\bar r_1=0.03$ over the repulsive window up to
$\bar\omega\tau=0.83\pi$, versus $\bar r_1=0.86$
(attractive). Two departures from the idealized picture are visible in
Figure~\ref{fig:e1} and are physics, not noise. The exactly-reactive points
$a_1\!\approx\!0$ ($\tau=0,\pi,2\pi$) are
non-generic: there the coupling is conservative and does not relax to splay, so $r_1$
is elevated -- consistent with the statement that repulsion sets in for
short but nonzero delay (Corollary~\ref{cl:repulsion}), not at $\tau=0$. And the
last nominally repulsive point, $\bar\omega\tau=0.92\pi$ ($a_1=-0.26$), reaches
$r_1=0.58$: because the frustration is heterogeneous,
$\alpha_j=\omega_j\tau+\pi/2$, the fast tail of $g$ crosses its individual
sign reversal ($\omega_j\tau>\pi$) before the mean rate does -- at this point
$17\%$ of the jobs are already attractively coupled and nucleate fleet-wide
coherence. The repulsive window is therefore eroded from above by a boundary
layer whose width scales with the rate spread (the attractive fraction is
$\tfrac12-\tfrac1\pi\arctan[(\pi/\tau-\bar\omega)/\gamma]$ for the Lorentzian
$g$): a diverse fleet loses protection slightly before $\tau^\star$, one
more face of the frequency--frustration correlation of \S\ref{sub:order}. Deep
in the window ($\bar\omega\tau\le0.83\pi$ here) the protection is intact.

\begin{table}[h]\centering\small
\begin{tabular}{ccc@{\hskip 2em}ccc}
\toprule
$\bar\omega\tau/\pi$ & $a_1$ & $r_1$ & $\bar\omega\tau/\pi$ & $a_1$ & $r_1$\\
\midrule
0.17 & $-0.50$ & 0.032 & 1.17 & $+0.50$ & 0.784\\
0.33 & $-0.87$ & 0.027 & 1.33 & $+0.87$ & 0.857\\
0.50 & $-1.00$ & 0.028 & 1.50 & $+1.00$ & 0.872\\
0.67 & $-0.87$ & 0.037 & 1.67 & $+0.87$ & 0.882\\
0.83 & $-0.50$ & 0.040 & 1.83 & $+0.50$ & 0.879\\
\bottomrule
\end{tabular}
\caption{E1 -- order parameter $r_1$ versus control delay. Repulsive window (left)
stays incoherent; attractive window (right) synchronizes; the boundary is
$\tau^\star=T/2$.}\label{tab:e1}
\end{table}

Figure~\ref{fig:e1} overlays the full delay sweep on the derived coefficient
$a_1(\tau)=-\sin\bar\omega\tau$: the measured order parameter tracks the
sign of the theoretical curve point by point, incoherent wherever $a_1<0$ and
locked wherever $a_1>0$, which is the content of Proposition~\ref{prop:coupling}
read directly off the unaveraged model.

\begin{figure}[t]\centering
\includegraphics[width=.72\linewidth]{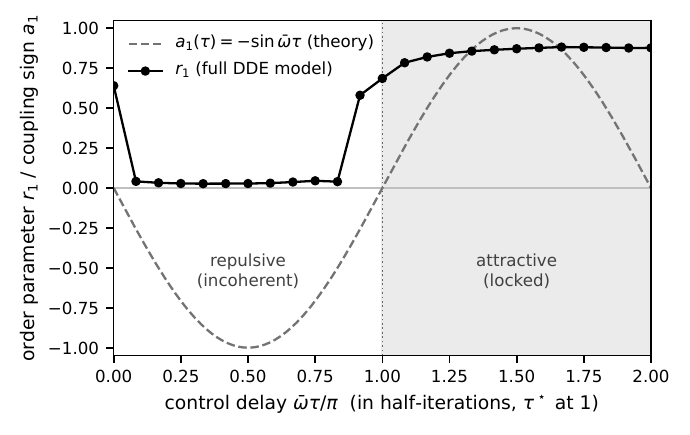}
\caption{E1 -- measured order parameter $r_1$ (dots, unaveraged DDE model) against the
derived coupling coefficient $a_1(\tau)=-\sin\bar\omega\tau$ (dashed) across one
period of the control delay. The fleet is incoherent across the repulsive
window and locked throughout the attractive window (shaded),
with the reversal at the predicted $\tau^\star=T/2$. Two elevated features are
explained in the text: the exactly-reactive points ($\tau=0,\pi$, conservative
coupling), and the boundary-layer point at $\bar\omega\tau=0.92\pi$, where the
fast tail of the rate distribution is already individually attractive and
nucleates coherence. Regenerated by \texttt{sim/figures.py} from
\texttt{e1\_sign.csv}.}
\label{fig:e1}
\end{figure}

\subsection{Critical coupling and heterogeneity (tests Prop.~\ref{prop:kc}, Cor.~\ref{cor:hetero})}
\label{sub:num-kc}
At $\tau=3\pi/2$ (pure attraction, $\cos\alpha=1$) two analytical values frame the
onset: the leading-order threshold $\Kc=2\gamma$ \eqref{eq:kc-general}, and the
exact onset of the averaged model,
$\Kc^{\mathrm{avg}}=2\gamma\,e^{\gamma\tau}$ \eqref{eq:oaonset}. These are not
competing predictions for one system but statements about two members of the
model hierarchy (Appendix~\ref{app:oa}); the full delay model is expected to sit
at the leading-order value. The coarse scan behaves accordingly: for
$\gamma=0.05$ ($\Kc=0.10$) the order parameter rises from
$r_1=0.05$ at $K=0.5\,\Kc$ to $r_1=0.16$ at $K=\Kc$ and $r_1=0.57$ at $1.5\,\Kc$,
and the crossing of the marker $r_1>0.3$ lands at
$K^\star=0.057,\,0.113,\,0.226$ for $\gamma=0.025,\,0.05,\,0.10$, against
$2\gamma=0.05,\,0.10,\,0.20$. The marker necessarily sits above the bifurcation
(it detects $r_1=0.3$, not $0^+$) and the sweep grid is coarse, so this scan
establishes the linear growth of the onset with the rate spread
(Corollary~\ref{cor:hetero}) rather than a threshold to percent accuracy. The
fine sweep of \S\ref{sub:num-robust} supplies the accuracy: at $\gamma=0.05$,
$N=1200$, steps of $0.0075$ in $K$, the measured branch fits the supercritical
closed form $r_1=\sqrt{1-\Kc/K}$ with $\Kc=0.100$--$0.101$ across
$K=0.12$--$0.15$, against $2\gamma=0.100$ and
$\Kc^{\mathrm{avg}}=0.127$: the full model's onset is the leading-order
$2\gamma$, and the $e^{\gamma\tau}$ inflation is confirmed to be a property of
the averaged model, not of the delay dynamics (Appendix~\ref{app:oa}).

\subsection{Coherence amplification (tests \S\ref{sub:scaling})}\label{sub:num-amp}
Table~\ref{tab:e3} reports the aggregate fluctuation normalized by $\sqrt N$,
together with an uncoupled control ($K=0$, same estimator and window), at
$\gamma=0.05$ and $K=0.6$ -- a rate spread above the mode-3 threshold
$\gamma^\star=K/18$ of \eqref{eq:kcm}, so that the repulsive window is genuinely
protective here (the sub-threshold behaviour is the subject of
\S\ref{sub:num-clusters}). In the repulsive regime the fluctuation is flat in $N$,
while in the attractive regime it grows roughly fourfold from $N=50$ to $N=800$
-- the $r\sqrt N$ amplification of \eqref{eq:amplification}, here with
$r\!\approx\!0.88$. The independent-phases value per job is
$\bigl(\tfrac12\sum_m c_m^2\bigr)^{1/2}=0.48$ (\eqref{eq:sqrtN} at $M=5$,
$\delta=\tfrac12$, unit amplitude; the untruncated square wave gives
$\sqrt{\delta(1-\delta)}=0.5$); the $K=0$ control measures $0.42$--$0.45$
against it, the shortfall being the finite estimation window (slow beats between
nearby rates are under-sampled). The same estimator and window are applied to
every branch, so this bias cancels to first order in branch-to-control
comparisons, and the control's shortfall ($6$--$12\%$ of the theoretical
baseline) bounds its size; against the theoretical baseline the suppression
below would read larger, so quoting it against the control is the conservative
choice. The repulsive branch sits a further
$16$--$35\%$ below the control: fast repulsive capping does not merely fail to
amplify, it actively suppresses the aggregate below independence --
the partial form of the perfect-splay cancellation of \S\ref{sub:scaling},
visible also in $r_1$ (e.g.\ $0.016$ at $N=800$, half the random-phase floor
$N^{-1/2}\approx0.035$). Figure~\ref{fig:e3} shows the three curves; at much
smaller spreads the harmonic lock takes over instead
(\S\ref{sub:num-clusters}).

\begin{table}[h]\centering\small
\begin{tabular}{ccccc}
\toprule
$N$ & uncoupled ($K=0$) & repulsive $\sigma[\Ptot]/\sqrt N$ & attractive $\sigma[\Ptot]/\sqrt N$ & attractive $r_1$\\
\midrule
50  & 0.45 & 0.29 & 2.60 & 0.88\\
200 & 0.44 & 0.36 & 5.20 & 0.87\\
800 & 0.42 & 0.35 & 10.4 & 0.87\\
\bottomrule
\end{tabular}
\caption{E3 -- the repulsive aggregate fluctuation stays flat in $N$, below the
uncoupled control; the attractive fluctuation grows as
$r\sqrt N$.}\label{tab:e3}
\end{table}

\begin{figure}[t]\centering
\includegraphics[width=.72\linewidth]{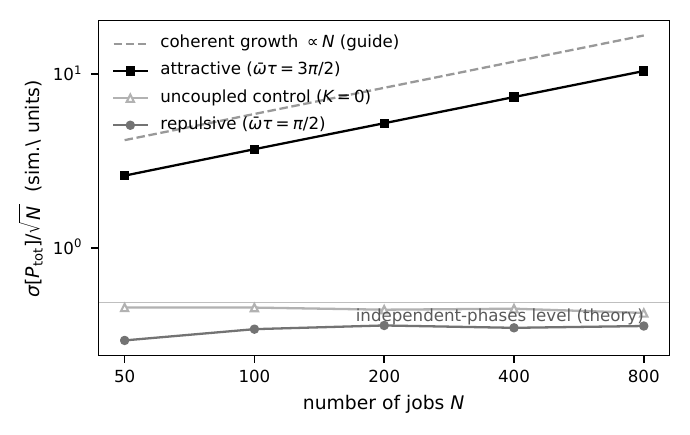}
\caption{E3 -- aggregate power fluctuation divided by $\sqrt N$, versus fleet size.
The attractive fleet follows the coherent $\propto N$ guide (dashed); the
uncoupled $K=0$ control (open triangles) sits just under the theoretical
independent-phases line (solid grey, $0.48$; finite-window effect); the repulsive
fleet is flat in $N$ and $16$--$35\%$ below the control -- active partial
splay suppression, not mere non-amplification. Log--log axes.
Regenerated by \texttt{sim/figures.py} from \texttt{e3\_amplification.csv}.}
\label{fig:e3}
\end{figure}

\subsection{Order of the transition and its heterogeneity dependence (tests \S\ref{sub:order})}\label{sub:num-order}
For a near-identical fleet ($\gamma=0.05$), sweeping $K$ up and then down at
$\tau=3\pi/2$ with warm-started phases, the up and down branches coincide: the
base-model continuation shows a maximum gap of $0.08$, located at the transition and
attributable to finite-run relaxation, and the corresponding $\gamma=0.05$ member of
the heterogeneity series shows a residual of $0.10$, sitting on the upper plateau
(Figure~\ref{fig:e7}, left) -- both at the relaxation floor, neither a loop. The
bifurcation is continuous, exactly as the cubic analysis of
Appendix~\ref{app:critical} predicts in the narrow-spread limit.

This is, however, only the narrow-fleet face of the transition. Repeating the
continuation across increasing rate spread (Table~\ref{tab:e7}) tests the intrinsic
frequency--frustration mechanism of \S\ref{sub:order}: a plain Sakaguchi model would
stay continuous at every $\gamma$, whereas the twist \eqref{eq:twist} predicts a loop
that opens as $\gamma$ grows. It does -- Figure~\ref{fig:e7} shows the two faces side
by side, the closed continuation of the near-identical fleet against the open loop of
the diverse one, with the locked branch of the latter surviving below $\Kc$. The hysteresis gap is at the relaxation floor for
$\gamma\le0.05$ and then climbs monotonically -- $0.10\to0.68$ as $\gamma$ goes
$0.05\to0.2$, a sevenfold rise -- while the warm-started locked branch survives down to
$K/\Kc\approx0.6<1$, the signature of a subcritical lower turning point $K_-<\Kc$. The
onset is therefore continuous for homogeneous fleets and first-order for
diverse ones at this protocol, with no inertia in the model: the load dynamics alone carry the
first-order seed (\S\ref{sub:order}). (At the broadest spreads, $\gamma\ge0.35$, the
oscillators beyond one half-width, $|\omega-\bar\omega|>\gamma$ -- half the
Lorentzian mass; the density has no finite edge -- are past their individual
sign reversal ($a_1(\omega\tau)<0$ at $\bar\omega+\gamma$, last
column), so fewer jobs participate and the gap saturates, then recedes slightly; the
effect is the correlation, not mere coupling strength.) Two cautions bound the
classification. First, it is operational: ``first-order'' encodes a gap well
above the relaxation floor in a finite-run, warm-started continuation at one
sweep rate; distinguishing a genuine subcritical fold from slow relaxation would
require longer runs, finer increments, sweep-rate and system-size scaling, and a
computation of the unstable branch, none of which are attempted here. Second,
the gap is sensitive to the distribution's tails: on the positive-truncated
Lorentzian of \S\ref{sub:num-robust} the $\gamma=0.2$ loop closes to near the
floor, so the wide loops of Table~\ref{tab:e7} are carried in part by the far
tail of the unbounded Lorentzian -- the family for which the OA structure is
exact, but not a fleet model. The first-order claim should accordingly be read
as established for the heavy-tailed family and open for bounded-support fleets
(\S\ref{sub:num-robust}).

\begin{table}[h]\centering\small
\begin{tabular}{ccccc}
\toprule
$\gamma$ & $a_1(\bar\omega)$ & $a_1(\bar\omega+\gamma)$ & $\max|r_1^{\uparrow}-r_1^{\downarrow}|$ & onset\\
\midrule
0.02 & $+1.00$ & $+1.00$ & 0.11 & continuous\\
0.05 & $+1.00$ & $+0.97$ & 0.10 & continuous\\
0.10 & $+1.00$ & $+0.89$ & 0.22 & first-order\\
0.20 & $+1.00$ & $+0.59$ & 0.68 & first-order\\
0.35 & $+1.00$ & $-0.08$ & 0.67 & first-order\\
0.50 & $+1.00$ & $-0.71$ & 0.57 & first-order\\
\bottomrule
\end{tabular}
\caption{E7 -- order of the synchronization onset versus rate heterogeneity, at
$\tau=3\pi/2$. The third column evaluates the coupling coefficient at the
half-width rate $\bar\omega+\gamma$ (the Lorentzian's half-mass point -- the
density has no finite edge), i.e.\ $-\sin[(\bar\omega+\gamma)\tau]$. The
hysteresis gap is at the finite-run floor ($\approx0.10$) for a
near-identical fleet and widens with $\gamma$, the locked branch persisting to
$K_-<\Kc$: consistent with the intrinsic frequency--frustration correlation
$\alpha_j=\omega_j\tau+\pi/2$ rendering the onset first-order without any
inertia. The classification is protocol-bound and tail-sensitive
(\S\ref{sub:num-robust}): on the positive-truncated Lorentzian the $\gamma=0.2$
loop closes to near the floor.}
\label{tab:e7}
\end{table}

\begin{figure}[t]\centering
\includegraphics[width=.9\linewidth]{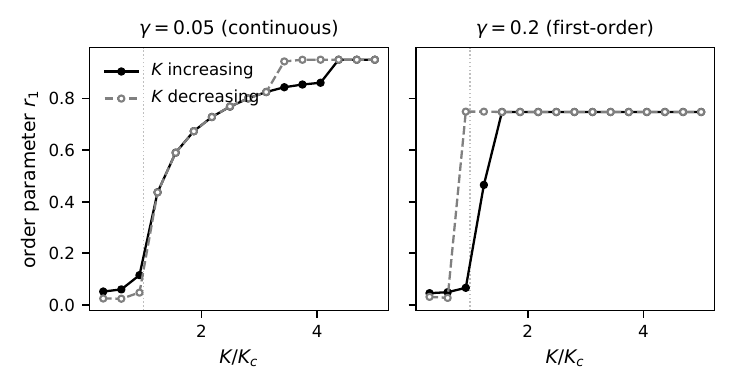}
\caption{E7 -- warm-started up/down continuation of the order parameter at
$\tau=3\pi/2$. Left: near-identical fleet ($\gamma=0.05$), branches coincide up to
the finite-run relaxation floor (residual offset $\le0.10$, on the upper plateau,
not at the onset): continuous. Right: diverse fleet ($\gamma=0.20$), the loop opens
(maximum gap $0.68$) and the locked branch persists below the forward threshold
(dotted line at $K/\Kc=1$) -- the subcritical, first-order onset driven by the
frequency-correlated frustration, with no inertia in the model. Identical
parameters and seed as Table~\ref{tab:e7}; regenerated by
\texttt{sim/figures.py} (curves cached in \texttt{e7\_loop\_gamma*.csv}).}
\label{fig:e7}
\end{figure}

\subsection{Regime of validity: gentle versus strong throttling (tests Asm.~\ref{ass:throttle}, \ref{ass:slowphase})}
\label{sub:num-regime}
The sign law is a gentle-throttle result. Table~\ref{tab:e5} sweeps the coupling
(hence the throttle depth) and reports, in the repulsive regime, the order parameter
and the fraction of evaluations at which the throttle would drive a job backward and
is clamped. While throttling is gentle ($K\le0.6$, no clamping) the repulsive state
stays incoherent ($r_1\approx0.02$, the sign law holds). Once the throttle leaves the
gentle regime ($K\ge1.2$) the sign law fails through two distinct mechanisms,
and the table separates them: at $K=1.2$ and $4.8$ the clamp -- a hard nonlinearity
that can lock even the repulsive regime -- fires massively (clamp fraction $0.96$ and
$0.99$), while at $K=2.4$ the clamp never fires (fraction $0.00$) yet $r_1=0.47$: there
the coupling is simply too strong for the averaged reduction
($K/\bar\omega=2.4$, Assumption~\ref{ass:slowphase}), and the smooth dynamics are
already multistable. Strong capping is thus its own hazard whether or not it
saturates. This is the strong-throttle/strong-coupling regime flagged in
Assumptions~\ref{ass:throttle}--\ref{ass:slowphase} as beyond the averaged theory; the
numerics, not the leading-order sign law, govern there.

\begin{table}[h]\centering\small
\begin{tabular}{ccc}
\toprule
$K$ & repulsive $r_1$ & clamp fraction\\
\midrule
0.3 & 0.025 & 0.00\\
0.6 & 0.024 & 0.00\\
1.2 & 0.648 & 0.96\\
2.4 & 0.473 & 0.00\\
4.8 & 0.564 & 0.99\\
\bottomrule
\end{tabular}
\caption{E5 -- the repulsive sign law holds in the weak-coupling, gentle-throttle
regime ($K\le0.6$) and fails beyond it, whether through hard saturation ($K=1.2$,
$4.8$) or through strong smooth coupling alone ($K=2.4$, clamp never fires).}\label{tab:e5}
\end{table}

\subsection{Harmonic locking in the repulsive window (tests Proposition~\ref{cl:clusters})}
\label{sub:num-clusters}
Proposition~\ref{cl:clusters} predicts that the fundamental-repulsive window is not safe for
a uniform fleet: at $\bar\omega\tau=\pi/2$ ($\tau=T/4$, maximal fundamental repulsion)
the third harmonic is maximally attractive, with lock threshold
$\Kc^{(3)}=18\gamma$ from \eqref{eq:kcm} -- at the working coupling $K=0.6$, a rate
spread below $\gamma^\star=K/18\approx0.033$ should lock. Table~\ref{tab:e8} sweeps
$\gamma$ at fixed $K=0.6$, $N=800$, with the linear throttle; the clamp never fires
(clamp fraction $0.000$ throughout), so no saturation is involved. The prediction is
quantitative: for $\gamma\le0.02$ the fleet locks into a three-cluster state --
$r_1$ stays at the incoherent floor while $r_3$ reaches $0.95$ and the aggregate
swing rises nearly an order of magnitude above the independent-phases level,
coherent at
$3\bar\omega$ (a macroscopic $r_3$ triggers the amplification of
\S\ref{sub:scaling} read at mode $m=3$: the mode-3 coherent component alone
predicts $\sigma[\Ptot]/\sqrt N\approx r_3\sqrt N\,c_3/\sqrt2=4.0$ at $N=800$,
against $4.05$ measured -- the swing scales as $N$) -- while for
$\gamma\ge0.04$ the lock is extinguished, bracketing the predicted $\gamma^\star$;
the intermediate point $\gamma=0.03$ is partially locked ($r_3=0.30$).
Figure~\ref{fig:e8} shows the crossover. The lock survives the rectified,
throttle-only response $h(x)=\beta\max(x,0)$: at $\gamma=0.005$ it gives
$\sigma[\Ptot]/\sqrt N=3.09$ with $r_3=0.79$, so the state is not an artifact of the
linear surrogate's speed-up branch. One reading caution: this configuration is by
design the sharpest test of \eqref{eq:kcm} -- $\delta=\tfrac12$ removes the
even harmonics and $\tau=T/4$ maximizes $a^{(3)}$ -- so the experiment establishes
existence and threshold of the harmonic escape, not its typicality. Realistic
duty cycles, finite ramps (which temper $c_m$, Appendix~\ref{app:fourier}),
waveform asymmetries, multi-pulse structure (Assumption~\ref{ass:waveform}), and
drifting periods all reshape the mode weights, and their systematic exploration
is left open.

\begin{table}[h]\centering\small
\begin{tabular}{ccccc}
\toprule
$\gamma$ & $\sigma[\Ptot]/\sqrt N$ & $r_1$ & $r_3$ & $m=3$ locked (pred.\ $\gamma<0.033$)\\
\midrule
0.002 & 4.05 & 0.045 & 0.95 & yes\\
0.005 & 3.84 & 0.042 & 0.90 & yes\\
0.01  & 3.49 & 0.041 & 0.82 & yes\\
0.02  & 2.63 & 0.032 & 0.61 & yes\\
0.03  & 1.30 & 0.019 & 0.30 & marginal\\
0.04  & 0.36 & 0.016 & 0.06 & no\\
0.05  & 0.35 & 0.016 & 0.05 & no\\
\bottomrule
\end{tabular}
\caption{E8 -- harmonic locking at maximal fundamental repulsion
($\bar\omega\tau=\pi/2$, $K=0.6$, $N=800$, linear throttle, clamp inactive). Below
the predicted threshold $\gamma^\star=K/18\approx0.033$ the fleet locks at the third
harmonic: $r_1$ at the floor, $r_3=O(1)$, aggregate nearly an order of magnitude
above the
independent-phases level ($\approx0.48$, \S\ref{sub:num-amp}) and coherent at
$3\bar\omega$.}\label{tab:e8}
\end{table}

\begin{figure}[t]\centering
\includegraphics[width=.9\linewidth]{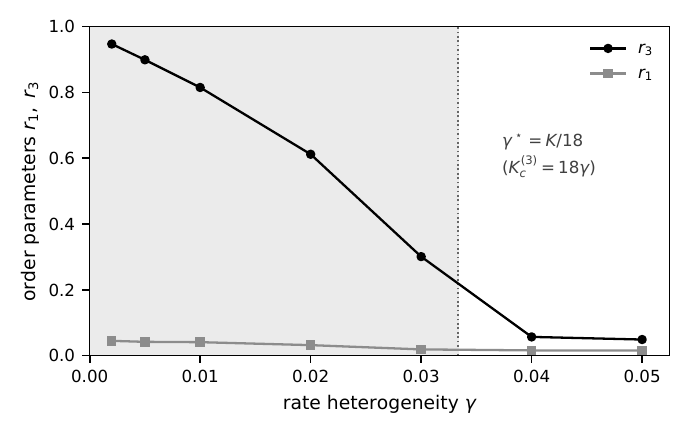}
\caption{E8 -- the crossover of Table~\ref{tab:e8}: $r_3$ (black) collapses as the
rate spread crosses the mode-3 threshold $\gamma^\star=K/18$ (dotted line, shaded
region locked) predicted by \eqref{eq:kcm}, while $r_1$ (grey) never leaves the
incoherent floor -- a fundamental-only monitor is blind to the transition.
Regenerated from \texttt{e8\_harmonic.csv} by \texttt{sim/figures.py}.}
\label{fig:e8}
\end{figure}

An earlier scan (E6) had looked for the two-cluster signature -- low $r_1$
with elevated $r_2$ -- under a saturating cap at $\delta=0.3$, and found none (the
largest $r_2-r_1$ was $+0.02$). Equation~\eqref{eq:kcm} explains the null: at
$\delta=0.3$ the mode-3 coupling weight is
$\chi_3c_3/(\chi_1c_1)\approx1/62$ of the fundamental, far below threshold at every
tested $\gamma$ -- that scan probed a weak harmonic in a suppressed regime, not the
absence of the phenomenon. The clean route to cluster states is the one above: odd
harmonics at $\delta=\tfrac12$, no saturation required.

\subsection{Synthesis: the delay--coupling phase diagram}\label{sub:num-phase}
The one-dimensional slices E1 (\S\ref{sub:num-sign}) and E8
(\S\ref{sub:num-clusters}) are two cuts of a single object -- the steady-state
coherence of the fleet over the $(\bar\omega\tau,K)$ plane, mapped from
incoherent initial phases (forward map, not warm-started, so no hysteresis
path dependence). Figure~\ref{fig:phasemap} shows it. Panel~(a), the
fundamental order parameter $r_1$ at $\gamma=0.05$, is
Proposition~\ref{prop:coupling} made two-dimensional: the repulsive window
$0<\bar\omega\tau<\pi$ stays incoherent at every coupling, while the
attractive window locks above the analytic onset $\Kc(\tau)=2\gamma/a_1(\tau)$
of \eqref{eq:kc-general} (overlaid), whose diverging tongue at the window edges
-- the $1/a_1$ growth as $a_1\to0^+$ -- the colour field tracks. Panel~(b), the
aggregate swing $\sigma[\Ptot]/\sqrt N$ at a near-identical fleet
$\gamma=0.005$, exposes what $r_1$ hides: a bright lock tongue centred at
$\bar\omega\tau=\pi/2$, deep inside the fundamental-repulsive window
where panel~(a) is dark, is the third-harmonic cluster state of
Proposition~\ref{cl:clusters}, its boundary following the analytic threshold
$\Kc^{(3)}(\tau)=18\gamma/|\sin 3\bar\omega\tau|$ of \eqref{eq:kcm} (overlaid).
The figure states the control message of \S\ref{sec:discussion} in one view:
fast electrical capping (the left edge, $\bar\omega\tau\to0$) is dark in both
panels, and the two bright tongues of~(b) -- the fundamental lock above
$\bar\omega\tau=\pi$ and the $m{=}3$ tongue at $\pi/2$ -- are only the two
lowest-order of a comb, the dark bands between them hosting still-higher
harmonics whose thresholds grow as $m^2$ \eqref{eq:kcm} and so sit just above
the plotted coupling. By the per-harmonic sign law no delay is repulsive at
every mode: tuning $\tau$ only relocates the locking mode, whereas rate
diversity, which lifts every threshold \eqref{eq:kcm} at once
(\S\ref{sub:scatter}), darkens all modes together.
Generated by \texttt{sim/phase\_diagram.py}.

\begin{figure}[t]\centering
\includegraphics[width=\linewidth]{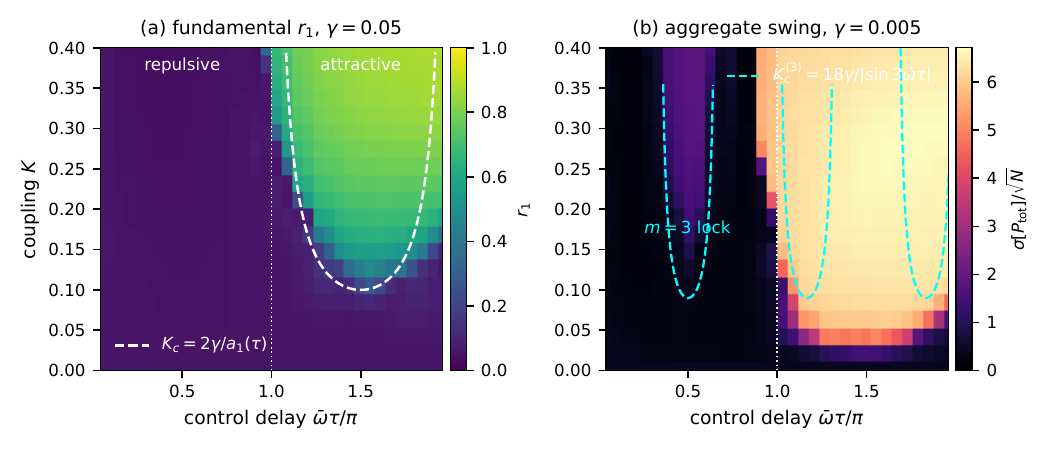}
\caption{E9 -- steady-state phase diagram over the control-delay/coupling plane,
from the full DDE model (forward map from incoherent phases). \textbf{(a)}
Fundamental coherence $r_1$ at $\gamma=0.05$: incoherent (dark) across the whole
repulsive window $\bar\omega\tau<\pi$, locked (bright) in the attractive window
above the analytic onset $\Kc=2\gamma/a_1(\tau)$ (white dashed), which diverges
at the reactive edges. \textbf{(b)} Aggregate swing $\sigma[\Ptot]/\sqrt N$ at a
near-identical fleet $\gamma=0.005$: a third-harmonic lock tongue (bright)
centred at $\bar\omega\tau=\pi/2$ sits inside the fundamental-repulsive
window where (a) is dark, following the analytic $\Kc^{(3)}=18\gamma/
|\sin3\bar\omega\tau|$ (cyan dashed) of \eqref{eq:kcm} -- a fundamental-only
monitor is blind to it. The attractive window $\bar\omega\tau>\pi$ is bright as
well, where the fundamental itself locks as in~(a). No delay is dark at every
mode; only rate diversity is.
Regenerated by \texttt{sim/phase\_diagram.py} (cached in
\texttt{phasemap\_a.csv}, \texttt{phasemap\_b.csv}).}
\label{fig:phasemap}
\end{figure}

\subsection{Robustness of the numerical protocol}\label{sub:num-robust}
Six protocol-level checks respond to the natural objections to the setup above;
all are regenerated by \texttt{sim/checks.py} and \texttt{sim/checks2.py} (raw
outputs \texttt{CHECKS.md}, \texttt{CHECKS2.md}).

Time step. Three representative configurations (repulsive $\tau=\pi/2$;
attractive $\tau=3\pi/2$; harmonic lock $\gamma=0.005$) rerun at
$\Delta t=T/80$, $T/160$, $T/320$ change $r_1$ and $r_3$ by less than
$5\times10^{-4}$ (e.g.\ attractive $r_1=0.8706/0.8701/0.8701$): the
discretization error is far below the precision quoted anywhere in the paper.

Seeds. Five independent initial-phase draws give
$r_1=0.0277\pm0.0007$ (repulsive), $r_1=0.8720\pm0.0002$ (attractive),
$r_3=0.8990\pm0.0002$ (harmonic lock): the fixed-seed tables are representative,
not cherry-picked. (Rate draws are deterministic quantiles, so seed variation
probes initial phases; a study of basin statistics under random rate draws is
not attempted.)

Fourier truncation. The reconstructed duty-$\tfrac12$ pulse has Gibbs
bounds $[-0.137,1.137]$ at $M=1$, $[-0.094,1.094]$ at $M=5$, $[-0.090,1.090]$ at
$M=15$; no clipping is applied, the $M$-harmonic system being the model by
definition. The observables are insensitive to $M$: the attractive $r_1$ moves
$0.8716\to0.8766$ and the harmonic-lock $r_3$ moves $0.8992\to0.9006$ between
$M=5$ and $M=15$ -- sub-percent drifts, no qualitative change -- so no
conclusion rides on the negative-susceptibility excursions of the truncation.

Positive-rate fleet. On the truncated Lorentzian (support
$(0,2\bar\omega)$, mean preserved, no backward-running oscillators): the sign-law
window means are unchanged ($r_1=0.024$--$0.038$ repulsive, $0.88$--$0.91$
attractive at $\gamma=0.05$); the onset marker at $\gamma=0.10$ is unchanged
($0.226$); the hysteresis, however, is reshaped -- the $\gamma=0.10$ and $0.20$
loops close to near the relaxation floor ($0.074$ and $0.058$ against $0.22$ and
$0.68$ untruncated) while the $\gamma=0.35$ loop reopens strongly ($0.737$, the
locked branch persisting to $K/\Kc\approx0.61$). Sign law and forward threshold
are therefore tail-insensitive; the order of the transition at
intermediate spreads is not, which is why \S\ref{sub:num-order} states the
first-order classification with its distribution dependence attached.

Rectified (physical) response. At gentle depth ($K=0.6$) the rectified
throttle $h=\beta\max(x,0)$ reproduces the sign law -- $r_1=0.031$--$0.037$
across the repulsive window (clamp fraction $0$), $0.83$--$0.86$ across the
attractive window -- and sustains the harmonic lock (\S\ref{sub:num-clusters}).
At $K=1.2$ the rectified response saturates the clamp ($98\%$ of evaluations)
and lands in the strong-throttle regime of \S\ref{sub:num-regime}, where no
smooth-theory statement applies -- consistent with the linear-throttle behaviour
there.

Fine onset. At $\gamma=0.05$, $N=1200$, two seeds, steps of $0.0075$ in
$K$: $r_1(K)$ rises from the finite-size floor between $K=0.0975$ and $0.1125$
and fits the supercritical closed form $r_1=\sqrt{1-\Kc/K}$ with
$\Kc=0.100$--$0.101$ over $K=0.12$--$0.15$; used in \S\ref{sub:num-kc} to
separate the leading-order threshold from the averaged-model value.

\subsection{Reproducibility}\label{sec:reproducibility}
The engine (full delay-differential model, RK4, Fourier-reconstructed waveforms with
the clamped throttle), a single orchestrator that regenerates every table above
from fixed seeds, and the robustness-check scripts accompany the paper as
ancillary files (\texttt{sim/kdc.py}, \texttt{sim/run.py},
\texttt{sim/checks.py}, \texttt{sim/checks2.py}, \texttt{sim/figures.py},
\texttt{sim/phase\_diagram.py} (the E9 map),
with the raw outputs under \texttt{sim/outputs/}; NumPy, total runtime
$\approx25$ minutes on a laptop). The reported quantities are
deterministic functions of the listed parameters; re-running the orchestrator
reproduces Tables~\ref{tab:e1}--\ref{tab:e8} and every
in-text number exactly, and a companion script regenerates
Figures~\ref{fig:e1}--\ref{fig:e8} from the same outputs (the continuation curves of
Figure~\ref{fig:e7} are re-simulated with the identical parameters and seed, and the
reproduced hysteresis gaps match Table~\ref{tab:e7} to the reported precision),
while \texttt{sim/phase\_diagram.py} regenerates the phase map
(Figure~\ref{fig:phasemap}) from its own cached sweep. The
five schematic figures (Figures~\ref{fig:mech}--\ref{fig:bif}) are drawn from
closed forms by \texttt{sim/figures\_schematic.py}, with no simulation data. No
proprietary or facility data are used: the study is entirely synthetic, calibrated on
public accelerator specifications and analytical collective-communication models, as
described in \S\ref{sub:waveform}.

%==============================================================================
\section{Discussion}\label{sec:discussion}

\paragraph{Epistemic status.}
Everything in this section is model-implied guidance. The study is
synthetic end to end -- no facility trace, no hardware test, no scheduler
experiment (\S\ref{sec:reproducibility}) -- so the statements below are
hypotheses ranked by a theory whose sign prediction is directly testable, not
validated operating rules; the two-job measurement under Falsifiability
is the gate to pass before treating any of them as one.

\paragraph{Dimensioning: model-implied guidance for power-capping controllers.}
The sign $a_1=-\sin\bar\omega\tau$ reads directly as design guidance for power-capping
controllers. A fast loop ($\bar\omega\tau\ll\pi$) is repulsive and, in the
weak-coupling regime of the theory, cannot ignite fundamental coherence; the
danger appears only when the total loop lag approaches half an
iteration -- for a heterogeneous fleet, slightly before it, the fast rate
tail crossing its individual sign reversal first (\S\ref{sub:num-sign}) -- which
in practice means accumulated dead time: long polling and enforcement
intervals, staged controllers, cascaded thermal stages. A single analog or
first-order thermal lag cannot invert the sign however slow it is, and a pure
averaging window nulls the fundamental before its phase reaches $\pi$
(\S\ref{sub:reduction}; Appendix~\ref{app:grid}). The actionable
prescription is therefore to keep power capping in the fast electrical loop and to
avoid letting a dead-time-dominated or cascaded-slow controller become the binding one. A finer
reading of the same law tempers the word ``stabilizing,'' however: the strength
of the repulsion is $|a_1|=\sin\bar\omega\tau$, which vanishes as $\tau\to0$ (the
instantaneous limit is marginally reactive, $a_1\to0^-$: a pure frequency shift
that neither locks nor actively spreads the fleet) and peaks at $\bar\omega\tau=\pi/2$,
i.e.\ $\tau=T/4$. But the delay that maximizes fundamental repulsion is also the delay
that maximizes the attraction of the third harmonic
($a^{(3)}=-\sin(3\bar\omega\tau)=+1$ at $\bar\omega\tau=\pi/2$,
Proposition~\ref{cl:clusters}, confirmed in \S\ref{sub:num-clusters}): no control delay is
repulsive for every mode, so tuning $\tau$ selects which mode is dangerous
rather than making the fleet safe. A fast loop removes fundamental
attraction within the model; only rate diversity above the mode thresholds \eqref{eq:kcm} protects all
harmonics at once -- the $\tau$-lever is mode-selective, the job-mix lever is global.
In physical time, with the measured $T\approx2$--$6$\,s, the fundamental danger window
$\tau>T/2$ is $1$--$3$\,s: sub-millisecond electrical capping sits deep in the
repulsive window; second-scale polling and enforcement intervals straddle
$\tau^\star$ (a second-scale averaging window contributes only half its
length in phase and attenuates the fundamental it feeds back,
\S\ref{sub:reduction}); and multi-stage thermal channels can accumulate several
turns of phase, where the sign law applies only qualitatively
(\S\ref{sub:reduction})
\citep{ko2025widearea,microsoft2025powerstab}. Equally, a
controller that hard-saturates the cap leaves the gentle regime entirely
(\S\ref{sub:num-regime}), where the clean sign law no longer protects: aggressive
clamping is its own hazard. We note, finally, what separates this guidance from a
dimensioned rule: all simulations run in normalized units, and evaluating
$K=\tfrac12N\bar\omega\eta h'\chi_1c_1$ on a site requires four measured
quantities -- the waveform contrast ($c_1$), the gate fundamental ($\chi_1$), the
cap stiffness $h'$ (throttle response per watt of aggregate excess), and the
per-job sensitivity $\eta$ -- all accessible from telemetry during a controlled
capping experiment, and none reliably reconstructible from public specifications
(\S\ref{sub:waveform}). Until they are measured, the criterion is dimensionless
guidance, not a sizing rule.

\paragraph{Job mix: heterogeneity is protective, but double-edged.}
Corollary~\ref{cor:hetero}, confirmed numerically (\S\ref{sub:num-kc}), says diversity
of iteration rates raises the forward threshold $\Kc$; the fleet easiest to
ignite from incoherence is the homogeneous one -- many replicas of a single
foundation-model run, or large fleets of identical fine-tunes. But the same
frequency-correlated frustration that, once broadened, turns the onset first-order
(\S\ref{sub:order},~\S\ref{sub:num-order}) also opens a subcritical lower branch
$K_-<\Kc$: a diverse fleet is harder to synchronize spontaneously yet stickier
once a transient tips it into the locked basin (with the caveat of
\S\ref{sub:num-robust}: the stickiness at intermediate spreads is established on
the heavy-tailed rate family, and on bounded-support fleets the measured loop
closes there before reopening at large spreads). The harmonic escape sharpens the
stakes from the other side: heterogeneity is not only what raises the fundamental
threshold, it is the only lever that protects every mode at once, all the
thresholds \eqref{eq:kcm} scaling with $\Delta\omega$. Heterogeneity is therefore a
real asset against spontaneous onset but not a guarantee under perturbation -- which
sharpens, rather than weakens, the case for phase-scattering scheduling
(\S\ref{sub:scatter}) as a software-only, zero-storage mitigation, whose throughput cost
we estimated and whose rigorous accounting we leave open.

\paragraph{Detection: an early-warning signal in existing telemetry.}
The subcritical branch turns the order parameter from a theoretical device into an
operational quantity. Power traces are already collected at rack and PDU
level; where jobs map one-to-one onto nodes or racks -- the common case for
large training partitions -- these traces are attributable per job, and because
the waveform is near two-level (\S\ref{sub:waveform}), each job's
instantaneous phase $\Phi_j$ can be read off its own trace by thresholding, with
$r_m=|\langle e^{\ii m\Phi_j}\rangle|$ evaluated online at low cost. Where
telemetry aggregates several jobs, or a job spans domains, per-job phase
extraction is a disaggregation problem we do not solve here. In the
benign regime $r_1$ stays at (or below) its incoherent scale $O(N^{-1/2})$ -- and
because the cluster states of Proposition~\ref{cl:clusters} are invisible to $r_1$
(\S\ref{sub:num-clusters}: $r_1\approx0.04$ while the aggregate scales as $N$), the
watch must extend to the first few harmonics $r_m$, or more simply to the aggregate
variance against its incoherent level. A persistent excursion
after a disturbance -- a cap-saturation episode, a co-scheduled restart wave, a
thermal event that lengthens the effective delay -- signals entry into a locked
basin, which hysteresis makes sticky (\S\ref{sub:num-order}): waiting for the
disturbance to pass does not guarantee return to incoherence once $K>K_-$. The
operational reading is a watch-and-release policy: monitor $r_1,r_2,r_3$ against a
threshold calibrated by the Rayleigh statistic of circular uniformity -- under
incoherence $N r_m^2$ is asymptotically exponential with unit mean, so the threshold
$r_m>\sqrt{\ln(1/p)/N}$ bounds the per-window false-alarm rate by $p$ (the effect of
finite estimation windows and of waveform correlation on this baseline is left
open: the statistic assumes independent uniform phases, so this is a calibration
baseline, not a validated detector) -- and on a persistent excursion apply
a transient rate detuning (\S\ref{sub:scatter}) to eject the fleet from the locked
branch, rather than waiting for it to relax on its own.

\paragraph{Limits, honestly.}
Five assumptions carry the result and bound it. (i) Mean-field, all-to-all coupling:
a real campus is a hierarchy of power domains, so the true coupling is
block-local; for attractive Kuramoto families mean-field is the most
synchronization-prone topology and block dilution raises $\Kc$, but for the
present frustrated, delayed coupling this conservatism is plausible rather than
proven (Assumption~\ref{ass:domain}). (ii) Leading harmonic: justified
by the $O(1/m^2)$ decay (Appendix~\ref{app:reduction}) except under saturation
or multi-pulse waveforms (Assumption~\ref{ass:waveform}). (iii)
Gentle throttling: the strong-throttle regime is multistable and beyond the averaged
theory (\S\ref{sub:num-regime}). (iv) Order of the onset: the loads carry no mechanical
inertia, yet the transition is not simply continuous -- the reduction's
frequency-correlated frustration makes it first-order and hysteretic for diverse fleets
(\S\ref{sub:order}, Table~\ref{tab:e7}). Our evidence for the order is numerical
(finite $N$, warm-started continuation, and tail-sensitive:
\S\ref{sub:num-robust}); the explicit Landau coefficient
(Appendix~\ref{app:oa}), a finite-size-scaling study, and the separate
question of whether grid-node inertia adds an effective second-order term (which may be
the very origin of $\tau$), remain open. (v) Determinism: the model's rates are
constant and noiseless, whereas real iteration times drift and jump --
stragglers, checkpoints, data loading, expert routing, failures and restarts
\citep{megascale2024}; phase noise generically raises coherence thresholds and
can wash out weakly locked states as well as unlock hysteretic ones, so the
deterministic fleet is the cleanest-case laboratory, and robustness of every
locked state reported here to realistic rate noise is untested.

\paragraph{Falsifiability.}
Two predictions are directly testable. The coupling has a measurable sign:
two co-capped jobs at short control delay should anti-correlate (repulsion), a
controlled experiment that would confirm or refute the mechanism -- and that
simultaneously calibrates the power--susceptibility alignment $\Delta\varphi$ of
\eqref{eq:Gamma-sign} and the loop's actual phase budget
(Appendix~\ref{app:grid}). And the route to
danger is through loop lag or rate homogeneity, not through job count:
adding jobs at fixed heterogeneity does not, by itself, synchronize a fleet
behind a fixed, saturating total cap -- at fixed per-job cap stiffness the
coupling is extensive and job count alone can cross the threshold
(Assumption~\ref{ass:slowphase}), so the experiment must control the cap policy.

\paragraph{Beyond the fence line: resonance with grid modes.}
The analysis above stays inside the facility, where the operator's stakes and
levers are; what a locked fleet does to the grid is the natural extension. Because
the grid responds in narrowband near its electromechanical modes
(\S\ref{sub:inertia}), the resonance-relevant quantity is the coherent
fundamental, amplified by $r\sqrt N$ (\S\ref{sub:scaling}); a synchronized
fleet whose iteration rate lands on a grid mode is the worst case, and detuning away
from it is a second, frequency-domain, use of the same scheduling lever. The
frequency ranges do overlap: with $T\sim2$--$6$\,s the iteration fundamental sits
at $f=1/T\approx0.17$--$0.5$\,Hz, inside the $0.1$--$1$\,Hz inter-area band
\citep{ko2025widearea}, so the hazard cannot be dismissed by scale separation
alone. Whether a given site would actually excite a given mode is a different
question: it depends on the modal frequency and damping, the residue at the
site's electrical location, and the amplitude visible at the point of common
coupling -- none of which are modelled here, and no resonance claim is made. The
overlap is what motivates the companion, network-model work; rate-detuning
remains the zero-storage lever if that work confirms the exposure.

%==============================================================================
\section{Conclusion}\label{sec:conclusion}
We asked whether co-located AI training jobs, usually modelled as independent grid
forcings, can instead synchronize -- turning an aggregate that should grow as
$\sqrt N$ into one that grows as $N$. Answering it required identifying the physical
coupling, which is neither grid frequency (the clocks are decoupled) nor a common
forcing (which cannot break symmetry), but load-dependent throttling gated by
each job's own compute phase. Linearizing that mechanism yields a generalized
Kuramoto model whose Sakaguchi frustration $\alpha=\bar\omega\tau+\pi/2$ and
synchronizing coefficient $a_1=-\sin\bar\omega\tau$ are derived, not posited,
giving a sign that flips from repulsive to attractive at a control delay of half an
iteration, $\tau^\star=T/2$, and a critical coupling $\Kc=2/[\pi g(0)a_1]$ that
collapses for homogeneous fleets. The same reduction carries a frequency-correlated
frustration $\alpha_j=\omega_j\tau+\pi/2$ whose twisted order parameter drives the onset
first-order for diverse fleets without any inertia -- an intrinsic
mechanism of the load dynamics, established here numerically. Direct
simulation of the unaveraged delay model reproduces each prediction within its
stated regime -- the sign reversal at
$\tau^\star$, the onset at the leading-order threshold $2\gamma/\cos\alpha$
(resolved to percent accuracy by a fine sweep, which also separates it from the
averaged model's $e^{\gamma\tau}$-inflated value, Appendix~\ref{app:oa}) and its
linear growth with rate
spread, the $r\sqrt N$ amplification, the harmonic lock at its sharpest point (a
near-identical fleet reaches $r_3\approx0.9$ at the very delay that maximizes
fundamental repulsion, with a threshold consistent with \eqref{eq:kcm}), and a
transition that is continuous for
near-identical fleets but opens a hysteresis loop as the heavy-tailed fleet
diversifies (with the protocol and tail-sensitivity caveats of
\S\ref{sub:num-order}) --
and maps the boundaries of the theory: strong coupling, saturating or not, is a
separate multistable regime beyond the averaged sign law.

The model's headline for the operator is reassuring and actionable: fast electrical
power capping stabilizes the fundamental -- it makes the leading coupling repulsive
and holds the aggregate below the independent-phases baseline -- while the coherent swing the
forced-oscillation literature fears requires accumulated dead time in the
control loop (a single analog or thermal lag stage cannot invert the sign,
Appendix~\ref{app:grid}), a hard-saturating cap, or a fleet uniform enough to lock at the
fundamental or at some higher harmonic. Because the sign law is per-harmonic, no
control delay is safe for every mode: the one protection that covers them all is
rate diversity, which no cap tuning can substitute for. Every ingredient
of the dangerous regime is thus a site-configuration choice, not an external
circumstance; the residual risk left by hysteresis and by the $r_1$-invisible
cluster states is legible in telemetry the site already collects, through the order
parameter and its harmonics; and the remedy lives in the
scheduler, through deliberate heterogeneity and phase scattering. Unlike the
single-waveform problem addressed by storage and by per-job power smoothing, the
collective form of the problem is, in the model, governable with levers
the operator already holds -- and whether it is so in the machine room is
precisely what the two-job experiment below would decide.

Three directions remain open. First, the first-order onset is here established
numerically (finite $N$, heavy-tailed rate family); the explicit Landau
coefficient of the Ott--Antonsen normal form \eqref{eq:landau-b}, a self-consistent
treatment of the twisted order parameter
\eqref{eq:twist} -- whose locked band is asymmetric in $\omega$, so the standard closure
does not apply -- a finite-size-scaling study, and the bounded-support question
raised in \S\ref{sub:num-robust} would together pin the lower turning point
$K_-$, the thermodynamic limit, and the distribution dependence. The companion question, whether the second-order grid
node of Appendix~\ref{app:grid} bestows an additional effective load inertia when
carried to higher order, is now separable from the order of the transition rather than a
prerequisite for it. Second, the realistic rectified, strong-throttle,
and block-local-topology regimes deserve their own analysis, the present mean-field,
gentle-throttle theory being only their conservative envelope. Third, the sign of the
coupling is experimentally accessible: a two-job, co-capped measurement would test the
mechanism at its root.

%==============================================================================
\appendix
\section{Fourier coefficients of the two-level waveform}\label{app:fourier}
% >>> RÉDIGÉ (calcul direct).
For the square waveform \eqref{eq:waveform} with duty cycle $\delta$ and amplitude
$A=P^{\mathrm{hi}}-P^{\mathrm{lo}}$, the mean is $\bar P=P^{\mathrm{lo}}+A\delta$ and
the Fourier amplitudes follow from the one-sided complex coefficient
$\hat c_m=\frac{A}{2\pi}\int_0^{2\pi\delta} e^{-\ii m\Phi}\,\dd\Phi
=\frac{A}{\pi m}\,\sin(\pi m\delta)\,e^{-\ii\pi m\delta}$, whose real cosine
amplitude is $c_m=2|\hat c_m|$:
\begin{equation}
  c_{m} \;=\; \frac{2A}{\pi m}\,\bigl|\sin(\pi m\delta)\bigr|,
  \qquad m\ge1,
\end{equation}
with reference phase $\Phi^\star_{m}=\pi\delta$, the centre of the compute
window, shifted by $\pi/m$ where $\sin(\pi m\delta)<0$ (the sign absorbed in the
modulus): every harmonic is referenced to the same pulse centre, which is what
makes the waveform and susceptibility fundamentals share $\Phi^\star_{1}$ in the
reduction (\S\ref{sub:reduction}). At $\delta=\tfrac12$ this is
the textbook square-wave series, $c_m=2A/(\pi m)$ for odd $m$. The $1/m$ envelope
confirms the
harmonic-rich character used in \S\ref{sub:generalized}. The susceptibility
$\chi$ of \eqref{eq:susceptibility} is the same indicator with unit amplitude, so it
shares the reference phase and has coefficients $\chi_m=c_m/A=2(\pi m)^{-1}
|\sin(\pi m\delta)|$; in particular $\chi_1 c_1 = A\,\chi_1^2$, which is the
amplitude product entering the coupling constant $K$ of \eqref{eq:reduced}.

\paragraph{Finite ramps (trapezoidal waveform).}
Real transitions are not instantaneous: let each edge occupy a phase width
$2\pi\rho$ with $\rho\ll\delta$. Convolving the square wave with a width-$2\pi\rho$
boxcar multiplies each coefficient by the sinc factor $\operatorname{sinc}(m\rho)
=\sin(\pi m\rho)/(\pi m\rho)$, giving
\begin{equation}\label{eq:trapezoid}
  c_m^{\mathrm{trap}} \;=\; \frac{2A}{\pi m}\,\bigl|\sin(\pi m\delta)\bigr|\,
  \operatorname{sinc}(m\rho).
\end{equation}
Ramps therefore leave the fundamental essentially unchanged ($\operatorname{sinc}
(\rho)\approx1$) and only accelerate the high-harmonic roll-off from $1/m$ to
$1/m^2$ beyond $m\sim1/\rho$. The reduction of \S\ref{sub:reduction}, which keeps
$m=1$, is thus insensitive to ramp width; only the cluster-forming harmonics of
Proposition~\ref{cl:clusters} are tempered by finite ramps.

\section{Reduction to the generalized Kuramoto form}\label{app:reduction}
This appendix proves Proposition~\ref{prop:coupling}, i.e.\ derives
\eqref{eq:reduced}--\eqref{eq:taustar} from \eqref{eq:throttle} and bounds the
discarded terms.

\paragraph{Linearization (Step 1).}
With $s_0=\bar P_{\mathrm{tot}}-\Pcap$ and $\Delta\Ptot=\Ptot-\bar P_{\mathrm{tot}}$,
Assumption~\ref{ass:throttle} gives $h=h(s_0)+h'\Delta\Ptot+O(\Delta\Ptot^2)$. The
phase average of $\chi_i$ over one step is its duty cycle,
$\langle\chi_i\rangle=\delta_i$, so the $h(s_0)$ term contributes a mean slowdown
$\omega_i\eta_i\delta_i h(s_0)$ and oscillatory corrections that, lacking any
$\Phi_j$ dependence, do not couple jobs: the oscillatory part of $\chi_i h(s_0)$ is a
single-oscillator, $\Phi_i$-dependent speed modulation that merely redefines the
isochron phase and is absorbed into the phase variable. Absorbing the mean into
$\omega_i^{\mathrm{eff}}=\omega_i[1-\eta_i\delta_i h(s_0)]$ yields
\eqref{eq:linearize}.

\paragraph{The slow-phase (retarded) approximation (Step 2).}
For job $j$, $\Phi_j(t-\tau)=\Phi_j(t)-\omega_j\tau-\varepsilon_j$ with
$\varepsilon_j=\int_{t-\tau}^{t}(\dot\Phi_j-\omega_j)\,\dd s$. The integrand is the
coupling term of \eqref{eq:reduced}, of magnitude $O(K)$ and oscillatory at rate
$\bar\omega$, so its integral over a delay obeys $|\varepsilon_j|=O(K/\bar\omega)$:
the correction is controlled by weak coupling alone, and -- crucially --
stays small even in the attractive window $\bar\omega\tau\sim\pi$, where the lag
$\omega_j\tau$ itself is order one and is kept exactly, not expanded.
Replacing $\Phi_j(t-\tau)$ by $\Phi_j(t)-\omega_j\tau$ thus costs only
$O(K/\bar\omega)$, the same order as the rotating-wave step below.

\paragraph{Averaging and the sum-frequency term (Step 3).}
Insert the fundamentals
$\chi_i(\Phi_i)\supset\delta_i+\chi_1\cos(\Phi_i-\Phi^\star)$ and
$P_j(\Phi_j(t-\tau))-\bar P_j\supset c_1\cos(\Phi_j-\omega_j\tau-\Phi^\star)$ into
\eqref{eq:linearize}. The product of the two fundamentals splits by the
product-to-sum identity into
\begin{equation}\label{eq:prod2sum}
  \chi_1 c_1\cos(\Phi_i-\Phi^\star)\cos(\Phi_j-\omega_j\tau-\Phi^\star)
  =\tfrac{\chi_1 c_1}{2}\Bigl[\underbrace{\cos(\Phi_i-\Phi_j+\omega_j\tau)}_{\text{slow}}
  +\underbrace{\cos(\Phi_i+\Phi_j-\omega_j\tau-2\Phi^\star)}_{\text{fast}}\Bigr].
\end{equation}
The fast term rotates at $\dot\Phi_i+\dot\Phi_j\approx2\bar\omega$, whereas the slow
term varies at the rate of phase differences, $O(\Delta\omega)+O(K)\ll\bar\omega$;
averaging over one iteration suppresses the fast term to relative order
$O(K/\bar\omega)$, the standard rotating-wave error, small under
Assumption~\ref{ass:slowphase}. The $\delta_i$ part of $\chi_i$ multiplies the same
$P_j$ fundamental and gives a $\Phi_i$-independent common drive
$F_i(t)=-\omega_i\eta_i\delta_i h'\,\Delta\Ptot(t-\tau)$ with
$\Delta\Ptot=Nc_1\,r\cos(\psi-\Omega\tau-\Phi^\star)$, which is $O(N)$ and $O(r)$.
Its size is, however, deceptive: because $F_i$ does not depend on $\Phi_i$, it enters
a phase difference only through the heterogeneity of its prefactor,
$\dot\Phi_i-\dot\Phi_k\supset(\omega_i\eta_i\delta_i-\omega_k\eta_k\delta_k)\,G(t)$
with $G$ common. For uniform $\eta_i\delta_i$ this vanishes identically: a common
time-dependent frequency is a pure gauge (a choice of rotating frame) and cannot
create coherence, however large. For heterogeneous $\eta_i\delta_i$ it is $O(r)$ but
near onset the locked band has width $O(Kr)\to0$, so the prefactor is nearly constant
across the locked oscillators and the drive is absorbed into $(\Omega,\psi)$; its
differential, synchronization-relevant part is higher order in $r$ and does not move
$\Kc$ at leading order (it can renormalize $\Kc$ at $O(r)$ away from onset, which we
do not pursue; the argument is a near-onset one and does not cover strongly
locked branches or the broadest spreads -- there the numerical study is the
evidence, and it integrates the term in full, the simulated $\chi$ retaining its
mean $\delta$). Keeping the slow term of
\eqref{eq:prod2sum}, the reference phase $\Phi^\star$ cancels and
\begin{equation*}
  \dot\Phi_i=\omega_i^{\mathrm{eff}}
  -\omega_i\eta_i h'\tfrac{\chi_1 c_1}{2}\sum_{j}\cos(\Phi_i-\Phi_j+\omega_j\tau).
\end{equation*}
Using $-\cos\theta=\sin(\theta-\tfrac\pi2)$ and
$\cos(\Phi_i-\Phi_j+\omega_j\tau)=\cos\bigl((\Phi_j-\Phi_i)-\omega_j\tau\bigr)$ gives
$\sin\bigl((\Phi_j-\Phi_i)-\omega_j\tau-\tfrac\pi2\bigr)$, i.e.\ \eqref{eq:reduced}
with $\alpha_j=\omega_j\tau+\tfrac\pi2$ and
$K=\tfrac12N\bar\omega\eta h'\chi_1 c_1>0$. For narrow $g$, $\alpha_j\to\alpha
=\bar\omega\tau+\tfrac\pi2$, whence $a_1=\cos\alpha=-\sin(\bar\omega\tau)$ and the
sign pattern of Proposition~\ref{prop:coupling}; $a_1$ first vanishes from below at
$\bar\omega\tau=\pi$, giving $\tau^\star=\pi/\bar\omega=T/2$. For heterogeneous duty
cycles ($\delta_i\neq\delta_j$) the reference phases no longer cancel and leave a
residual frustration $\pi(\delta_j-\delta_i)$, a duty-cycle-dependent contribution to
$\alpha$ omitted in the homogeneous reduction. \hfill\qedsymbol

\paragraph{Error control (first-order averaging).}
Steps 2--3 are a first-order averaging of \eqref{eq:linearize} in the small parameter
$\varepsilon=K/\bar\omega$: writing the dynamics as
$\dot\Phi=\omega^{\mathrm{eff}}+\varepsilon\,F(\Phi,t)$ with $F$ the throttling
coupling, the reduced system \eqref{eq:reduced} is the average of $F$ over one
iteration. The two discarded contributions -- the retarded-argument correction
$\varepsilon_j=O(\varepsilon)$ of Step~2 and the sum-frequency term of Step~3 -- are
each $O(\varepsilon)$ relative to the retained coupling. Provided the waveform is
regularized to Lipschitz class -- the finite ramps of Appendix~\ref{app:fourier}, under
which $h\circ P$ and $\chi$ are $C^1$ and $F$ is smooth and bounded -- the first-order
averaging theorem \citep{sandersverhulst2007} applies: solutions of the full and
averaged systems starting $O(\varepsilon)$-close stay $O(\varepsilon)$-close on the long
time scale $t=O(1/\varepsilon)$ over which coherence develops. We therefore take
\eqref{eq:reduced} as the leading-order model and, per Assumption~\ref{ass:slowphase},
fall back at $\varepsilon\gtrsim0.6$ to direct simulation of the unreduced system
(\S\ref{sub:num-regime}). The idealized two-level waveform is the
$\varepsilon$-independent discontinuous limit of this family; the Fourier truncation
used throughout is what renders $F$ smooth mode by mode, so the averaging applies
harmonic by harmonic with the same $O(\varepsilon)$ control.

\paragraph{Harmonics beyond the fundamental: per-mode frustration and threshold.}
Repeating Step 3 for the $m$-th harmonic pair
($\chi_m\cos(m\Phi_i-m\Phi^\star)$ against $c_m\cos(m\Phi_j-m\omega_j\tau-m\Phi^\star)$),
the slow term is $\tfrac{\chi_m c_m}{2}\cos\!\bigl(m(\Phi_i-\Phi_j)+m\omega_j\tau\bigr)$
and the same $-\cos\theta=\sin(\theta-\pi/2)$ step gives the coupling
$\propto\chi_m c_m\sin\!\bigl(m(\Phi_j-\Phi_i)-\alpha_m\bigr)$ with
\begin{equation}\label{eq:alpham}
  \alpha_m=m\bar\omega\tau+\tfrac\pi2,
  \qquad a^{(m)}\propto\cos\alpha_m=-\sin(m\bar\omega\tau):
\end{equation}
the delay multiplies by $m$, the $\pi/2$ of the velocity coupling does not. Its
relative size is $\chi_m c_m/(\chi_1 c_1)=[\sin(\pi m\delta)/(m\sin(\pi\delta))]^2
=O(1/m^2)$, so the truncation to $m=1$ is controlled for the dynamics near
fundamental onset -- but each mode has its own instability. In the sector
$\Psi_i=m\Phi_i$ the $m$-th term is a standard Sakaguchi coupling with frequencies
$m\omega_i$ (spread $m\Delta\omega$) and strength $m\,K\chi_mc_m/(\chi_1c_1)$;
the self-consistency of Appendix~\ref{app:critical} then gives the mode-$m$
threshold
\begin{equation*}
  \Kc^{(m)}=\frac{\chi_1c_1}{\chi_mc_m}\cdot\frac{2\gamma}{|\sin(m\bar\omega\tau)|}
  =\frac{2\gamma m^2}{|\sin(m\bar\omega\tau)|}\quad(\delta=\tfrac12,\ m\ \text{odd}),
\end{equation*}
which is \eqref{eq:kcm}: the $m$-cluster lock of Proposition~\ref{cl:clusters}, active
whenever $-\sin(m\bar\omega\tau)>0$ and the fleet is uniform enough, saturation or
not. A hard cap additionally amplifies high harmonics beyond the waveform's own
$c_m$.

\section{Self-consistency for the critical coupling}\label{app:critical}
This appendix proves Proposition~\ref{prop:kc}, then develops (without closing) the
twisted-order-parameter mechanism that fixes the order of the transition.

\paragraph{Mean-field form.}
With the complex order parameter \eqref{eq:order}, the coupling in
\eqref{eq:reduced} is $\tfrac{K}{N}\sum_j\sin(\Phi_j-\Phi_i-\alpha)
=K\,\Im\!\bigl[e^{-\ii(\Phi_i+\alpha)}re^{\ii\psi}\bigr]
=Kr\sin(\psi-\Phi_i-\alpha)$, so each oscillator feels the population only through
$(r,\psi)$:
\begin{equation}\label{eq:meanfield}
  \dot\Phi_i=\omega_i^{\mathrm{eff}}+Kr\,\sin(\psi-\Phi_i-\alpha).
\end{equation}
Pass to the frame rotating at the collective frequency $\Omega$ (to be fixed by
symmetry) and write $\xi_i=\psi-\Phi_i$.

\paragraph{Locked and drifting populations.}
An oscillator is phase-locked if \eqref{eq:meanfield} admits a stationary $\xi_i$,
i.e.\ $\sin(\xi_i-\alpha)=(\Omega-\omega_i^{\mathrm{eff}})/(Kr)$, which requires
$|\Omega-\omega_i^{\mathrm{eff}}|\le Kr$; the stable root has
$\cos(\xi_i-\alpha)>0$. Drifting oscillators ($|\Omega-\omega_i^{\mathrm{eff}}|>Kr$)
occupy a stationary density $\propto|\dot\Phi_i|^{-1}$ and, by the symmetry
$\omega\!\to\!2\Omega-\omega$, contribute nothing to $r$ at leading order. The
self-consistency condition $r=\langle\cos\xi_i\rangle$ becomes, for the locked band
and $g$ centred at $\Omega$ (so $\Omega$ equals the symmetric centre after the
reactive shift $-\!\sin\alpha$ is absorbed -- an $O(Kr)$ shift, absorbed exactly
for the Lorentzian family by the OA computation,
$\Omega=\bar\omega-\gamma\cot(\bar\omega\tau)$ in \eqref{eq:oaonset}, and
neglected beyond leading order for general $g$: a caveat for broad,
delay-frustrated fleets, where the frequency dependence of $\alpha_j$ makes the
shift nontrivial),
\begin{equation}\label{eq:selfcons}
  r=\cos\alpha\!\int_{-\pi/2}^{\pi/2}\!\!\cos^2\!\theta\;
  g\!\bigl(\Omega+Kr\sin\theta\bigr)\,Kr\,\dd\theta ,
\end{equation}
where $\theta=\xi_i-\alpha$ and the factor $\cos\alpha$ is the projection of the
locked phases onto the order-parameter axis.

\paragraph{Onset and the critical coupling.}
Dividing \eqref{eq:selfcons} by $r$ and letting $r\to0^+$ gives
$1=K\cos\alpha\,g(\Omega)\,\tfrac{\pi}{2}$ (the $\theta$-integral is $\pi/2$;
take $\Omega$ at the mode of $g$), i.e.
\begin{equation*}
  \Kc=\frac{2}{\pi\,g(0)\,\cos\alpha}=\frac{2}{\pi\,g(0)\,a_1(\tau)},
\end{equation*}
which is \eqref{eq:kc-general}. A positive-$r$ branch bifurcates from $r=0$ only
when $\cos\alpha=a_1(\tau)>0$, i.e.\ in the attractive regime
$\pi<\bar\omega\tau<2\pi$; for $a_1(\tau)\le0$ \eqref{eq:selfcons} has no solution
with $r>0$, the incoherent density is stationary and linearly stable (dispersion
relation \eqref{eq:dispersion} below), and the only macroscopic state is $r=0$. This
establishes both items of
Proposition~\ref{prop:kc}; the linear growth $\Kc\propto\Delta\omega$ follows from
$g(0)\propto1/\Delta\omega$ for a family $g(\omega)=\Delta\omega^{-1}g_0(\omega/
\Delta\omega)$, giving Corollary~\ref{cor:hetero}. \hfill\qedsymbol

\paragraph{Linear stability of incoherence.}
The self-consistency locates the bifurcation but does not by itself settle the
stability of $r=0$ for $a_1<0$; we settle it by the standard dispersion analysis of the
incoherent state \citep{strogatzmirollo1991}. Linearize the continuity equation for the
phase density $f(\Phi,\omega,t)$ about incoherence $f_0=1/2\pi$. A perturbation
$\propto e^{\lambda t}$ in the fundamental angular mode decouples from the higher modes
and, for the frustrated mean field \eqref{eq:meanfield} with a Lorentzian $g$ of
half-width $\gamma$ centred at the comoving frequency, has the discrete characteristic
root
\begin{equation}\label{eq:dispersion}
  \lambda \;=\; \tfrac12\,K\,a_1(\tau)\;-\;\gamma,
  \qquad a_1(\tau)=\cos\alpha,
\end{equation}
together with a continuous spectrum confined to $\operatorname{Re}\lambda=-\gamma<0$.
The discrete root crosses zero at $K=2\gamma/a_1(\tau)$, reproducing
\eqref{eq:kc-general} through $g(0)=1/(\pi\gamma)$; and for $a_1(\tau)\le0$ it obeys
$\operatorname{Re}\lambda\le-\gamma<0$ for every $K>0$, so incoherence is linearly
stable throughout the repulsive window -- the content of
Proposition~\ref{prop:kc}(i). For a general unimodal $g$ the discrete eigenvalue is
replaced by the Strogatz--Mirollo integral condition
$1=\tfrac12 K a_1\!\int g(\omega)\,(\lambda+\ii\omega)^{-1}\dd\omega$, whose principal
root keeps $\operatorname{Re}\lambda<0$ for all $K>0$ when $a_1\le0$, leaving the
conclusion unchanged.

\paragraph{Amplitude and the order of the transition.}
Within the common-frustration reduction \eqref{eq:selfcons}, expanding to cubic order
yields $1=K\cos\alpha\,\tfrac{\pi}{2}\bigl[g(0)+\tfrac18 g''(0)(Kr)^2+O(r^4)\bigr]$, so
for $g''(0)<0$ (a rounded mode) $r\simeq\bigl[\tfrac{-8g(0)}{g''(0)}\bigr]^{1/2}
\Kc^{-1}\,\bigl(\tfrac{K-\Kc}{\Kc}\bigr)^{1/2}$: the bifurcation is continuous
(supercritical). This is the narrow-fleet result, and it is what the homogeneous
numerics exhibit (\S\ref{sub:num-order}).

That step, however, discarded the frequency dependence of the source frustration
$\alpha_j=\omega_j\tau+\pi/2$. Restoring it, the coupling
$\tfrac KN\sum_j\sin(\Phi_j-\Phi_i-\alpha_j)=K\,\Im\!\bigl[e^{-\ii\Phi_i}z\bigr]$ closes
on the twisted order parameter
$z=Re^{\ii\Psi}=\tfrac1N\sum_j e^{\ii(\Phi_j-\alpha_j)}$ of \eqref{eq:twist}, leaving the
frustration-free mean field $\dot\Phi_i=\omega_i^{\mathrm{eff}}+KR\sin(\Psi-\Phi_i)$. For
a locked oscillator $\Phi_j=\Psi-\beta_j$ with
$\sin\beta_j=(\Omega-\omega_j^{\mathrm{eff}})/(KR)$ and $\cos\beta_j>0$, the magnitude
closes as $R=\bigl|\langle e^{-\ii(\beta_j+\omega_j\tau)}\rangle_{\mathrm{lock}}\bigr|$:
the locking angle $\beta_j$ (odd in $\omega_j-\Omega$) and the contribution phase
$\omega_j\tau$ are now correlated. The competition is visible in a two-line
expansion: factoring the global rotation $e^{-\ii\Omega\tau}$ out of the modulus and
expanding in $(\omega_j-\Omega)\tau$,
\begin{equation*}
  R \;\simeq\; \langle\cos\beta_j\rangle
  \;-\;\tau\,\bigl\langle(\omega_j-\Omega)\sin\beta_j\bigr\rangle
  \;-\;\tfrac{\tau^2}{2}\bigl\langle(\omega_j-\Omega)^2\cos\beta_j\bigr\rangle
  +\cdots,
\end{equation*}
where, using $\sin\beta_j=(\Omega-\omega_j^{\mathrm{eff}})/(KR)$, the linear term
equals $+\tau\langle(\omega_j-\Omega)^2\rangle/(KR)>0$ -- the coherently-adding core,
a contribution that grows as $R$ shrinks, the self-amplifying selection of
\S\ref{sub:order} -- while the quadratic term is destructive; both scale with the
locked-band spread, so the correction to the cubic coefficient carries the sign
competition and grows as $(\Delta\omega\,\tau)^2$, opposing the stabilizing
$g''(0)<0$ term as $\Delta\omega\,\tau$ grows. This expansion locates the mechanism
but is not a closed solution: retaining $\alpha_j$ breaks the
$\omega\!\to\!2\Omega-\omega$ symmetry that let the drifters cancel above, so this
common-frustration closure does not extract $K_-$. The Ott--Antonsen reduction of
Appendix~\ref{app:oa} removes that limitation for the Lorentzian family: it retains
$\alpha_j$ exactly, computes the onset in closed form, and reduces the order of
the transition to the sign of a Landau coefficient -- whose explicit evaluation
remains open \eqref{eq:landau-b}; the opening of the hysteresis with
$\gamma\tau$ is, at present, the numerical observation of the full-model
continuation ($K_-<\Kc$, Table~\ref{tab:e7}), with the protocol and
tail-sensitivity caveats of \S\ref{sub:num-order}. The mechanism is the
canonical one -- a frequency--frustration correlation driving a subcritical
onset -- and it is intrinsic: no inertia is invoked. Above $K_-$ the incoherent state
is only metastable and its basin, not $\Kc$ alone, governs safety. The two external
mechanisms (grid-node inertia, delay-induced corrections) can compound it but are no
longer necessary to obtain a first-order onset.

\section{Exact mean-field reduction via the Ott--Antonsen ansatz}\label{app:oa}
For a Lorentzian rate density the averaged model \eqref{eq:reduced} admits an
exact reduction on the Ott--Antonsen (OA) manifold \citep{ott2008,ott2009}, which we
use to (i) compute the exact onset of the averaged model, sharpening
Proposition~\ref{prop:kc}, and (ii) exhibit the normal-form structure that governs
the order of the transition of \S\ref{sub:order}. One scope remark is essential:
the object reduced here is the averaged, frequency-frustrated model, in which the
delay has already been converted to the phase shift $\omega_j\tau$
(Assumption~\ref{ass:slowphase}); the numerical study integrates the true
delay instead. The two coincide at leading order in the averaging parameter, but
their thresholds differ at $O(\gamma\tau)$ -- the comparison is made explicit
below and measured in \S\ref{sub:num-kc}.

\paragraph{OA manifold.}
Take $g(\omega)=\gamma/\{\pi[(\omega-\bar\omega)^2+\gamma^2]\}$, the density used in the
numerics, and pass to the thermodynamic limit $N\to\infty$ with phase density
$f(\Phi,\omega,t)$. The receiver form of \eqref{eq:reduced},
$\dot\Phi_i=\omega_i^{\mathrm{eff}}+K\,\Im[z\,e^{-\ii\Phi_i}]$ with the twisted order
parameter $z=\tfrac1N\sum_j e^{\ii(\Phi_j-\alpha_j)}$ of \eqref{eq:twist}, is of
mean-field type, so the OA ansatz
$f=\tfrac{g}{2\pi}\{1+\sum_{n\ge1}[a(\omega,t)^n e^{\ii n\Phi}+\mathrm{c.c.}]\}$, with
$a$ analytic and bounded in the lower half-plane $\{\Im\omega<0\}$, is invariant under
the dynamics and globally attracting for $\gamma>0$ \citep{ott2009}. Under this
ansatz $\int f e^{\ii\Phi}\dd\Phi= a^{*}(\omega,t)$, and substitution
reduces the continuity equation to the single scalar equation
\begin{equation}\label{eq:oaeq}
  \partial_t a=-\ii\omega a-\tfrac{K}{2}\bigl(z\,a^2-z^{*}\bigr),
  \qquad
  z=-\ii\!\int e^{-\ii\omega'\tau}\,g(\omega')\,a^{*}(\omega',t)\,\dd\omega' ,
\end{equation}
the kernel $e^{-\ii\omega'\tau}$ carrying the frequency--frustration correlation
$\alpha_j=\omega_j\tau+\pi/2$ and $-\ii=e^{-\ii\pi/2}$ the velocity-coupling shift.
(Setting $\tau=0$ and $z\mapsto$ the plain order parameter recovers the textbook OA
equation and threshold $K=2\gamma$, a check of the sign conventions.)

\paragraph{Onset of the averaged model.}
Seek a uniformly rotating state $a(\omega,t)=A(\omega)e^{-\ii\Omega t}$. Since $z$
collects $a^{*}$, it rotates oppositely, $z=\zeta e^{+\ii\Omega t}$ -- as it
must, being the collective phase factor of a state locked at frequency $\Omega$.
Linearizing \eqref{eq:oaeq} about $a\equiv0$ gives
$A(\omega)=-\ii(K\zeta^{*}/2)/(\omega-\Omega)$ and the self-consistency
$1=-\tfrac{K}{2}\int e^{-\ii\omega'\tau}g(\omega')/(\omega'-\Omega-\ii0)\,\dd\omega'$.
The contour closes in $\{\Im\omega<0\}$, where $e^{-\ii\omega'\tau}$ decays for
$\tau>0$; only the Lorentzian pole $\omega'=\bar\omega-\ii\gamma$ is enclosed -- the
marginal pole $\Omega+\ii0$ lies in the upper half-plane -- giving
$\bar\omega-\ii\gamma-\Omega=-(K/2)\,e^{-\gamma\tau}e^{-\ii\bar\omega\tau}$. Its real and
imaginary parts fix the collective frequency and the threshold:
\begin{equation}\label{eq:oaonset}
  \Kc^{\mathrm{avg}}(\tau)=\frac{2\gamma}{a_1(\tau)}\,e^{\gamma\tau},
  \qquad
  \Omega=\bar\omega-\gamma\cot(\bar\omega\tau),
  \qquad a_1(\tau)=-\sin(\bar\omega\tau)>0 ,
\end{equation}
valid on the attractive branch $a_1(\tau)>0$, where the expression is positive.
The correction factor $e^{\gamma\tau}\to1$ as $\gamma\tau\to0$, so throughout the
attractive window \eqref{eq:oaonset} reduces to the common-frustration value
\eqref{eq:kc-general} at leading order in $\gamma\tau$ (both expressions are
defined only where $a_1(\tau)>0$; neither applies as $\tau\to0^+$, where the
coupling is repulsive). At finite $\gamma\tau$ the averaged model's threshold is
strictly larger, by $e^{\gamma\tau}$: in that model, spread and delay combine
multiplicatively because each oscillator contributes to the mean field through
the rotation $e^{-\ii\omega\tau}$ evaluated at its natural rate. The true
delay of the unreduced system retards states, not natural rates -- a
locked oscillator's retarded phase is $\Omega\tau$, common to the cluster -- so
the $e^{\gamma\tau}$ inflation is a property of the averaged model, and the
unreduced dynamics should track \eqref{eq:kc-general} more closely as the locked
state develops. The fine onset sweep of \S\ref{sub:num-robust} measures exactly
this: the full-model forward onset at $\gamma\tau\approx0.24$ lies at the
common-frustration value, below $\Kc^{\mathrm{avg}}$ -- the two formulas are not
competing predictions for one system but exact and leading-order statements
about two members of the model hierarchy, and the sweep locates the physical
system between them, at the leading-order value.

\paragraph{Order of the transition.}
The obstruction that \eqref{eq:selfcons} met is now explicit: $z$ pairs the delay
kernel $e^{-\ii\omega\tau}$ (analytic and decaying in $\{\Im\omega<0\}$) with $a^{*}$
(analytic in $\{\Im\omega>0\}$), so $z$ is not a pole value of $a$ and
\eqref{eq:oaeq} does not collapse to a finite-dimensional system -- the hallmark of a
frequency-dependent coupling phase \citep{montbrio2011}. The near-onset dynamics,
however, do close. Expanding $a=a^{(1)}+a^{(3)}+\cdots$ in powers of the amplitude,
each $a^{(k)}$ is rational in $(\omega-\Omega)$ with poles only at the marginal point,
so the moments $\int e^{-\ii\omega\tau}g\,a^{(k)*}\dd\omega$ still close on the single
Lorentzian residue $\bar\omega-\ii\gamma$, and the twisted order parameter obeys a
Stuart--Landau normal form
\begin{equation}\label{eq:landau}
  \dot\zeta=\lambda(K)\,\zeta-b\,|\zeta|^{2}\zeta,
  \qquad \operatorname{Re}\lambda(K)=\gamma\,\frac{K-\Kc}{\Kc}+O\!\bigl((K-\Kc)^2\bigr),
\end{equation}
whose onset is continuous (supercritical) for $\operatorname{Re}b>0$ and first-order
(subcritical, hysteretic) for $\operatorname{Re}b<0$. On the OA manifold the Landau
coefficient inherits two contributions: the standard stabilizing term, set by the
mode curvature of $g$, and a contribution generated by the frequency--frustration
correlation which, by parity in $\tau$, enters at order $(\gamma\tau)^2$. We
therefore conjecture the form
\begin{equation}\label{eq:landau-b}
  \operatorname{Re}b=b_\star\bigl[\,1-c\,(\gamma\tau)^{2}+O((\gamma\tau)^4)\,\bigr],
  \qquad b_\star>0,
\end{equation}
with $b_\star$ the frustration-free coefficient; whether the correlation term is
destabilizing ($c>0$), and the value of the crossover
$\gamma\tau_\ast=c^{-1/2}$ it would imply, require the cubic closure on the OA
manifold, which we have not carried out. The computation is well-posed -- each
order of the amplitude expansion closes on the Lorentzian residue as above -- and
is the single most valuable open calculation left by this paper. What is
established is therefore: (i) the exact onset \eqref{eq:oaonset} of the averaged
model; (ii) the Stuart--Landau structure \eqref{eq:landau}, with the sign of
$\operatorname{Re}b$ deciding the order of the transition; and (iii) numerically,
at finite $N$, a hysteresis loop opening with $\gamma\tau$
(Table~\ref{tab:e7}) -- consistent with $\operatorname{Re}b$ changing sign, with
the protocol and tail-sensitivity caveats of \S\ref{sub:num-order} and
\S\ref{sub:num-robust}. The first-order statement of \S\ref{sub:order} is
numerical; its analytical closure is open.

\section{Microscopic origin of the delay: eliminating the grid node}\label{app:grid}
The delay $\tau$ in \eqref{eq:throttle} was introduced phenomenologically. Here we
give it a microscopic origin and, in doing so, sharpen the danger condition. The
throttle does not act on the instantaneous aggregate $\Ptot$ but on a sensed and
delivered signal $y(t)$ at the shared node -- bus voltage, or a power-meter reading --
whose dynamics are those of a damped second-order element driven by the load,
\begin{equation}\label{eq:gridnode}
  \tau_g^2\,\ddot y + 2\zeta\,\tau_g\,\dot y + y \;=\; \Ptot(t)-\Pcap,
\end{equation}
with node/sensing time $\tau_g$ and damping $\zeta$ (the swing-equation structure of
\S\ref{sub:inertia} at the mediating node \citep{filatrella2008,dorfler2013}).
Replacing $h(\Ptot(t-\tau)-\Pcap)$ by $h(y(t))$ in \eqref{eq:throttle} and repeating
the reduction, the only change is that the aggregate fundamental is filtered by the
transfer function $G(\ii\omega)=[\,1-(\tau_g\omega)^2+2\ii\zeta\tau_g\omega\,]^{-1}$.
At the collective frequency $\bar\omega$, write $u=\bar\omega\tau_g$; then $G$ has gain
$|G|=[(1-u^2)^2+(2\zeta u)^2]^{-1/2}$ and phase lag
\begin{equation}\label{eq:gridlag}
  \delta(\bar\omega)=\operatorname{atan2}\!\bigl(2\zeta u,\;1-u^2\bigr)\in(0,\pi).
\end{equation}
The reduction \eqref{eq:reduced}--\eqref{eq:Gamma-sign} carries over verbatim with
$\bar\omega\tau\mapsto\delta(\bar\omega)$ and $h'\mapsto h'|G(\bar\omega)|$, so
\begin{equation}\label{eq:a1grid}
  \alpha=\delta(\bar\omega)+\tfrac{\pi}{2},\qquad a_1=-\sin\delta(\bar\omega),\qquad
  K\propto |G(\bar\omega)|.
\end{equation}
Four consequences follow.
\begin{enumerate}[leftmargin=1.6em,itemsep=2pt]
  \item \textbf{Origin of $\tau$.} The ``control delay'' is the phase lag of the
  node-plus-sensing transfer function at the iteration frequency,
  $\bar\omega\tau_{\mathrm{eff}}=\delta(\bar\omega)$; no literal transport delay is
  required.
  \item \textbf{The analog node alone is stabilizing.} Since $\delta\in(0,\pi)$,
  $\sin\delta>0$ and $a_1<0$ always: a purely second-order grid/sensing
  response cannot by itself produce attraction. This strengthens
  Corollary~\ref{cl:repulsion}.
  \item \textbf{Attraction needs total lag past $\pi$.} An additional discrete
  control or averaging delay $\tau_c$ (sample-and-hold, the RAPL window) or a cascaded
  slow loop adds phase: $\alpha=\delta(\bar\omega)+\bar\omega\tau_c+\tfrac{\pi}{2}$ and
  $a_1=-\sin(\delta+\bar\omega\tau_c)$, so the fleet is attractive only when
  $\delta(\bar\omega)+\bar\omega\tau_c>\pi$. This is the precise meaning of ``danger
  requires lagging or thermal control'' (\S\ref{sub:order}).
  \item \textbf{Resonance.} $|G|$ peaks near $u=1$ at $\approx 1/(2\zeta)$, so when the
  iteration rate meets the node mode, $\bar\omega\approx1/\tau_g$, the coupling is
  strongest; there $\delta=\pi/2$ (maximal repulsion), so resonance with fast
  sensing is maximally stabilizing, while resonance compounded with extra delay is
  maximally dangerous -- the quantitative form of the resonance remark in
  \S\ref{sec:discussion}.
\end{enumerate}
\paragraph{Channel typology: lag, dead time, averaging.}
The elements of a real capping loop accumulate phase at $\bar\omega$ in three
distinct ways, and only their total decides the sign:
\begin{itemize}[leftmargin=1.4em,itemsep=2pt]
  \item a \textbf{first-order lag} $1/(1+\ii\omega\tau_1)$ contributes
  $\arctan(\omega\tau_1)<\pi/2$ at gain $[1+(\omega\tau_1)^2]^{-1/2}$: a single
  sensing or thermal stage can never invert the sign, however slow (the
  second-order statement is item 2 above); a \textbf{cascade} of $k$ lag stages
  contributes up to $k\pi/2$, so three or more thermal stages in series (die,
  cold plate, coolant loop, \dots) can cross $\pi$ -- at the price of a gain that
  rolls off stage by stage, weakening the coupling the channel carries;
  \item \textbf{pure dead time} $e^{-\ii\omega\tau_c}$ (polling and enforcement
  intervals, sample-and-hold, computation and actuation latency) contributes
  $\omega\tau_c$ without bound at unit gain -- the dangerous element type;
  \item a \textbf{length-$W$ moving average} has transfer
  $e^{-\ii\omega W/2}\operatorname{sinc}(\omega W/2)$: the phase of a dead time
  $W/2$, but with a gain that vanishes exactly at $W=T$ -- a pure averaging
  window is self-limiting at the fundamental (its phase reaches $\pi$ only where
  its gain has collapsed), and endangers the loop only in combination with
  genuine dead time.
\end{itemize}
The operational meaning of the single $\tau$ in \eqref{eq:throttle} is therefore
the total loop phase at the iteration frequency, divided by $\bar\omega$,
with the gain factors absorbed into $h'$; the sign law
$a_1=-\sin(\text{total phase})$ is the invariant statement, and the main-text
uses of ``thermal channel'' and ``averaging window'' as delay sources
(\S\ref{sub:reduction}, \S\ref{sec:discussion}) are to be read through this
typology.

Finally, eliminating $y$ endows the coupling with a frequency-dependent gain
and phase (a memory kernel), not the load phases with a $\ddot\Phi$ inertial term;
the bifurcation therefore need not become first-order, consistent with the continuous
transition observed for narrow fleets in \S\ref{sub:num-order}. Settling whether a higher-order node
elimination can induce an effective load inertia, and hence a subcritical branch,
remains open.

%==============================================================================
\bibliographystyle{plainnat}
\bibliography{refs}

\end{document}